\renewcommand{\ALG@name}{Protocol}
\definecolor{beamer@blendedblue}{rgb}{0.2,0.2,0.7}
\definecolor{googleblue}{HTML}{4285F4}
\definecolor{googlered}{HTML}{DB4437}
\definecolor{googleyellow}{HTML}{F4B400}
\definecolor{googlegreen}{HTML}{0F9D58}
\definecolor{klevinblue}{HTML}{002FA7}
\definecolor{tiffanyblue}{HTML}{0ABAB5}
\newtheorem{definition}{Definition}
\newtheorem{proposition}[definition]{Proposition}
\newtheorem{lemma}[definition]{Lemma}
\newtheorem{theorem}[definition]{Theorem}
\mathchardef\ordinarycolon\mathcode`\:
\def\vcentcolon{\mathrel{\mathop\ordinarycolon}}
\DeclareFontFamily{U}{mathx}{\hyphenchar\font45}
\DeclareFontShape{U}{mathx}{m}{n}{<-> mathx10}{}
\DeclareSymbolFont{mathx}{U}{mathx}{m}{n}
\DeclareMathAccent{\widebar}{0}{mathx}{"73}
\newcommand{\wt}[1]{\widetilde{#1}}
\newcommand{\wh}[1]{\widehat{#1}}
\newcommand{\ketbra}[2]{\vert{#1}\rangle\!\langle{#2}\vert}
\newcommand{\linear}[1]{\mathscr{L}(#1)}
\newcommand{\opn}[1]{\operatorname{#1}}
\DeclareMathOperator{\tr}{Tr}  
\newcommand{\ox}{\otimes}
\newcommand{\norm}[2]{\ensuremath{\left\lVert#1\right\rVert_{#2}}}%
\newsavebox{\@brx}
\newcommand{\llangle}[1][]{\savebox{\@brx}{\(\m@th{#1\langle}\)}%
  \mathopen{\copy\@brx\kern-0.5\wd\@brx\usebox{\@brx}}}
\newcommand{\rrangle}[1][]{\savebox{\@brx}{\(\m@th{#1\rangle}\)}%
  \mathclose{\copy\@brx\kern-0.5\wd\@brx\usebox{\@brx}}}
\newcommand*{\cC}{\mathcal{C}}
\newcommand*{\cD}{\mathcal{D}}
\newcommand*{\cF}{\mathcal{F}}
\newcommand*{\cH}{\mathcal{H}}
\newcommand*{\cJ}{\mathcal{J}}
\newcommand*{\cL}{\mathcal{L}}
\newcommand*{\cM}{\mathcal{M}}
\newcommand*{\cN}{\mathcal{N}}
\newcommand*{\cO}{\mathcal{O}}
\newcommand*{\cS}{\mathcal{S}}
\newcommand*{\cU}{\mathcal{U}}
\newcommand*{\cW}{\mathcal{W}}
\newcommand{\bE}{\mathbb{E}}
\newcommand{\bZ}{\mathbb{Z}}
\newlist{todolist}{itemize}{2}
\setlist[todolist]{label=$\square$}
\newcommand{\appendixtitle}[1]{\gdef\@title{#1}}
\newcommand{\appendixauthor}[1]{\gdef\@author{#1}}
\newcommand{\appendixaffiliation}[1]{\gdef\@affiliation{#1}}
\newcommand{\appendixdate}[1]{\gdef\@date{#1}}
\newcommand{\appendixmaketitle}{%
  \begin{center}%
    {\Large \@title \par}%
    \vspace{10pt}
    {\normalsize
      \lineskip .5em%
      \begin{tabular}[t]{c}%
        \@author
      \end{tabular}\par}%
    \vspace{5pt}
    {\itshape \@affiliation \par}%
    {\normalsize \@date}%
  \end{center}%
  \par
}
\definecolor{wildstrawberry}{rgb}{1.0, 0.26, 0.64}
\definecolor{googleblue}{HTML}{4285F4}
\definecolor{googlered}{HTML}{DB4437}
\definecolor{googleyellow}{HTML}{F4B400}
\definecolor{googlegreen}{HTML}{0F9D58}
\definecolor{klevinblue}{HTML}{002FA7}
\definecolor{tiffanyblue}{HTML}{0ABAB5}
\newcommand{\STAB}{{{\operatorname{STAB}}}}
\newcommand{\bu}{{{\textbf{u}}}}
\begin{document}

\title{{Quantum Fidelity Estimation in the Resource Theory of Nonstabilizerness}}

\author{Zhiping Liu}

\affiliation{Thrust of Artificial Intelligence, Information Hub,
The Hong Kong University of Science and Technology (Guangzhou), Guangdong 511453, China}
\affiliation{National Laboratory of Solid State Microstructures, School of Physics and Collaborative Innovation Center of Advanced Microstructures, Nanjing University, Nanjing 210093, China}
\author{Kun Wang}
\email{nju.wangkun@gmail.com}
\affiliation{Thrust of Artificial Intelligence, Information Hub,
The Hong Kong University of Science and Technology (Guangzhou), Guangdong 511453, China}

\author{Xin Wang}
\email{felixxinwang@hkust-gz.edu.cn}
\affiliation{Thrust of Artificial Intelligence, Information Hub,
The Hong Kong University of Science and Technology (Guangzhou), Guangdong 511453, China}

\fontfamily{lmr}\selectfont

\begin{abstract}
Quantum fidelity estimation is essential for benchmarking quantum states and processes on noisy quantum devices. While stabilizer operations form the foundation of fault-tolerant quantum computing, non-stabilizer resources further enable universal quantum computation through state injection. In this work, we propose several efficient fidelity estimation protocols for both quantum states and channels within the resource theory of nonstabilizerness, focusing on qudit systems with odd prime dimensions. Our protocols require measuring only a constant number of phase-space point operator expectation values, with operators selected randomly according to an importance weighting scheme tailored to the target state. Notably, we demonstrate that mathematically defined nonstabilizerness measures—such as Wigner rank and mana—quantify the sample complexity of the proposed protocols, thereby endowing them with a clear operational interpretation in the fidelity estimation task. This connection reveals a fundamental trade-off: while fidelity estimation for general quantum states and channels requires resources that scale exponentially with their nonstabilizerness, the task remains tractable for states and channels that admit efficient classical simulation.
\end{abstract}

\maketitle
\tableofcontents

\setlength{\parskip}{\baselineskip}


\section{Introduction}\label{sec:introduction}           
In the realm of quantum information processing, delicate quantum systems offer advantages over classical counterparts~\cite{de2021materials, arute2019quantum, zhong2020quantum, wu2021strong}.
However, noise from imperfect quantum hardware and operations challenges the accurate preparation, manipulation, and verification of quantum states~\cite{clerk2010introduction, wang2023mitigating,sun2024sudden}. To ensure reliable performance, it is essential to characterize experimentally prepared states and estimate their fidelity for the target state. Conventional quantum state tomography~\cite{Renes_2004,Lvovsky_2009, cramer2010efficient} addressed this challenge but suffers from the expensive resource growing exponentially with the size of the system, rendering it infeasible for large-scale quantum systems even with advancements such as compressed sensing techniques~\cite {Gross_2010, kalev2015quantum, riofrio2017experimental}. To overcome this limitation, direct fidelity estimation (DFE)~\cite{Flammia-2011-Phys.Rev.Lett.,Silva-2011-Phys.Rev.Lett.} and other related estimation methods~\cite{Elben-2020-Phys.Rev.Lett.,Zhu-2022-Nat.Commun.,Anshu-2022-STOC2022,Zheng-2024-NpjQuantumInf.} have emerged as promising alternatives. In particular, DFE entails the estimation of the fidelity between two quantum states without the need for intermediate state tomography, offering a more efficient and practical means of assessing the quality of quantum states and operations. Moreover, recent advances~\cite {Zhang_2021,qin2024experimental} have incorporated machine learning techniques to further enhance the efficiency of DFE. 

The efficiency of DFE protocols is typically measured by their sample complexity, i.e., the number of consumed quantum states or the number of calls to the unknown quantum channel required for accurate estimation. However, one might ask whether there exists another complementary perspective for assessing the feasibility of a DFE protocol. Since DFE protocols aim to extract inherently quantum information from the target states, it is natural to expect that their implementation efficiency is closely related to the underlying quantum resource properties of these states. Quantum resource theories~\cite{chitambar2019quantum} provide a powerful framework to rigorously characterize and quantify intrinsic quantum resources for various information-processing tasks~\cite{takagi2019general,takagi2020application,liu2024quantum, zhu2024limitations}, offering a promising lens to explore this potential connection.

In particular, within the context of fault-tolerant quantum computation (FTQC)~\cite{shor1996fault,gottesman1998theory, campbell2017roads}, the nonstabilizerness has emerged as a key resource~\cite{Campbell_2017}. While stabilizer operations offer a fault-tolerant approach to quantum computation, the Gottesman-Knill Theorem reveals that stabilizer circuits, composed of Clifford gates, can be efficiently simulated classically~\cite{gottesman2009introduction}. To achieve universal quantum computation and unlock its potential quantum advantage, non-stabilizer operations containing nonstabilizerness resources are required. Given this fundamental role, it is natural to ask: \emph{Does the nonstabilizerness of quantum states and operations also govern the feasibility of direct fidelity estimation?} Addressing this question has the potential to significantly enhance both practical quantum certification methods and theoretical insights into quantum resource theories. A recent work~\cite{Leone_2023} has made valuable explorations on this avenue in the multi-qubit systems based on a popular magic measure, namely stabilizer R\'enyi entropy~\cite{Leone2022,haug2023stabilizer, zhu2024amortized}.

Meanwhile, high-dimensional quantum systems, namely qudits, are emerging as powerful platforms for quantum computation and communication, with the potential to surpass the capabilities of traditional qubit-based technologies~\cite{Wang_2020, Ringbauer_2022, de2025unconditional}. By leveraging naturally available multilevel quantum states, qudits offer advantages in information density, noise resilience, and algorithmic efficiency~\cite{ chi2022programmable, kiktenko2025colloquium,meth2025simulating}. For instance, high-dimensional entangled states such as Greenberger-Horne-Zeilinger (GHZ) states and cluster states demonstrate greater robustness against noise compared to their qubit analogs~\cite{reimer2019high}. These benefits have driven growing interest across a range of qudit-based hardware platforms~\cite{karacsony2024efficient,kim2024qudit, nguyen2024empowering}, underscoring the importance of extending fidelity estimation protocols and their resource-theoretic analysis to the qudit regime. 
                  
In the qudit systems with odd prime dimensions, the negativity of the discrete Wigner function quantifies nonstabilizerness resources in states, unitaries, and channels~\cite{Gross_2006, Veitch-2014-NewJ.Phys., Wang-2020-Phys.Rev.Lett.a, WWS19}. Building on this foundation, we introduce a computationally efficient nonstabilizerness measure called the Wigner rank and use it to quantify the complexity of direct fidelity estimation for both quantum states and channels in odd prime dimensions. Furthermore, we reveal that implementing a DFE on general qudit states requires resources that scale exponentially with the nonstabilizerness of the target, as captured by both the mana and the Wigner rank. Notably, the DFE task is feasible for classical-efficient simulable states and quantum evolutions. 
Leveraging the Wigner rank, we establish a tighter upper bound on the sample complexity of DFE than previous results~\cite{Flammia-2011-Phys.Rev.Lett.}, establishing a direct connection between the hardness of fidelity estimation and the underlying nonstabilizerness of the target. In doing so, we extend the findings of prior work~\cite{Leone_2023} from multi-qubit systems to high-dimensional qudit systems, while also offering detailed implementations of the corresponding protocols for these systems.

The rest of this paper is organized as follows. In Sec.~\ref {sec:preliminaries}, we provide key preliminaries to establish the necessary background for introducing the QRT of magic states and channels in systems with odd prime dimensions, including stabilizer formalism~\cite{gottesman1997stabilizer}, the discrete Wigner function~\cite{Gross_2006, Gross2007}, and a magic measure for quantum states, mana~\cite{Veitch-2014-NewJ.Phys.}, and its channel version~\cite{WWS19}. In Sec.~\ref {sec:wignerrank}, we introduce a new magic measure, the Wigner rank, for quantum states and extend it to quantum unitary channels. We prove several desirable properties of the Wigner rank, including faithfulness, additivity, and being an upper bound of mana. We also establish key properties of the Wigner rank for quantum channels, such as faithfulness, additivity for tensor products of channels, and subadditivity for serial compositions of channels. In Sec .~\ref {sec:fidelity-estimation-state-l2} we present the fidelity estimation protocols of quantum states using the Wigner rank. In Sec .~\ref {sec:fidelity-estimation-state-l1}, we introduce the fidelity estimation protocols of quantum states via mana, along with an improved protocol for well-conditioned states, specifically pure stabilizer states. In Sec.~\ref{fidelity-estimation-Chan-norm2} and Sec.~\ref{fidelity-estimation-chan-norm1}, we extend these protocols to the quantum unitary channels. Finally, we conclude in Sec.~\ref{sec:con}.

\section{Preliminaries}\label{sec:preliminaries}
In this section, we briefly introduce several fundamental concepts in the quantum resource theory (QRT) of magic states and channels in systems with odd prime dimensions~\cite{Gross_2006, Veitch-2014-NewJ.Phys., WWS19, Wang-2020-Phys.Rev.Lett.a}.

\subsection{Stabilizer formalism}\label{sec:stabilizer-formalism}

In fault-tolerant quantum computation schemes, the restricted set of quantum operations is the stabilizer operations (SOs). Here, we provide an overview of the basic framework of the stabilizer formalism applicable to quantum systems with odd prime dimensions.

We denote a Hilbert space of dimension $d$ by $\cH_d$ and 
the standard computational basis by $\{\ket j\}_{j=0,\ldots,d-1}$. 
For an odd prime number $d$, we define the respective shift and boost  operators $X,Z\in\cL(\cH_d)$ as
\begin{align}
X\ket j &= \ket{j\oplus 1} , \\
Z\ket j &=\omega^j \ket j, 
\end{align}
where $\oplus$ denotes addition modulo $d$ and $\omega:=e^{2\pi i /d}$.
The Heisenberg--Weyl operators are defined as
\begin{align}
T_{\bu}= \tau^{-a_1a_2}Z^{a_1}X^{a_2},
\end{align}
where $\tau=e^{(d+1)\pi i/d}$ and $\bu=(a_1,a_2)\in \bZ_d\times \bZ_d$.

For a system with composite Hilbert space $\cH_{A}\ox\cH_B$, the Heisenberg--Weyl operators are
the tensor product of the subsystem Heisenberg--Weyl operators:
\begin{align}
T_{\bu_A\oplus \bu_B} = T_{\bu_A} \ox T_{\bu_B},
\end{align} 
where $\bu_A\oplus \bu_B$ is an element of $\bZ_{d_A}\times\bZ_{d_A}\times\bZ_{d_B}\times\bZ_{d_B}$.

The set $\cC_d$ of Clifford operators is defined to be the set of unitary operators that map Heisenberg--Weyl operators to Heisenberg--Weyl operators under conjugation up to phases:
\begin{align}
U\in \cC_d \text{ iff. } \forall \bu, \exists \, \theta,\bu', \text{ such that }
UT_\bu U^{\dagger} = e^{i\theta}T_{\bu'},
\end{align}
where ``iff.'' is short for if and only if.
The Clifford operators form the Clifford group.

The pure stabilizer states can be obtained by applying Clifford operators to the state $\ket 0$:
\begin{align}
\{S_j\}_j=
\{U\proj0U^\dagger: U\in \cC_d \}.
\end{align}
The set of stabilizer states is the convex hull of the set of pure stabilizer states:
\begin{align}
\STAB(\cH_d)=\left\{\rho\in\cS(\cH_d): \rho=\sum_j p_j S_j, \ \forall j \ p_j\ge 0, \ \sum_j p_j = 1\right\}.
\end{align}

SOs for $\rho \in \STAB(\cH_d)$ consists of the following operations~\cite{Veitch-2014-NewJ.Phys., Wang-2020-Phys.Rev.Lett.a}: 
\begin{enumerate}
    \item Clifford operations: $U \rho U^{\dagger}, U \in \mathcal{C}_d$;
    \item Tensoring stabilizer states: $\rho \otimes \rho_s$, where $\rho_s \in \STAB(\cH_d)$, is a stabilizer state;
    \item Partial trace;
    \item Measurements in the computational basis,i.e., on the $i$-th qubit: $(\mathbb{I} \otimes \ketbra{i}{i} \otimes \mathbb{I})\rho (\mathbb{I} \otimes \ketbra{i}{i} \otimes \mathbb{I})$;
    \item Post-processing conditional on the above measurement results.
    
\end{enumerate}

\subsection{Discrete Wigner function}\label{sec:Wigner-function}

The discrete Wigner function \cite{Wootters1987,Gross_2006,Gross2007} has been employed to demonstrate the existence of bound magic states \cite{Veitch_2012}, which are states from which it is impossible to distill magic at a strictly positive rate. 
For an overview of discrete Wigner functions, we refer to \cite{Veitch_2012,Veitch2014,Ferrie2011} for more details.
See also \cite{Ferrie2011} for quasi-probability representations of quantum
theory with applications to quantum information science. 

For each point $\bu\in\bZ_d^n\times\bZ_d^n$ in the discrete phase space, there is a corresponding operator $A_\bu$, 
and the value of the \emph{discrete Wigner representation} of a Hermitian operator $X\in\linear{\cH_d^{\ox n}}$ at this point is given by
\begin{align}
W_X(\bu)=\frac{1}{d^n} \tr[A_\bu X],
\end{align}
where $\{A_\bu\}_\bu$ are the \emph{phase-space point operators} in $\linear{\cH_d^{\ox n}}$:
\begin{align}\label{eq:phase-space point operators}
A_\bu := T_\bu A_\mathbf{0} T_\bu^\dag, \quad \quad A_\mathbf{0} :=\frac{1}{d^n} \sum_\bu T_\bu.
\end{align}
These operators are Hermitian, so the discrete Wigner representation is real-valued. 
There are $d^{2n}$ such operators for $d^n$-dimensional Hilbert space, corresponding to the 
$d^{2n}$ points of discrete phase space.
Some well-known and useful properties of the set $\{A_\bu\}_\bu$ of phase-space point operators and the discrete Wigner representation are listed 
for reference in the following lamma. In Appendix~\ref{appx:lemma:properties}, we prove these properties for self-contained reference.
\begin{lemma}\label{lemma:properties}
The following properties hold regarding the discrete Wigner representation~\cite{WWS19}:
\begin{enumerate}
\item $A_\bu$ is Hermitian;
\item $\tr[A_\bu A_{\bu'}] = d^n\delta(\bu,\bu')$, meaning that $\{A_{\bm{u}}\}_{\bm{u}}$ 
            is an orthogonal basis of $\linear{\cH_d^{\ox n}}$;
\item $\tr [A_\bu] =1$;
\item $\norm{A_{\bm{u}}}{\infty}\leq1$, where $\norm{\cdot}{\infty}$ is the operator norm.
\item $X=\sum_\bu W_X(\bu) A_\bu$ for arbitrary Hermitian operator $X\in\linear{\cH_d^{\ox n}}$. 
\item Let $\rho$ be a quantum state. $\vert W_\rho(\bu)\vert\leq 1/d^n$ for arbitrary $\bu\in\bZ_d^n\times\bZ_d^n$.
\end{enumerate}
\end{lemma}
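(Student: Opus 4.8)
The plan is to reduce all six statements to a few structural identities for the Heisenberg--Weyl operators $\{T_\bu\}$ and then read off each property in turn. First I would record the algebra of the $T_\bu$: each $T_\bu$ is unitary with $T_\bu^\dagger = T_{-\bu}$ (using $\tau^2=\omega$ and the commutation relation $Z^aX^b=\omega^{-ab}X^bZ^a$); they obey the composition law $T_\bu T_{\bm v}=\tau^{\langle\bu,\bm v\rangle}T_{\bu+\bm v}$, where $\langle\bu,\bm v\rangle=a_1b_2-a_2b_1$ is the (antisymmetric, nondegenerate) symplectic form on $\bZ_d^n\times\bZ_d^n$; and they are traceless off the origin, $\tr[T_\bu]=d^n\delta(\bu,\mathbf 0)$. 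Two consequences drive the rest of the argument. Conjugation acts as a phase, $T_\bu T_{\bm v}T_\bu^\dagger=\omega^{\langle\bu,\bm v\rangle}T_{\bm v}$, which turns the definition in Eq.~\eqref{eq:phase-space point operators} into the symplectic-Fourier form $A_\bu=\tfrac{1}{d^n}\sum_{\bm v}\omega^{\langle\bu,\bm v\rangle}T_{\bm v}$. And because $d$ is an odd prime the form is nondegenerate, so for fixed $\bm w$ the map $\bm v\mapsto\langle\bm w,\bm v\rangle$ is either identically zero (when $\bm w=\mathbf 0$) or a surjective linear functional onto $\bZ_d$; summing a primitive $d$-th root of unity over its values gives the character-sum orthogonality $\sum_{\bm v}\omega^{\langle\bm w,\bm v\rangle}=d^{2n}\delta(\bm w,\mathbf 0)$, and likewise with $\tau$ in place of $\omega$ since $\tau$ is also a primitive $d$-th root ($\tau^d=1$ as $d+1$ is even).

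With these in hand the individual items fall out quickly. Property~1 follows because $A_\mathbf 0^\dagger=\tfrac1{d^n}\sum_\bu T_{-\bu}=A_\mathbf 0$ after reindexing the sum, whence $A_\bu=T_\bu A_\mathbf 0 T_\bu^\dagger$ is Hermitian. Property~3 follows from $\tr[A_\mathbf 0]=\tfrac1{d^n}\tr[T_\mathbf 0]=1$ together with unitary invariance of the trace. For Property~2 I would insert the Fourier form, use the composition law and $\tr[T_{\bm v}T_{\bm v'}]=d^n\delta(\bm v',-\bm v)$, and then apply the character-sum identity to obtain $\tr[A_\bu A_{\bu'}]=\tfrac1{d^n}\sum_{\bm v}\omega^{\langle\bu-\bu',\bm v\rangle}=d^n\delta(\bu,\bu')$; since there are $d^{2n}=\dim\linear{\cH_d^{\ox n}}$ mutually orthogonal $A_\bu$, they form an orthogonal basis. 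Property~5 is then immediate: expanding a Hermitian $X=\sum_\bu c_\bu A_\bu$ and pairing with $A_{\bu'}$ gives $c_{\bu'}=\tfrac1{d^n}\tr[A_{\bu'}X]=W_X(\bu')$.

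The crux is Property~4, the operator-norm bound, and it is where I expect the real work to lie. Unitary invariance gives $\norm{A_\bu}{\infty}=\norm{A_\mathbf 0}{\infty}$, so it suffices to control a single operator; the clean route is to show each $A_\bu$ is an involution. Squaring the Fourier form and collapsing the inner sum with the $\tau$-character identity gives $A_\bu^2=\tfrac1{d^{2n}}\sum_{\bm w}\omega^{\langle\bu,\bm w\rangle}T_{\bm w}\big(\sum_{\bm v}\tau^{\langle\bm v,\bm w\rangle}\big)=T_\mathbf 0=\1$, so $A_\bu$ is Hermitian with $A_\bu^2=\1$ and therefore has spectrum in $\{\pm1\}$, yielding $\norm{A_\bu}{\infty}=1\le1$. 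Equivalently, a direct matrix-element computation identifies $A_\mathbf 0$ with the parity operator $\ket j\mapsto\ket{-j}$, whose eigenvalues are $\pm1$. Finally Property~6 is a corollary: since $-\1\le A_\bu\le\1$ and $\rho\ge0$ with $\tr\rho=1$, we get $\lvert\tr[A_\bu\rho]\rvert\le\norm{A_\bu}{\infty}\tr\rho=1$, hence $\lvert W_\rho(\bu)\rvert=\tfrac1{d^n}\lvert\tr[A_\bu\rho]\rvert\le 1/d^n$. The only delicate points throughout are the phase bookkeeping in the composition law and the nondegeneracy of the symplectic form over $\bZ_d$ for odd prime $d$, on which both the orthogonality of Property~2 and the involution of Property~4 rest.
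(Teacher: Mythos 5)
Your proposal is correct, and it diverges from the paper's proof in two places worth noting. For Property~2 the paper avoids character sums entirely: it computes $\tr[A_{\bm{0}}^2]=d^n$ directly from $\tr[T_{\bu}^\dagger T_{\bm{v}}]=d^n\delta(\bu,\bm{v})$ and then reduces $\tr[A_\bu A_{\bu'}]$ to $\tr\bigl[T_{\bu'-\bu}^\dagger A_{\bm{0}} T_{\bu'-\bu} A_{\bm{0}}\bigr]$, asserting the off-diagonal case vanishes; your symplectic-Fourier route via $A_\bu=\tfrac{1}{d^n}\sum_{\bm v}\omega^{\langle\bu,\bm v\rangle}T_{\bm v}$ and the orthogonality of characters makes that vanishing explicit, at the cost of the phase bookkeeping you flag. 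The more substantive difference is Property~4. The paper bounds $\norm{A_{\bm 0}}{\infty}$ by the triangle inequality over the sum $\tfrac{1}{d^n}\sum_{\bu}T_{\bu}$, but that sum runs over all $d^{2n}$ phase-space points, so as written the bound it yields is $d^{2n}/d^n=d^n$, not $1$; the stated conclusion does not follow from that step alone. Your argument --- showing $A_\bu^2=\1$ (equivalently, identifying $A_{\bm 0}$ as the parity operator $\ket j\mapsto\ket{-j}$), so that the Hermitian $A_\bu$ has spectrum in $\{\pm1\}$ and $\norm{A_\bu}{\infty}=1$ exactly --- is the standard correct route and actually establishes more than the lemma asks for. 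The remaining items (1, 3, 5, 6) are proved essentially identically in both treatments.
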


The inner product of two Hermitian operators can be expressed within the 
discrete Wigner function representation in a compact form, as revealed in the following lemma, 
which is proved in Appendix~\ref{appx:lemma:inner-product}.

\begin{lemma}\label{lemma:inner-product}
Let $M$ and $N$ be two Hermitian operators in $\linear{\cH_d^{\ox n}}$. It holds that
\begin{align}
\tr[MN] = d^n\sum_{\bu} W_M(\bu) W_N(\bu).
\end{align}
Specially, let $\psi$ be a pure state in $\cH_d^{\ox n}$. It holds that
\begin{align}
    d^n\sum_{\bu\in\bZ_d\times \bZ_d}W^2_{\psi}(\bu) = 1.
\end{align}
\end{lemma}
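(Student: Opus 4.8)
The plan is to prove the first identity by expanding both operators in the orthogonal basis of phase-space point operators and then invoking the orthogonality relation already recorded in Lemma~\ref{lemma:properties}. Since $M$ and $N$ are Hermitian, Property~5 of Lemma~\ref{lemma:properties} permits the expansions $M = \sum_\bu W_M(\bu) A_\bu$ and $N = \sum_{\bu'} W_N(\bu') A_{\bu'}$. Substituting these into $\tr[MN]$ and using linearity of the trace produces the double sum $\sum_{\bu,\bu'} W_M(\bu) W_N(\bu') \tr[A_\bu A_{\bu'}]$. I would then apply Property~2, namely $\tr[A_\bu A_{\bu'}] = d^n \delta(\bu,\bu')$, which collapses the double sum onto its diagonal and immediately yields $\tr[MN] = d^n \sum_\bu W_M(\bu) W_N(\bu)$, the general identity.

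For the special case, I would simply specialize the identity just proved to $M = N = \psi$. Because $\psi$ is a pure state, it is a rank-one projector, so $\psi^2 = \psi$ and hence $\tr[\psi^2] = \tr[\psi] = 1$. Combining this with the general identity gives $1 = d^n \sum_\bu W_\psi^2(\bu)$, which is exactly the claimed normalization.

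Since the two facts I rely on, Property~2 and Property~5, are already established in Lemma~\ref{lemma:properties}, the argument is essentially a direct computation and I do not anticipate a genuine obstacle. The only point meriting care is the bookkeeping of the summation domain: the sum ranges over all phase-space points $\bu \in \bZ_d^n \times \bZ_d^n$ in the $n$-qudit setting, and I would ensure the stated index set is consistent with this throughout, rather than the reduced single-qudit index set.
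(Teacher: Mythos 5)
Your argument is exactly the paper's: expand both operators in the $\{A_\bu\}$ basis via Property~5, apply the orthogonality relation $\tr[A_\bu A_{\bu'}] = d^n\delta(\bu,\bu')$ to collapse the double sum, and specialize to $M=N=\psi$ with $\tr[\psi^2]=1$ for the normalization. Your remark about the summation domain is also well taken — the index set in the statement's second display should indeed read $\bZ_d^n\times\bZ_d^n$.
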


We say a state $\rho$ has positive representation if $W_\rho(\bm{u})\geq0$ for
arbitrary $\bu\in\bZ_d^n\times\bZ_d^n$ and negative representation otherwise. 
There is an important fact about the discrete Wigner representation. 

\begin{theorem}[Discrete Hudson's theorem~\cite{Gross_2006}]\label{thm:Hudson-theorem}
Let $\psi$ be a pure state in $\cH_d^{\ox n}$. It holds that
$\psi$ has positive representation if and only if it is a stabilizer state.
What's more, if $\psi$ is a pure stabilizer state in $\cH_d^{\ox n}$, 
the following statements hold:
\begin{itemize}
    \item $\{W_\psi(\bm{u})\}_{\bm{u}}$ is uniformly distributed, i.e.,
        \begin{align}
            W_\psi(\bm{u}) = 
                \begin{cases}
                    1/d^n,  & \bm{u}\in S_\psi, \\
                    0,      & \bm{u}\not\in S_\psi,
                \end{cases}
        \end{align}
        where $S_\psi$ is the set of stabilizers of $\psi$ (a little abuse definition) with $\vert S_\psi\vert=d^n$.
    \item $\{W_\psi(\bm{u})\}_{\bm{u}}$ is an valid probability distribution:
            \begin{align}\label{eq:probability-distribution}
                1 = \tr[\psi] = \sum_{\bu\in \bZ_d^n\times \bZ_d^n}W_{\psi}(\bu) = \sum_{\bu\in S_\psi}W_{\psi}(\bu).
            \end{align}
\end{itemize}
\end{theorem}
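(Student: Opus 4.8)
The plan is to split the biconditional into its two directions, handling the stabilizer-to-positive implication (together with the explicit value formula) by a single computation plus Clifford covariance, and isolating the positive-to-stabilizer implication into a clean uniformity step followed by a group-theoretic identification.

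First I would record the covariance of the phase-space point operators under the Clifford group. Since every $U\in\cC_d$ obeys $UT_\bw U^\dagger = e^{i\theta_\bw}T_{\Gamma_U(\bw)}$ for an affine symplectic map $\Gamma_U$ on $\bZ_d^n\times\bZ_d^n$, feeding this into $A_\bu = T_\bu A_{\mathbf 0}T_\bu^\dagger$ and $A_{\mathbf 0}=d^{-n}\sum_\bw T_\bw$ shows $U A_\bu U^\dagger = A_{\Gamma_U(\bu)}$ (for odd prime $d$ the associated Weil-representation phases are unambiguous, so the relabeling is exact). Consequently $W_{U\psi U^\dagger}(\bu)=W_\psi(\Gamma_U^{-1}(\bu))$, so Clifford conjugation merely permutes the support of the Wigner representation. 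Because every pure stabilizer state equals $U\proj{0}^{\ox n}U^\dagger$ for some $U\in\cC_d$, it then suffices to evaluate $W_{\proj 0}$ on a single qudit and tensorize. A direct calculation from $A_\bu = T_\bu A_{\mathbf 0}T_\bu^\dagger$ and $\langle m|A_{\mathbf 0}|m\rangle=\delta_{m,0}$ gives $W_{\proj 0}(a_1,a_2)=d^{-1}\delta_{a_2,0}$: the value $1/d$ on the $d$-point line $\{(a_1,0)\}$---exactly the stabilizer set of $\ket 0$---and $0$ elsewhere. Tensorizing and applying covariance carries this to every pure stabilizer state, giving the stated value $1/d^n$ on a size-$d^n$ support $S_\psi=\Gamma_U(S_0)$ and $\sum_\bu W_\psi(\bu)=d^n\cdot d^{-n}=1=\tr\psi$. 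This establishes the value formula, the probability-distribution identity, and the ``stabilizer $\Rightarrow$ positive'' direction simultaneously.

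For the converse I would first prove that \emph{any} positive-representation pure state already has a uniform Wigner function. Positivity together with property 6 of Lemma~\ref{lemma:properties} gives $0\le W_\psi(\bu)\le 1/d^n$, hence the pointwise inequality $W_\psi(\bu)^2\le d^{-n}W_\psi(\bu)$. Summing and using $\sum_\bu W_\psi(\bu)=\tr\psi=1$ (property 3) together with $\sum_\bu W_\psi(\bu)^2=d^{-n}\tr[\psi^2]=d^{-n}$ (Lemma~\ref{lemma:inner-product} at purity $\tr[\psi^2]=1$) shows both sides equal $d^{-n}$. Equality must therefore hold in every term, forcing $W_\psi(\bu)\in\{0,1/d^n\}$, and normalization pins the support to exactly $d^n$ points. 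So positivity alone makes the Wigner function of a pure state uniform on a $d^n$-element set $S$.

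It remains to identify such a state as a stabilizer state, and this is the step I expect to be the main obstacle. I would pass to the characteristic function $\chi_\psi(\bv)=\tr[T_\bv\psi]$, the symplectic Fourier transform of $W_\psi$, and study the set $M=\{\bv:T_\bv\ket\psi\propto\ket\psi\}=\{\bv:|\chi_\psi(\bv)|=1\}$. Using the composition rule $T_\bv T_\bw\propto T_{\bv+\bw}$ and the commutation relation $T_\bv T_\bw=\omega^{[\bv,\bw]}T_\bw T_\bv$ (with $[\cdot,\cdot]$ the symplectic form), one checks that $M$ is an \emph{isotropic subgroup} of $\bZ_d^n\times\bZ_d^n$ and that $\{e^{-i\arg\chi_\psi(\bv)}T_\bv\}_{\bv\in M}$ is a commuting family fixing $\ket\psi$; isotropy forces $|M|\le d^n$. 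The crux is upgrading this to $|M|=d^n$, i.e.\ that $M$ is a maximal (Lagrangian) subgroup, which is precisely where the hypothesis must be used beyond counting. I would obtain it by inserting the idempotency $\psi^2=\psi$ into the twisted-convolution identity for $\chi_\psi$ induced by the Heisenberg--Weyl multiplication rule; this forces the uniform support $S$ to be a coset of a maximal isotropic subgroup. Once $|M|=d^n$, the family $\{e^{-i\arg\chi_\psi(\bv)}T_\bv\}_{\bv\in M}$ is a full stabilizer group whose unique joint eigenvector is $\ket\psi$, so $\psi$ is a stabilizer state and the ``positive $\Rightarrow$ stabilizer'' direction is complete.
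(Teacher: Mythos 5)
The paper does not prove this theorem itself: it is imported by citation from Gross's work~\cite{Gross_2006}, so there is no in-paper argument to compare against. Judged on its own terms, your proposal is well organized and two of its three parts are complete and correct. The forward direction (stabilizer $\Rightarrow$ positive, with the explicit value formula) via Clifford covariance $UA_{\bu}U^\dagger = A_{\Gamma_U(\bu)}$, the single-qudit computation $W_{\proj{0}}(a_1,a_2)=d^{-1}\delta_{a_2,0}$, and tensorization is sound, and the normalization identity follows immediately. Your ``positivity $\Rightarrow$ uniformity'' step is a genuinely clean argument: combining $0\le W_\psi(\bu)\le 1/d^n$ with the purity identity $\sum_{\bu}W_\psi^2(\bu)=d^{-n}$ from Lemma~\ref{lemma:inner-product} forces $W_\psi(\bu)\in\{0,1/d^n\}$ on a support of exactly $d^n$ points, and it isolates precisely what positivity buys before any group theory enters.

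The gap is in the final step, and you have correctly located it yourself: the implication ``uniform Wigner function on a $d^n$-point set, for an idempotent rank-one $\psi$, forces the support to be a coset of a Lagrangian subgroup'' is asserted, not proven. Writing $\psi=d^{-n}\sum_{\bu\in S}A_{\bu}$ and imposing $\psi^2=\psi$ does yield a twisted-convolution (Moyal-product) constraint on the indicator of $S$, but extracting from it that $S-S$ is closed under addition and isotropic --- rather than merely satisfying a family of character-sum identities --- is the actual content of Gross's theorem and occupies the bulk of his proof (it requires controlling Gauss-type sums arising from the phases in $A_{\bu}A_{\bu'}$). Likewise, your claim that $|M|\le d^n$ for the set $M=\{\bv:|\chi_\psi(\bv)|=1\}$ can be upgraded to equality is exactly equivalent to the statement you are trying to prove, so as written the argument is circular at that point unless the convolution analysis is carried out in full. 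In short: your reduction of the theorem to the single lemma ``a positively represented pure state has Wigner support equal to a Lagrangian affine subspace'' is correct and efficient, but that lemma is the theorem's hard core and remains unproven in your proposal; citing~\cite{Gross_2006} for it, as the paper does for the whole statement, would be the honest way to close the argument.
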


\subsection{Mana}
Let $\cD(\cH_A)$ be the set of density operators acting on $\cH_A$ and denote the set of quantum states that have positive Wigner functions as $\cW_{+}$, i.e., 
\begin{equation}
    \cW_{+} := \{\rho: \rho \in \cD(\cH_d), \forall \ \mathbf{u}, W_{\rho}(\mathbf{u})\geq 0\}.
\end{equation}
A magic monotone called mana~\cite{Veitch2014} of a quantum state $\rho$ is defined as
\begin{equation}
    \cM(\rho) := \log\left(\sum_{\mathbf{u}}|W_{\rho}(\mathbf{u})|\right).
\end{equation}
We have $\cM(\rho)\geq 0$ and $\cM(\rho) = 0$ if and only if $\cM(\rho)\in \cW_{+}$.
As a generalization of the Gottesman-Knill theorem, a quantum circuit where the initial state and all the following quantum operations have positive discrete Wigner functions can be classically simulated~\cite{Mari2012, Veitch_2012}. Thus, for the QRT of magic states in odd prime dimensions, the free states can be chosen as $\cW_{+}$, and the maximal free operations are those that completely preserves the positivity of Wigner functions.

Let $\cN_{A\rightarrow B}$ be a quantum channel mapping system $A$ to $B$ which is a completely positive and trace-preserving (CPTP) map, we introduce the discrete Wigner function of a quantum channel as follows.

\begin{definition}[Discrete Wigner function of a quantum channel~\cite{WWS19}]
    The discrete Wigner function of a given quantum channel $\mathcal{N}_{A \rightarrow B}$ is defined as:
    \begin{align}
    \label{def:DWF_channel}
        \mathcal{W}_{\mathcal{N}}(\mathbf{v}|\mathbf{u}) &\coloneqq \frac{1}{d_B} \tr[((A_A^\mathbf{u})^T)\otimes A_B^\mathbf{v} \mathcal{J}_{AB}^\mathcal{N}] \\
        & = \frac{1}{d_B} \tr(A_B^\mathbf{v} \mathcal{N}(A_A^\mathbf{u})),
    \end{align}
    where phase-space point $\mathbf{u}$ lies in input system A, phase-space point $\mathbf{v}$ lies in output system B, $\mathcal{J}_{AB}^\mathcal{N} = \sum_{ij}\ketbra{i}{j}_A \otimes \mathcal{N}(\ketbra{i}{j}_{B})$ denotes the Choi-Jamio\l kowski matrix.
\end{definition}

The Choi-Jamio\l kowski matrix of a channel $\cN \in \cL(\cH_A^{\ox n}, \cH_{B}^{\ox n})$ has a discrete Wigner representation as follows:
\begin{equation}
    \cJ_{\cN} = \frac{1}{d_A}\sum_{\bm{u},\bm{v}} W_{\cN}(\bm{v}|\bm{u}) A_{\bm{u}}^T \otimes A_{\bm{v}}
\end{equation}

\begin{definition}[CPWP operation~\cite{WWS19}]
A Hermiticity-preserving linear map $\Pi$ is called completely positive Wigner preserving (CPWP) if, for any system $R$ with odd dimension, the following holds 
\begin{equation}
    \forall \rho_{RA}\in \cW_{+},  (\mathrm{id}_R\ox \Pi_{A\rightarrow B}) \rho_{RA} \in \cW_{+},
\end{equation}
\end{definition}
where $\cW_{+}$ denotes the free states of the QRT of magic states in odd prime dimensions.

\begin{definition}[Mana of a quantum channel~\cite{WWS19}]
The mana of a quantum channel $\cN_{A \rightarrow B}$ is defined as:
\begin{align}\label{def:Mana_channel}
    \mathcal{M}(\cN_{A \rightarrow B}) \coloneqq \log \max_\mathbf{u} \| \cN_{A \rightarrow B}(A_A^\mathbf{u})\|_{W,1} = \log \max_\mathbf{u} \sum_\mathbf{v} |W_{\cN}(\mathbf{v}|\mathbf{u})|,
\end{align}
where $\|V \|_{W,1} \coloneqq \sum_\mathbf{u} |W_V(\mathbf{u})| = \sum_\mathbf{u} 1/d|\tr[A_{\mathbf{u}}V]|$ is the Wigner trace norm of a Hermitian operator $V$.     
\end{definition}
It is shown that $\cM(\cN_{A\rightarrow B}) \geq 0$ and $\cM(\cN_{A\rightarrow B}) = 0$ if and only if $\cN_{A\rightarrow B}\in \text{CPWP}$. Also, it is proved that a quantum channel $\cN_{A\rightarrow B}$ is CPWP if and only if the discrete Wigner functions of $J_{\cN}$ are non-negative~\cite{WWS19}.

Note that
\begin{subequations}
    \begin{align}
        \sum_{\bm{v}}\cW_{\cN}(\bm{v}|\bm{u}) &= \sum_{\bm{v}}\frac{1}{d_B} \tr(A_B^\mathbf{v} \cN(A_A^\mathbf{u})) \\
        & = \tr[\sum_{\bm{v}}\frac{A_B^\mathbf{v}}{d_B}\cN(A_A^\mathbf{u})] \\
        & = \tr[\cN(A_A^\mathbf{u})] \\
        & = \tr[A_A^\mathbf{u}] \\
        & = 1.
    \end{align}
\end{subequations}

\section{Wigner rank}
\label{sec:wignerrank}

In this section, we study the \emph{Wigner rank}, a relatively unexplored magic measure for quantum states and quantum channels, inspired by the Pauli rank originally introduced for qubit systems in~\cite{PhysRevLett.123.170502}. The Wigner rank was first proposed in~\cite{bu2024extremality} for quantum states, where it was employed to establish a novel uncertainty principle. Building on this foundation, we extend the definition to quantum channels, conduct a detailed analysis of its mathematical properties, examine its connection to another widely used magic measure known as mana, and ultimately endow it with an operational interpretation through its role in determining the sample complexity of direct fidelity estimation. 


\subsection{Wigner rank of quantum states}

\begin{definition}[Wigner rank of a quantum state]
    Let $\psi$ be a pure state in $\cH_d^{\ox n}$.
We define the \emph{Wigner set} of $\psi$ to be the set of 
phase-space point operators with nonzero Wigner coefficients:
\begin{align}\label{eq:Wigner-non-set} 
    \cW_\psi := \left\{\bm{u}\in\bZ_d^n\times\bZ_d^n 
                        \;\middle\vert\; W_\psi(\bm{u}) \neq 0\right\}.
\end{align}
The \emph{Wigner rank} of $\psi$ 
is defined as the number of nonzero Wigner coefficients of $\psi$, which is the cardinality of the Wigner set:
\begin{align}\label{eq:Wigner-rank}
    \chi(\psi) := \left\vert\cW_\psi\right\vert.
\end{align}
Furthermore, we define the logarithmic Wigner rank as follows:
\begin{align}\label{eq:log-Wigner-rank}
    \chi_{\log}(\psi) := \log\chi(\psi) - \log d(\psi),
\end{align}
where $d(\psi)$ is the underlying dimension of $\psi$.
\end{definition}

Now we show some desirable properties of Wigner rank so that it is indeed an interesting nonstabilizerness measure.

\begin{proposition}[Faithfulness]
Let $\psi$ be a pure state in $\cH_d^{\ox n}$.
It holds that $0 \leq \chi_{\log}(\psi) \leq n\log d$. 
The equality holds if and only if $\ket{\psi}$ is a stabilizer state.
\end{proposition}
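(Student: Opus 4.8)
The statement bundles two inequalities, and the plan is to dispatch the bounds first and then characterise the saturation of the lower one, which is the genuine faithfulness claim ($\chi_{\log}(\psi)=0$ iff $\psi$ is free); the upper saturation $\chi_{\log}(\psi)=n\log d$ instead corresponds to a fully supported Wigner function and is unrelated to stabilizer states. Throughout I write $d(\psi)=d^n$ and use two scalar constraints on the Wigner coefficients of a pure state: the normalisation $\sum_{\bu}W_\psi(\bu)=\tr[\psi]=1$, which follows from properties 3 and 5 of Lemma~\ref{lemma:properties}, and the second-moment identity $\sum_{\bu}W_\psi^2(\bu)=1/d^n$ from Lemma~\ref{lemma:inner-product}. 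The upper bound is then a counting argument: there are exactly $d^{2n}$ phase-space points, so $\chi(\psi)=\vert\cW_\psi\vert\le d^{2n}$ and $\chi_{\log}(\psi)=\log\chi(\psi)-\log d^n\le n\log d$.

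For the lower bound I would apply Cauchy--Schwarz to the normalisation sum restricted to the support $\cW_\psi$:
\begin{align}
1=\Bigl(\sum_{\bu\in\cW_\psi}W_\psi(\bu)\Bigr)^2\le\chi(\psi)\sum_{\bu\in\cW_\psi}W_\psi^2(\bu)=\frac{\chi(\psi)}{d^n},
\end{align}
which gives $\chi(\psi)\ge d^n$ and hence $\chi_{\log}(\psi)\ge 0$. Alternatively one can bound $1/d^n=\sum_{\bu\in\cW_\psi}W_\psi^2(\bu)\le\chi(\psi)/d^{2n}$ using $\vert W_\psi(\bu)\vert\le 1/d^n$ from property 6, but the Cauchy--Schwarz route is preferable because its equality condition does the extra work below.

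It remains to show $\chi(\psi)=d^n$ iff $\psi$ is a stabilizer state. The forward direction is immediate from the discrete Hudson theorem (Theorem~\ref{thm:Hudson-theorem}): a pure stabilizer state has $\chi(\psi)=\vert S_\psi\vert=d^n$. For the converse, observe that $\chi(\psi)=d^n$ saturates the displayed Cauchy--Schwarz estimate; equality forces $W_\psi(\bu)$ to be constant on $\cW_\psi$, and since these $d^n$ equal values must sum to $1$, the constant is pinned to $+1/d^n>0$. Thus $\psi$ has a nonnegative Wigner representation, and Hudson's theorem identifies it as a stabilizer state. I expect this converse to be the main obstacle: the count $\chi(\psi)=d^n$ alone is not the hypothesis Hudson's theorem consumes---one must combine the Cauchy--Schwarz saturation with the trace normalisation to exclude negative coefficients and thereby establish positivity, which is exactly what the theorem converts into the stabilizer property.
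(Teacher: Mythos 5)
Your proof is correct, and it reaches the result by the same high-level route as the paper (a counting bound plus the discrete Hudson theorem), but it is substantially more complete. The paper's proof asserts that $0\le\chi_{\log}(\psi)\le n\log d$ ``follows directly from the definition'' and that Hudson's theorem finishes the job; in fact only the upper bound is definitional, since a priori nothing prevents a pure state from having fewer than $d^n$ nonzero Wigner coefficients. Your Cauchy--Schwarz step, combining $\sum_{\bm{u}}W_\psi(\bm{u})=1$ with $\sum_{\bm{u}}W_\psi^2(\bm{u})=1/d^n$, supplies the missing justification for $\chi(\psi)\ge d^n$. More importantly, you correctly identify that the converse direction of the equality claim ($\chi(\psi)=d^n\Rightarrow$ stabilizer) is not what Hudson's theorem directly consumes: the theorem needs nonnegativity of the Wigner function, not a count of its support. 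Your saturation argument---equality in Cauchy--Schwarz forces $W_\psi$ to be constant on $\cW_\psi$, and the trace normalisation pins that constant to $+1/d^n$---is exactly the bridge from the count to positivity, and it is absent from the paper's proof. Your reading of ``the equality'' as the lower saturation is also the right one, since the upper saturation corresponds to a fully supported Wigner function and has nothing to do with stabilizer states.
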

\begin{proof}
    $0 \leq \chi_{\log}(\psi) \leq n\log d$ follows directly from the definition.
    According to Theorem~\ref{thm:Hudson-theorem}, we arrive at the conclusion immediately. We also note that this faithfulness property was previously observed in~\cite{bu2024extremality}. 
\end{proof}

\begin{proposition}[Additivity]
Let $\ket{\psi},\ket{\phi}$ be two pure states in $\cH_d^{\ox n}$, it holds that
\begin{align}
    \chi_{\log}(\psi\otimes\phi) = \chi_{\log}(\psi) + \chi_{\log}(\phi).
\end{align}
\end{proposition}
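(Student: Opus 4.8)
The plan is to reduce the claimed logarithmic identity to a multiplicative statement about the unnormalized Wigner rank, namely $\chi(\psi\ox\phi)=\chi(\psi)\chi(\phi)$, combined with the trivial multiplicativity of the underlying dimension $d(\psi\ox\phi)=d(\psi)d(\phi)$. Granting these two facts, the result is immediate: by the definition of $\chi_{\log}$,
\begin{align}
\chi_{\log}(\psi\ox\phi)
= \log\!\big[\chi(\psi)\chi(\phi)\big] - \log\!\big[d(\psi)d(\phi)\big]
= \chi_{\log}(\psi) + \chi_{\log}(\phi).
\end{align}
So the entire content lies in the multiplicativity of the Wigner rank.

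The heart of the argument is the factorization of the discrete Wigner representation across tensor products. First I would establish that the phase-space point operators factorize, $A_{\bu_A\oplus\bu_B}=A_{\bu_A}\ox A_{\bu_B}$. This follows from the defining relation $A_\bu=T_\bu A_{\mathbf{0}}T_\bu^\dagger$, the tensor-product structure of the Heisenberg--Weyl operators $T_{\bu_A\oplus\bu_B}=T_{\bu_A}\ox T_{\bu_B}$, and the observation that $A_{\mathbf{0}}^{AB}=A_{\mathbf{0}}^A\ox A_{\mathbf{0}}^B$, which is immediate upon expanding $A_{\mathbf{0}}=d^{-n}\sum_\bu T_\bu$ and regrouping the double sum.

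Using this factorization, I would compute the Wigner function of the product state directly from $W_X(\bu)=d^{-n}\tr[A_\bu X]$. Since the trace splits over the two tensor factors, one finds
\begin{align}
W_{\psi\ox\phi}(\bu_A\oplus\bu_B)
= \frac{1}{d_A d_B}\tr\!\big[(A_{\bu_A}\ox A_{\bu_B})(\psi\ox\phi)\big]
= W_\psi(\bu_A)\,W_\phi(\bu_B).
\end{align}
Hence $W_{\psi\ox\phi}(\bu_A\oplus\bu_B)\neq 0$ precisely when both factors are nonzero, so the Wigner set of the product is the Cartesian product of the individual Wigner sets, $\cW_{\psi\ox\phi}=\cW_\psi\times\cW_\phi$. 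Counting cardinalities yields $\chi(\psi\ox\phi)=\chi(\psi)\chi(\phi)$, which closes the argument.

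I expect no genuine obstacle here: the proof is essentially bookkeeping once the factorization $A_{\bu_A\oplus\bu_B}=A_{\bu_A}\ox A_{\bu_B}$ is in place. The single substantive step is verifying that identity, and in particular that $A_{\mathbf{0}}$ factorizes; the only mild care needed is in the index labeling $\bu_A\oplus\bu_B$ as an element of $\bZ_{d_A}\times\bZ_{d_A}\times\bZ_{d_B}\times\bZ_{d_B}$ and ensuring the correspondence between the joint phase-space points and the pairs of subsystem points is a genuine bijection, so that no nonzero coefficients are double-counted or missed.
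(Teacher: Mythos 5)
Your proof is correct and is essentially the argument the paper has in mind: the paper merely asserts that additivity ``follows directly from the definition,'' and your factorization $A_{\bu_A\oplus\bu_B}=A_{\bu_A}\ox A_{\bu_B}$, hence $W_{\psi\ox\phi}(\bu_A\oplus\bu_B)=W_\psi(\bu_A)W_\phi(\bu_B)$ and $\cW_{\psi\ox\phi}=\cW_\psi\times\cW_\phi$, is exactly the omitted bookkeeping (and mirrors the computation the paper does spell out for the channel version of additivity). No gaps.
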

\begin{proof}
    This property also follows directly from the definition.
\end{proof}

\begin{proposition}[Upper bound of Mana]
Let $\psi$ be a pure state in $\cH_d^{\ox n}$, it holds that $\cM(\psi) \leq \chi_{\log}(\psi)$.
\end{proposition}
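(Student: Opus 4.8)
The plan is to strip off both logarithms and reduce the statement to an elementary $\ell_1$ bound on the vector of Wigner coefficients. Recalling that $\cM(\psi)=\log\bigl(\sum_{\bm u}\lvert W_\psi(\bm u)\rvert\bigr)$ and that $\chi_{\log}(\psi)=\log\chi(\psi)-\log d(\psi)$ with $d(\psi)=d^n$, monotonicity of the logarithm shows that the claim $\cM(\psi)\le\chi_{\log}(\psi)$ is equivalent to
\begin{align}
    \sum_{\bm u}\lvert W_\psi(\bm u)\rvert \;\le\; \frac{\chi(\psi)}{d^n}.
\end{align}

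First I would note that only the coefficients indexed by the Wigner set $\cW_\psi$ contribute to the left-hand side, so the sum contains exactly $\chi(\psi)=\lvert\cW_\psi\rvert$ nonzero terms. Next I would control each such term by the uniform pointwise estimate from item~6 of Lemma~\ref{lemma:properties}, namely $\lvert W_\psi(\bm u)\rvert\le 1/d^n$. Summing this bound over the $\chi(\psi)$ surviving indices immediately yields $\sum_{\bm u}\lvert W_\psi(\bm u)\rvert\le\chi(\psi)/d^n$, and re-applying the logarithm completes the argument.

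I do not anticipate a genuine obstacle: the whole proof hinges on pairing the support count built into the definition of $\chi$ with the pointwise ceiling $\lvert W_\psi(\bm u)\rvert\le 1/d^n$, and the only bookkeeping to watch is the consistent use of $d(\psi)=d^n$. For completeness I would remark that a sharper inequality, $\cM(\psi)\le\tfrac12\chi_{\log}(\psi)$, follows by instead invoking Cauchy--Schwarz on the support together with the normalization $d^n\sum_{\bm u}W_\psi^2(\bm u)=1$ from Lemma~\ref{lemma:inner-product}; the cruder pointwise bound, however, already establishes the stated claim.
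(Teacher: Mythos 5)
Your argument is correct and is essentially the paper's own proof: both bound each of the $\chi(\psi)$ nonzero terms in $\sum_{\bm u}\lvert W_\psi(\bm u)\rvert$ by the pointwise ceiling $1/d^n$ and take logarithms. Your closing remark that Cauchy--Schwarz together with $d^n\sum_{\bm u}W_\psi^2(\bm u)=1$ gives the strictly sharper bound $\cM(\psi)\le\tfrac12\chi_{\log}(\psi)$ is also valid and goes beyond what the paper states.
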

\begin{proof}
    Recall the definition of the mana of quantum states, we have:
    \begin{equation}
        \cM(\psi) = \log(\sum_{\bm{u}}|W_{\psi}(\bm{u})|) = \log(\sum_{\bm{u}}|\tr[A_{\bm{u}}\psi]|/d^n)
        \leq \log(\sum_{\bm{u} \in \cW_{\psi}}\frac{1}{d^n}) = \chi_{\log}(\psi)
    \end{equation}
    Thus, we conclude that logarithmic Wigner rank is the upper bound of mana for any given pure qudit state with odd prime dimension.
\end{proof}

However, the Wigner rank of mixed stabilizer states does not satisfy the faithfulness condition. A straightforward counterexample is the qutrit maximally mixed state $\mathbb{I}/3$ with Wigner rank being $1$. Thus, Wigner rank can only quantify the non-stabilizerness of pure qudit states, similar to the stabilizer rank~\cite{Bravyi_2016,Bravyi_2019} or stabilizer R\'enyi entropy defined in the multi-qubit scenario. The monotonicity of Wigner rank can be revealed only in subset of SOs transforming pure states into pure (sub-normalized) states, such as Clifford operations, tensoring in pure stabilizer states, where Wigner rank remains invariant under these two operations, and measurements in the computational basis.

A direct generalization of the Wigner rank from pure states to mixed states is given by:
\begin{equation}
    \label{eq:Gen_WR}
    \widetilde{\chi}_{\log}(\rho) = \inf_{\{\ket{\psi_i}, p_i\}} \sum_i p_i \chi_{\log}(\ketbra{\psi_i}{\psi_i}),
\end{equation}
where the infimum is taken over all pure-state decompositions of $\rho$, represented as $\rho = \sum_i p_i \ketbra{\psi_i}{\psi_i}$.

\subsection{Wigner rank of quantum channels}
\begin{definition}
    Let $\cU \in \cL(\cH_A^{\ox n}, \cH_{A'}^{\ox n})$ be a channel corresponding to some unitary evolution, namely $\cU(\rho) = U\rho U^{\dagger}$. 
We define the \emph{Wigner set} of $\cU$ to be the set of 
phase-space point operators with nonzero Wigner coefficients:
\begin{align}\label{eq:Wigner-non-set-unitary}
    \cW_{\cU} := \left\{\bm{v}\in\bZ_d^n\times\bZ_d^n , \bm{u}\in\bZ_d^n\times\bZ_d^n 
                        \;\middle\vert\; W_{\cU}(\bm{v}|\bm{u}) \neq 0\right\}.
\end{align}
The \emph{Wigner rank} of $\cU$ 
is defined as the number of nonzero Wigner coefficients of $\cU$:
\begin{align}\label{eq:Wigner-rank-unitary}
    \chi(\cU) := \left\vert\cW_{\cU}\right\vert.
\end{align}
The normalized logarithm version of $\chi(\psi)$ is defined as
\begin{align}\label{eq:log-Wigner-rank-channel}
    \chi_{\log}(\cU) := \log\chi(\cU) - \log d(\cU),
\end{align}
where $d(\cU)$ is the underlying dimension of $\cU$. For example, if $\cU \in \cL(\cH_A, \cH_{A'})$ with $\text{dim} (\cH_A) = \text{dim} (\cH_{A'}) = d$, $d(\cU) = d^2$.
\end{definition}

Now we show some desirable properties of the Wigner rank of quantum channels.

\begin{proposition}[Faithfulness]
    Let $\cU \in \cL(\cH_A^{\ox n}, \cH_{A'}^{\ox n})$ be a channel corresponding to some unitary evolution.
It holds that $\chi_{\log}(\cU) \geq 0$. 
The equality holds if and only if $\cU$ is a Clifford operation.
\end{proposition}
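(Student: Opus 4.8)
The plan is to analyze, for each fixed input label $\bm{u}$, the ``row'' $\{W_{\cU}(\bm{v}|\bm{u})\}_{\bm{v}}$ of the channel Wigner function and bound the number of its nonzero entries from below. The starting observation is that, by the definition of $\cW_{\cN}(\bm{v}|\bm{u})$ with $\cU(\rho)=U\rho U^\dagger$ and $d_B=d^n$, the quantity $W_{\cU}(\bm{v}|\bm{u}) = \frac{1}{d^n}\tr[A_{A'}^{\bm{v}}\, U A_A^{\bm{u}} U^\dagger]$ is precisely the discrete Wigner representation of the Hermitian operator $M_{\bm{u}} := U A_A^{\bm{u}} U^\dagger$ evaluated at $\bm{v}$. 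I would extract two scalar constraints on each row. First, trace preservation gives $\sum_{\bm{v}} W_{\cU}(\bm{v}|\bm{u}) = 1$, which is exactly the computation already carried out in Sec.~\ref{sec:preliminaries}. Second, since $U$ is unitary, $\tr[M_{\bm{u}}^2] = \tr[(A_A^{\bm{u}})^2] = d^n$ by property~2 of Lemma~\ref{lemma:properties}; combining this with Lemma~\ref{lemma:inner-product} applied to $M_{\bm{u}}$ yields $\sum_{\bm{v}} W_{\cU}(\bm{v}|\bm{u})^2 = 1$.

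With these two identities in hand, let $N_{\bm{u}}$ be the number of nonzero entries in row $\bm{u}$. Cauchy--Schwarz then gives $1 = \bigl(\sum_{\bm{v}} W_{\cU}(\bm{v}|\bm{u})\bigr)^2 \le N_{\bm{u}} \sum_{\bm{v}} W_{\cU}(\bm{v}|\bm{u})^2 = N_{\bm{u}}$, so $N_{\bm{u}} \ge 1$ for every input $\bm{u}$. Summing over the $d^{2n}$ input labels $\bm{u}\in\bZ_d^n\times\bZ_d^n$ produces $\chi(\cU) = \sum_{\bm{u}} N_{\bm{u}} \ge d^{2n} = d(\cU)$, and hence $\chi_{\log}(\cU) = \log\chi(\cU) - \log d(\cU) \ge 0$. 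This settles the inequality.

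For the equality analysis, $\chi(\cU)=d(\cU)$ forces $N_{\bm{u}}=1$ for all $\bm{u}$, and the two constraints then pin the unique nonzero value in each row to be $1$, so $U A_A^{\bm{u}} U^\dagger = A_{A'}^{f(\bm{u})}$ for some map $f$ on phase space; because conjugation by $U$ is a linear bijection on $\linear{\cH_d^{\ox n}}$ and $\{A_{\bm{u}}\}_{\bm{u}}$ is a basis (property~2 of Lemma~\ref{lemma:properties}), $f$ must be a permutation. Conversely, for the direction ``Clifford $\Rightarrow$ equality'' I would invoke the standard Clifford-covariance of phase-space point operators from~\cite{Gross_2006}: a Clifford $U$ satisfies $U A^{\bm{u}} U^\dagger = A^{g(\bm{u})}$ for an affine symplectic $g$, whence orthogonality gives $W_{\cU}(\bm{v}|\bm{u}) = \delta(\bm{v},g(\bm{u}))$, so $N_{\bm{u}}=1$ everywhere and $\chi_{\log}(\cU)=0$.

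The main obstacle is the remaining converse: that any $U$ permuting the phase-space point operators is necessarily Clifford. My intended route is through the Wigner function of states. If $U A_{\bm{u}} U^\dagger = A_{f(\bm{u})}$ with $f$ a permutation, then for any state $\rho$ one computes $W_{U\rho U^\dagger}(\bm{u}) = \frac{1}{d^n}\tr[A_{\bm{u}} U\rho U^\dagger] = \frac{1}{d^n}\tr[A_{f^{-1}(\bm{u})}\rho] = W_{\rho}(f^{-1}(\bm{u}))$, so conjugation by $U$ merely permutes Wigner coefficients and in particular preserves their nonnegativity. By the discrete Hudson theorem (Theorem~\ref{thm:Hudson-theorem}), $U$ therefore sends every pure stabilizer state to a state with positive representation, i.e.\ to a pure stabilizer state. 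The step requiring the most care is concluding from ``$U$ preserves the set of pure stabilizer states'' that $U$ lies in the Clifford group; I expect to establish this by examining the action of $U$ on a generating set of stabilizer states (equivalently, recovering its action on the Heisenberg--Weyl operators via the inverse symplectic Fourier relation $T_{\bm{w}} = \frac{1}{d^n}\sum_{\bm{u}}\omega^{-[\bm{u},\bm{w}]} A_{\bm{u}}$) and verifying it permutes them up to phase.
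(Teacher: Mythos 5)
Your proof follows essentially the same route as the paper's: a row-wise count of nonzero entries (via $\sum_{\bm{v}}W_{\cU}(\bm{v}|\bm{u})=1$) for the lower bound, Clifford covariance of the phase-space point operators for the ``Clifford $\Rightarrow$ equality'' direction, and the discrete Hudson theorem plus the fact that a stabilizer-preserving unitary is Clifford for the converse. Your addition of the $\ell_2$ row constraint $\sum_{\bm{v}}W_{\cU}(\bm{v}|\bm{u})^2=1$ and Cauchy--Schwarz is a clean refinement that directly pins the unique nonzero entry to $1$ in the equality case, and you are more candid than the paper about the one step both arguments leave at the level of assertion, namely that a unitary permuting the set of pure stabilizer states must lie in the Clifford group.
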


\begin{proof}
    Note that $\sum_{\bm{v}}\cW_{\cU}(\bm{v}|\bm{u}) = 1$, then for a fixed phase-space point operator $A_{\bm{u}}$, there must exist at least one $A_{\bm{v'}}$ such that $\cW_{\cU}(\bm{v'}|\bm{u}) \neq 0$. This implies that there are at least $d^{2n}$ non-zero Wigner coefficients for a given unitary channel $\cU$ and thus $\chi_{\log}(\cU) \geq 0$.
    A Clifford operation $U_{\bm{F},\bm{a}} \in \cC_{d^n}$ is specified as $U_{\bm{F},\bm{a}} T_{\bm{u}} U_{\bm{F},\bm{a}}^{\dagger}$, where $\bm{u} \in (\bZ_d^n\times\bZ_d^n)$, $\bm{F}$ denotes a $2n \times 2n$ symplectic matrix with entries in $\bZ_d$ and $\bm{a} \in (\bZ_d^n\times\bZ_d^n)$. And Clifford operators satisfy that $U_{\bm{F},\bm{a}} A_{\bm{u}} U_{\bm{F},\bm{a}}^{\dagger} = A_{\bm{F}\bm{u}+\bm{a}}$. For the Wigner function of a Clifford operation $U_{\bm{F},\bm{a}}$, we have:
    \begin{subequations}
          \begin{align}
        \mathcal{W}_{\cU}(\bm{v}|\bm{u})  & = \frac{1}{d^n} \tr(A_\mathbf{v} \cU(A_\mathbf{u}))\\
        &  = \frac{1}{d^n} \tr\big(A_\mathbf{v} (U_{\bm{F},\bm{a}} A_{\bm{u}} U_{\bm{F},\bm{a}}^{\dagger}) \big) \\
        & = \frac{1}{d^n} \tr\big(A_\mathbf{v}  A_{\bm{F}\bm{u}+\bm{a}} \big) \\
         & = \frac{1}{d^n} d^n \delta_{\bm{v}, \bm{F}\bm{u}+\bm{a}} \\
         & = \delta_{\bm{v}, \bm{F}\bm{u}+\bm{a}},
    \end{align}
    \end{subequations}
    We conclude that $\chi(\cU) = d^{2n}$ and $\chi_{\log}(\cU) = 0$ for any Clifford operations. Conversely, if $\chi_{\log}(\cU) = 0$ for a unitary channel $\cU$, then $ \mathcal{W}_{\cU}(\bm{v}|\bm{u}) = 1$ for a given $\bm{u}$. For any pure stabilizer state $\psi$, the discrete Wigner function of $\cU(\psi)$ is:
    \begin{subequations}
        \begin{align}
            \cW_{\cU(\psi)}(\bm{v})& = \frac{1}{d} \tr[A_{\bm{v}} \cU(\psi)] \\
            & =   \frac{1}{d} \sum_{\bm{u}}W_{\psi}(\bm{u})\tr[A_{\bm{v}}\cU(A_{\bm{u}})] \\
            & =  \sum_{\bm{u}}W_{\psi}(\bm{u})\mathcal{W}_{\cU}(\bm{v}|\bm{u}) 
        \end{align}
    \end{subequations}
    Since $\psi$ is a pure stabilizer state, $W_{\psi}(\bm{u})$ takes $0$ or $1/d$, and $\mathcal{W}_{\cU}(\bm{v}|\bm{u})$ takes $0$ or $1$. Therefore, $\cW_{\cU(\psi)}(\bm{v})$ equals either 0 or $1/d$. This implies that $\cU(\psi)$ is always a pure stabilizer state, and hence $\cU$ is a Clifford operation. 
\end{proof}

\begin{proposition}[Additivity]
    For two given unitary channel $\cU_1, \cU_2 $, it holds that 
    \begin{equation}
        \chi_{\log}(\cU_1 \otimes \cU_2) = \chi_{\log}(\cU_1) + \chi_{\log}(\cU_2). 
    \end{equation}
\end{proposition}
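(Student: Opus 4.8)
The plan is to reduce the additivity statement to a multiplicative factorization of the channel Wigner function under tensor products, after which the result follows by counting nonzero coefficients. The first step is to record that phase-space point operators of a composite system factorize, $A_{\bm{u}_1 \oplus \bm{u}_2} = A_{\bm{u}_1} \ox A_{\bm{u}_2}$; this follows from the tensor-product structure $T_{\bm{u}_1 \oplus \bm{u}_2} = T_{\bm{u}_1} \ox T_{\bm{u}_2}$ of the Heisenberg--Weyl operators stated in the preliminaries together with the definition $A_\bm{u} = T_\bm{u} A_\mathbf{0} T_\bm{u}^\dagger$.

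With this factorization in hand, I would compute the channel Wigner function of $\cU_1 \ox \cU_2$ directly from its definition. Labeling the composite input and output phase-space points as $(\bm{u}_1,\bm{u}_2)$ and $(\bm{v}_1,\bm{v}_2)$, and using both $(\cU_1 \ox \cU_2)(A^{\bm{u}_1} \ox A^{\bm{u}_2}) = \cU_1(A^{\bm{u}_1}) \ox \cU_2(A^{\bm{u}_2})$ and the tensor factorization of the trace $\tr[(P \ox Q)(R \ox S)] = \tr[PR]\,\tr[QS]$, I expect to obtain
\begin{equation}
    \cW_{\cU_1 \ox \cU_2}\big((\bm{v}_1,\bm{v}_2)\,\big|\,(\bm{u}_1,\bm{u}_2)\big)
    = \cW_{\cU_1}(\bm{v}_1|\bm{u}_1)\cdot\cW_{\cU_2}(\bm{v}_2|\bm{u}_2),
\end{equation}
where the overall normalization $1/(d_{B_1}d_{B_2})$ splits into the two single-channel factors $1/d_{B_1}$ and $1/d_{B_2}$.

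The final step is purely combinatorial: a product of two reals is nonzero precisely when both factors are nonzero, so the Wigner set of the tensor product is the Cartesian product $\cW_{\cU_1 \ox \cU_2} = \cW_{\cU_1} \times \cW_{\cU_2}$, giving $\chi(\cU_1 \ox \cU_2) = \chi(\cU_1)\,\chi(\cU_2)$. Combining this with the multiplicativity of the underlying dimension, $d(\cU_1 \ox \cU_2) = d(\cU_1)\,d(\cU_2)$, and taking logarithms in the definition $\chi_{\log}(\cU) = \log\chi(\cU) - \log d(\cU)$ yields the claimed additivity, since the $\log d$ correction terms split cleanly across the two factors. I do not expect a substantive obstacle: the only points needing care are the index bookkeeping for the separate input/output pairs and confirming that the nonzero-coefficient condition genuinely factorizes in both directions rather than being implied in only one.
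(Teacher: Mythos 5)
Your proposal is correct and follows essentially the same route as the paper's proof: establish the factorization $\cW_{\cU_1\ox\cU_2}(\bm{v}_1\oplus\bm{v}_2\,\vert\,\bm{u}_1\oplus\bm{u}_2)=\cW_{\cU_1}(\bm{v}_1\vert\bm{u}_1)\,\cW_{\cU_2}(\bm{v}_2\vert\bm{u}_2)$, deduce $\chi(\cU_1\ox\cU_2)=\chi(\cU_1)\chi(\cU_2)$, and take logarithms. Your explicit justification of $A_{\bm{u}_1\oplus\bm{u}_2}=A_{\bm{u}_1}\ox A_{\bm{u}_2}$ via the tensor structure of the Heisenberg--Weyl operators is a detail the paper leaves implicit, but it does not change the argument.
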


\begin{proof}
   Obviously, it holds that:  
    \begin{subequations}
        \begin{align}
        \mathcal{W}_{\cU_1 \otimes \cU_2}(\bm{v}_1 \oplus \bm{v}_2 |\bm{u}_1\oplus \bm{u}_2)  & = \frac{1}{d^{(n_1+n_2)}} \tr\big(A_{\bm{v}_1 \oplus \bm{v}_2} \cU_1 \otimes \cU_2(A_{{\bm{u}_1 \oplus \bm{u}_2} })\big)\\
        & = \frac{1}{d^{n_1}} \tr\big(A_{\bm{v}_1 } \cU_1(A_{{\bm{u}_1} })\big)\frac{1}{d^{n_2}} \tr\big(A_{\bm{v}_2 } \cU_2(A_{{\bm{u}_2} })\big)    \\
        & = \mathcal{W}_{\cU_1}(\bm{v}_1 |\bm{u}_1)\mathcal{W}_{\cU_2}(\bm{v}_2 |\bm{u}_2) ,
        \end{align}
    \end{subequations}
    Then we have $\chi(\cU_1 \otimes \cU_2) = \chi(\cU_1)\cdot \chi(\cU_2)$ by definition and conclude that $\chi_{\log}(\cU_1 \otimes \cU_2) = \chi_{\log}(\cU_1) + \chi_{\log}(\cU_2)$.
\end{proof}

\begin{proposition}[Subadditivity]
    For two given unitary channel $\cU_1, \cU_2 $, it holds that 
    \begin{equation}
        \chi_{\log}(\cU_1 \circ \cU_2) \leq \chi_{\log}(\cU_1) + \chi_{\log}(\cU_2). 
    \end{equation}
\end{proposition}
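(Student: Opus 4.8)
The plan is to reduce the statement to a combinatorial bound on the supports of the Wigner transition matrices, after first establishing a composition (chain) rule for serially composed unitary channels. Write $D := d^{2n}$, so that $d(\cU)=D$ and $\chi_{\log}(\cU)=\log\chi(\cU)-\log D$ for any unitary channel on $\cH_d^{\ox n}$. Unfolding the definition, the claimed inequality $\chi_{\log}(\cU_1\circ\cU_2)\le\chi_{\log}(\cU_1)+\chi_{\log}(\cU_2)$ is equivalent to the sparsity bound
\begin{equation}
\chi(\cU_1\circ\cU_2)\le \frac{\chi(\cU_1)\,\chi(\cU_2)}{D},
\end{equation}
so everything comes down to counting the nonzero Wigner coefficients of the composite channel.

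First I would derive the chain rule. Since $\cU_2(A_{\bm u})$ is Hermitian, Property~5 of Lemma~\ref{lemma:properties} expands it in the phase-space point basis as $\cU_2(A_{\bm u})=\sum_{\bm w}\cW_{\cU_2}(\bm w|\bm u)A_{\bm w}$, because the expansion coefficient $\tfrac1{d^n}\tr[A_{\bm w}\cU_2(A_{\bm u})]$ is exactly $\cW_{\cU_2}(\bm w|\bm u)$. Applying the linear map $\cU_1$ and pairing with $A_{\bm v}$ yields
\begin{equation}
\cW_{\cU_1\circ\cU_2}(\bm v|\bm u)=\sum_{\bm w}\cW_{\cU_1}(\bm v|\bm w)\,\cW_{\cU_2}(\bm w|\bm u),
\end{equation}
that is, the transition matrices multiply. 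In particular $\cW_{\cU_1\circ\cU_2}(\bm v|\bm u)$ can be nonzero only if there is an intermediate point $\bm w$ with both $\cW_{\cU_1}(\bm v|\bm w)\neq0$ and $\cW_{\cU_2}(\bm w|\bm u)\neq0$, so the Wigner set of the composite is contained in the composed support relation. Writing $a_{\bm w}$ for the number of nonzero entries in the $\bm w$-column of $\cU_1$'s transition matrix and $b_{\bm w}$ for the number of nonzero entries in the $\bm w$-row of $\cU_2$'s, a witness count gives $\chi(\cU_1\circ\cU_2)\le\sum_{\bm w}a_{\bm w}b_{\bm w}$, where $\sum_{\bm w}a_{\bm w}=\chi(\cU_1)$ and $\sum_{\bm w}b_{\bm w}=\chi(\cU_2)$.

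The main obstacle is the final counting step: the target $\sum_{\bm w}a_{\bm w}b_{\bm w}\le \tfrac1D\,\chi(\cU_1)\chi(\cU_2)$ does \emph{not} follow from the support structure alone, since by Chebyshev's sum inequality the witness count can exceed $\tfrac1D(\sum a_{\bm w})(\sum b_{\bm w})$ whenever the column and row supports are positively correlated; the naive bound is therefore too lossy. To close the gap I would exploit the genuine structure of unitary Wigner transition matrices rather than treat them as arbitrary sparse matrices. The relevant structural inputs are the normalization $\sum_{\bm v}\cW_\cU(\bm v|\bm u)=1$ for every input $\bm u$ (established in the Preliminaries), its column analogue coming from unitality $\cU(\1)=\1$, and the orthogonality relation $M_\cU M_\cU^{T}=I$ that follows from $\cU\circ\cU^{-1}=\id$ together with $M_{\cU^{-1}}=M_\cU^{T}$ (the latter being immediate from $\cW_{\cU^{-1}}(\bm u|\bm v)=\cW_\cU(\bm v|\bm u)$). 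These constraints make each column a unit $\ell_2$ vector with entries bounded by $1$, so the nonzero coefficients cannot concentrate, and it is precisely this ``no-concentration'' property that one must convert into the quantitative $1/D$ factor. I expect that turning orthogonality and the stochasticity constraints into a sharp bound on $\sum_{\bm w}a_{\bm w}b_{\bm w}$—valid for all admissible unitary Wigner matrices, not merely generic orthogonal ones—will be the delicate core of the argument.
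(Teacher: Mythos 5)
Your reduction and chain rule are correct: $\cW_{\cU_1\circ\cU_2}(\bm v|\bm u)=\sum_{\bm w}\cW_{\cU_1}(\bm v|\bm w)\cW_{\cU_2}(\bm w|\bm u)$, and since $\chi_{\log}(\cU)=\log\chi(\cU)-\log d^{2n}$, the claimed inequality is indeed equivalent to $\chi(\cU_1\circ\cU_2)\le\chi(\cU_1)\chi(\cU_2)/d^{2n}$. But your proposal stops exactly where the proof would have to begin: you show that the support-counting argument only yields $\chi(\cU_1\circ\cU_2)\le\sum_{\bm w}a_{\bm w}b_{\bm w}\le\chi(\cU_1)\chi(\cU_2)$, observe correctly that the extra factor $1/d^{2n}$ does not follow from sparsity alone (by Chebyshev's sum inequality the witness count can exceed $\tfrac{1}{D}\chi(\cU_1)\chi(\cU_2)$ when the column and row supports correlate), and then defer the ``delicate core'' of converting orthogonality and quasi-stochasticity of the Wigner transition matrices into the missing $1/D$ factor. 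As written, this is a proof outline with the decisive step missing, so it does not establish the proposition.

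That said, your diagnosis is more valuable than you may realize: the paper's own proof is precisely the naive counting you reject. It derives $\chi(\cU_1\circ\cU_2)\le\chi(\cU_1)\cdot\chi(\cU_2)$ and then asserts that this ``leads to'' $\chi_{\log}(\cU_1\circ\cU_2)\le\chi_{\log}(\cU_1)+\chi_{\log}(\cU_2)$, but taking logarithms and subtracting a single $\log d^{2n}$ only gives $\chi_{\log}(\cU_1\circ\cU_2)\le\chi_{\log}(\cU_1)+\chi_{\log}(\cU_2)+\log d^{2n}$, off by exactly the normalization term. So the gap you identified is present in the published argument itself. To repair the statement one must either carry out the structural analysis you sketch (exploiting $M_{\cU}M_{\cU}^{T}=I$ and the row/column sum constraints, and it is not obvious this suffices for all unitaries), restrict to cases where one factor is Clifford (where $a_{\bm w}\equiv 1$ makes $\sum_{\bm w}a_{\bm w}b_{\bm w}=\chi(\cU_2)=\chi(\cU_1)\chi(\cU_2)/d^{2n}$ exactly), or weaken the claim to the unnormalized version $\log\chi(\cU_1\circ\cU_2)\le\log\chi(\cU_1)+\log\chi(\cU_2)$, which is all the counting actually delivers.
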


\begin{proof}
   Recall the properties of the discrete Wigner function. For the composition of two channels, we have: 
   \begin{subequations}
       \begin{align}
           \mathcal{W}_{\cU_1 \circ \cU_2}(\bm{v}|\bm{u}) &=  \frac{1}{d^n} \tr\big(A_{\bm{v}} \cU_2\circ \cU_1(A_{\bm{u} })\big)   \\
           & =  \frac{1}{d^n}\sum_{\bm{w}}  \mathcal{W}_{\cU_1}(\bm{w}|\bm{u})\tr\big(A_{\bm{v}} \cU_2(A_{\bm{w} })\big)   \\
        & = \sum_{\bm{w}} \mathcal{W}_{\cU_2}(\bm{v}|\bm{w}) \mathcal{W}_{\cU_1}(\bm{w}|\bm{u})
       \end{align}
   \end{subequations}
    
   For $\mathcal{W}_{\cU_1 \circ \cU_2}(\bm{v}|\bm{u})$ to be nonzero, at least one term in the sum must be nonzero. This implies that:
   \begin{equation}
       \mathcal{W}_{\cU_2}(\bm{v}|\bm{w}) \neq 0 \ \text{and} \ \mathcal{W}_{\cU_1}(\bm{w}|\bm{u}) \neq 0 \ \text{for some $\bm{w}$}.
   \end{equation}
   For each nonzero entry $(\bm{w},\bm{u})$ of $\mathcal{W}_{\cU_1}(\bm{w}|\bm{u})$, there can be up to $\chi(\cU_2)$ nonzero entries $(\bm{v},\bm{w})$ for $\mathcal{W}_{\cU_2}(\bm{w}|\bm{u})$, each contributing to a nonzero entry $(\bm{v},\bm{u})$ in $\mathcal{W}_{\cU_1 \circ \cU_2}(\bm{v}|\bm{u})$.
   Thus, we obtain the bound:
   \begin{equation}
       \chi(\cU_1 \circ \cU_2) \leq \chi(\cU_1)\cdot \chi(\cU_2),
   \end{equation}
   which leads to the desired inequality: 
   \begin{equation}
       \chi_{\log}(\cU_1 \circ \cU_2) \leq \chi_{\log}(\cU_1) + \chi_{\log}(\cU_2).
   \end{equation}
   Therefore, the logarithmic Wigner rank of quantum channels satisfies subadditivity under serial composition.
\end{proof}


\section{Fidelity estimation of quantum states via Wigner rank}
\label{sec:fidelity-estimation-state-l2}

In the task of fidelity estimation, we are given a quantum device $\mathbf{D}$ 
designed to generate a known pure state $\psi$ in $\cH_d^n$.
However, this device might work incorrectly and generate an unknown
quantum state $\rho$ (possibly mixed) in $\cH_d^n$ each time we call the quantum device.
The fidelity between the desired pure state $\psi$ and the actual state $\rho$ is given by
\begin{align}\label{eq:pure-fidelity}
    \cF(\psi,\rho) 
:= \tr\left[\sqrt{\sqrt{\psi}\rho\sqrt{\psi}}\right]^2
= \tr[\psi\rho].
\end{align}
We need to estimate the fidelity $\cF(\psi,\rho)$, given access to the state copies.
The performance of a reasonable estimation protocol can be quantified by two distinct metrics:
\begin{itemize}
  \item \textbf{Sample complexity:} The number of quantum states is measured in order to 
        collect sufficient data to make the estimation. 
  \item \textbf{Measurement complexity:} The number of different measurement settings 
        in which data are obtained from the measurement device.
  
\end{itemize} 

For fidelity estimation, we employ the direct fidelity estimation (DFE) method. Briefly, DFE works by measuring only a prefixed number of expectation values of phase-space point operators, selected randomly according to an importance weighting rule. In this section, we tackle the fidelity estimation task by utilizing the $\ell_2$ norm first, where Wigner rank emerges as the key magic measure that governs the efficiency of the estimation process, thereby providing Wigner rank with an operational interpretation.

\subsection{Protocol}

The fidelity between our desired pure state $\psi$ and the actual state $\rho$ is given by
\begin{subequations}\label{eq:pure-fidelity-1}
\begin{align}
  \cF(\psi,\rho) 
= \tr[\psi\rho] 
&= d^n\sum_{\bm{u}\in\bZ_d^n\times\bZ_d^n}W_\psi(\bm{u})W_\rho(\bm{u}) \\
&= \sum_{\bm{u}\in\bZ_d^n\times\bZ_d^n} d^nW^2_\psi(\bm{u}) \times \frac{W_\rho(\bm{u})}{W_\psi(\bm{u})} \\
&\equiv \sum_{\bm{u}\in\bZ_d^n\times\bZ_d^n}{\rm Pr}(\bm{u}) X_{\bm{u}},
\end{align}
\end{subequations}
where ${\rm Pr}(\bm{u}):= d^nW^2_\psi(\bm{u})$ 
is a valid probability distribution thanks to Lemma~\ref{lemma:inner-product},
and $X_{\bm{u}}:= W_\rho(\bm{u})/W_\psi(\bm{u})$ is a random variable. Intuitively, Eq.~\eqref{eq:pure-fidelity-1} expresses that the fidelity $\tr[\psi\rho]$ is the expectation
value of the random variable $X_{\bm{u}}$ with respect to the probability distribution ${\rm Pr}(\bm{u})$, thus converting an estimation problem to a sampling problem. We have invoked the $\ell_2$ norm of the discrete Wigner representation of $\psi$. 

The estimation protocol is summarized in Protocol~\ref{protocol:fidelity-estimation-DFE}. The sample complexity of DFE of quantum states via $\ell_2$ norm is quantified by Wigner rank as in Theorem~\ref{theo:state_Wigner_rank}:
\begin{theorem} [Informal]
\label{theo:state_Wigner_rank}
    Given a pure state $\psi \in \cD(\cH_{d})$, an additive error $\varepsilon$ and success probability $1- \delta$, we propose a fidelity estimation protocol via Wigner Rank, the number of phase-space point operators sampled is  $K=\lceil 8/\varepsilon^2\delta\rceil$,
    the sample complexity is $\cO(\frac{1}{\varepsilon^2\delta} + 
\frac{2^{\chi_{\log}(\psi)}}{\varepsilon^2}\ln\frac{1}{\delta})$, where $\chi_{\log}(\psi)$ denotes logarithmic Wigner rank of $\psi$.
\end{theorem}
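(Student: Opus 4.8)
The plan is to treat the protocol as a two-stage Monte Carlo estimator in the spirit of direct fidelity estimation, and to show that the support size of the importance-sampling distribution---which is exactly the Wigner rank $\chi(\psi)$---governs the number of copies needed in the measurement stage. Writing $\cF(\psi,\rho)=\sum_{\bu}{\rm Pr}(\bu)X_{\bu}$ with ${\rm Pr}(\bu)=d^nW_\psi^2(\bu)$ and $X_{\bu}=W_\rho(\bu)/W_\psi(\bu)$, I would first invoke Lemma~\ref{lemma:inner-product} to confirm ${\rm Pr}$ is a probability distribution supported exactly on $\cW_\psi$, so that only the $\chi(\psi)$ points with $W_\psi(\bu)\neq0$ are ever sampled. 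The estimator draws $K$ i.i.d.\ points $\bu_1,\dots,\bu_K\sim{\rm Pr}$ and, for each, produces an empirical estimate $\hat X_{\bu_i}$ of $X_{\bu_i}$ by measuring the observable $A_{\bu_i}$ (Hermitian with spectrum in $[-1,1]$ by Lemma~\ref{lemma:properties}) on $L_i$ copies of $\rho$; the output is $\hat Y=\tfrac1K\sum_i\hat X_{\bu_i}$. I would then split the error as $|\hat Y-\cF|\le|\tfrac1K\sum_iX_{\bu_i}-\cF|+|\tfrac1K\sum_i(\hat X_{\bu_i}-X_{\bu_i})|$ and bound the two pieces separately, allocating half the failure budget $\delta/2$ to each.

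For the sampling error, the decisive computation is the second moment
\[
\mathbb{E}[X_{\bu}^2]=\sum_{\bu}d^nW_\psi^2(\bu)\frac{W_\rho^2(\bu)}{W_\psi^2(\bu)}=d^n\sum_{\bu}W_\rho^2(\bu)=\tr[\rho^2]\le 1,
\]
again by Lemma~\ref{lemma:inner-product} with $M=N=\rho$, so $\operatorname{Var}(X_{\bu})\le1$. Chebyshev's inequality then gives $\Pr[\,|\tfrac1K\sum_iX_{\bu_i}-\cF|\ge\varepsilon/2\,]\le 4/(K\varepsilon^2)$, and demanding this be at most $\delta/2$ yields precisely $K=\lceil 8/(\varepsilon^2\delta)\rceil$, the source of the first term $\tfrac{1}{\varepsilon^2\delta}$.

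For the measurement error I would estimate each $W_\rho(\bu_i)$ by the sample mean of the $L_i$ measurement outcomes of $A_{\bu_i}$, so that $\tfrac1K\sum_i(\hat X_{\bu_i}-X_{\bu_i})$ is a weighted sum of $\sum_iL_i$ independent, zero-mean, bounded terms, where the $(i,j)$th summand has range at most $c_{ij}=2/(K d^n|W_\psi(\bu_i)|L_i)$. The key idea is the importance-weighted allocation $L_i=\lceil B/(d^n{\rm Pr}(\bu_i))\rceil$, which makes every per-point variance contribution uniform: using $d^{2n}W_\psi^2(\bu_i)=d^n{\rm Pr}(\bu_i)$, the Hoeffding exponent's denominator becomes $\sum_{i,j}c_{ij}^2\le 4/(KB)$, and Hoeffding's inequality forces $B\asymp\delta\ln(1/\delta)$ to bring this failure probability below $\delta/2$---this is where the milder $\ln(1/\delta)$ dependence (rather than $1/\delta$) enters.

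Finally, the Wigner rank appears when I bound the total number of copies $\sum_iL_i$. Taking the expectation over the sampled points collapses the $1/{\rm Pr}(\bu_i)$ weights into a count of the support:
\[
\mathbb{E}\Big[\sum_iL_i\Big]\approx K\sum_{\bu\in\cW_\psi}{\rm Pr}(\bu)\frac{B}{d^n{\rm Pr}(\bu)}=K\,\frac{B}{d^n}\,\chi(\psi)=K\,B\,2^{\chi_{\log}(\psi)},
\]
since $\chi(\psi)/d^n=2^{\chi_{\log}(\psi)}$ by the definition of the logarithmic Wigner rank. Substituting $K=\lceil8/(\varepsilon^2\delta)\rceil$ and $B\asymp\delta\ln(1/\delta)$ gives $KB\,2^{\chi_{\log}(\psi)}=\cO\big(\tfrac{2^{\chi_{\log}(\psi)}}{\varepsilon^2}\ln\tfrac1\delta\big)$, which together with the $K$ copies of the ceiling overhead reproduces the claimed complexity $\cO(\tfrac{1}{\varepsilon^2\delta}+\tfrac{2^{\chi_{\log}(\psi)}}{\varepsilon^2}\ln\tfrac1\delta)$. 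The main obstacle is the measurement stage: one must choose the allocation $L_i$ so that it simultaneously makes the Hoeffding bound uniform across sampled points, yields the $\ln(1/\delta)$ scaling, and keeps $\mathbb{E}[\sum_iL_i]$ tied to the support size $\chi(\psi)$ rather than to the minimal Wigner weight; one must also argue that the random total $\sum_iL_i$ concentrates around its expectation (e.g.\ via Markov, absorbing the factor into the $\cO$), and combine the two error events by a union bound.
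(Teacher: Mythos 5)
Your proposal is correct and follows essentially the same route as the paper's own proof: the identical importance distribution ${\rm Pr}(\bm{u})=d^nW_\psi^2(\bm{u})$, Chebyshev with $\operatorname{Var}[X_{\bm{u}}]\le\tr[\rho^2]\le1$ giving $K=\lceil 8/(\varepsilon^2\delta)\rceil$, Hoeffding with the importance-weighted allocation $N_k\propto 1/(d^{2n}W_\psi^2(\bm{u}_k))$ for the measurement stage, a union bound, and the expectation over sampled points collapsing $\sum_k N_k$ to $\chi(\psi)/d^n=2^{\chi_{\log}(\psi)}$. The only cosmetic difference is bookkeeping of the $1/d^n$ rescaling of the recorded eigenvalues, which your range $c_{ij}$ handles consistently.
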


\begin{proof}
    The proof is demonstrated in the performance analysis of this protocol in the subsection~\ref{Per: wignerrank}.
\end{proof}

\begin{algorithm}[!hbtp]
\caption{Fidelity estimation of quantum states via Wigner rank.}
\label{protocol:fidelity-estimation-DFE}
\begin{algorithmic}[1]
\State{\textbf{Inputs:} \\
        \hskip1em\textbullet~~~The known target quantum state: $\psi$; \\
        \hskip1em\textbullet~~~The unknown prepared quantum state: $\rho$; \\
        \hskip1em\textbullet~~~The fixed additive error $\varepsilon$; \\
        \hskip1em\textbullet~~~The failure probability $\delta$.}
\State{\textbf{Output:} Estimation $\wt{Y}$, such that $\cF(\psi,\rho)$ lies 
        in $[\wt{Y}-\varepsilon, \wt{Y}+\varepsilon]$ with probability larger than $1-\delta$.}
\State{{Compute $K=\lceil 8/\varepsilon^2\delta\rceil$.}
       \Comment{{\color{gray}Number of phase-space point operators to be sampled}}}
\State{Set $\Delta=1/d^n$.}
\For{$k = 1, 2, \cdots, K$}\Comment{{\color{gray}Phase-space point operator sampling procedure}}
\State{Randomly sample a phase-space point operator $A_{\bm{u}_k}$ w.r.t. ${\rm Pr}(\bm{u})=d^nW^2_\psi(\bm{u})$.}
\State{Compute the number of measurement rounds $N_k$ as 
\begin{align}
    {N_k = \left\lceil\frac{8\Delta^2}{KW^2_\psi(\bm{u}_k)\varepsilon^2}\ln\frac{4}{\delta}\right\rceil.}
\end{align}
}
    \For{$j = 1, 2, \cdots, N_k$}\Comment{{\color{gray}Estimation procedure for $X_{\bm{u}}$}}
        \State{Prepare $\rho$ and measure it on the $A_{\bm{u}_k}$ basis.}
        \State{Record the eigenvalue $O_{j\vert k}\in[-\Delta,\Delta]$ of the corresponding outcome.}
    \EndFor
    \State{Compute the following empirical estimator from the experimental data $\{O_{j\vert k}\}_j$:
        \begin{align}
            \widetilde{X}(\bm{u}_k) :=  \frac{1}{N_k}\sum_{j=1}^{N_k}\frac{O_{j\vert k}}{W_\psi(\bm{u}_k)}.
            \qquad{\color{gray}\sim X(\bm{u}_k)=\frac{W_\rho(\bm{u}_k)}{W_\psi(\bm{u}_k)}}
        \end{align}
    }
\EndFor
\State{Compute the following empirical estimator from the data $\{\widetilde{X}(\bm{u}_k)\}_k$:
\begin{align}
    \widetilde{Y} := \frac{1}{K}\sum_{k=1}^K\widetilde{X}(\bm{u}_k).
            \qquad{\color{gray}\sim \tr[\psi\rho]}
\end{align}
}
\State{Output $\widetilde{Y}$ as an unbiased estimator of $\tr[\psi\rho]$.}
\end{algorithmic}
\end{algorithm}

\subsection{Performance analysis}
\label{Per: wignerrank}
Protocol~\ref{protocol:fidelity-estimation-DFE} is a two-stage estimation procedure: 
we first estimate each $\wt{X}_{\bm{u}_k}$ and then use them to estimate the target $\widetilde{Y}$:

\begin{align}
    O_{j\vert k} \quad\rightarrow\quad \wt{X}_{\bm{u}_k}
\quad\rightarrow\quad \wt{Y} 
\quad\xrightarrow{\text{\color{klevinblue}~Hoeffding inequality~}}\quad Y 
\quad\xrightarrow{ \text{\color{klevinblue}~Chebyshev inequality~}}\quad \tr[\psi\rho].
\end{align}

\paragraph*{Step 1: Estimating $Y$.}
Assume that each $X_{\bm{u}}$ (equivalently, $W_\rho(\bm{u})$) can be estimated \textit{ideally}. To estimate $\tr[\psi\rho]$ with a fixed additive error of $\varepsilon/2$ and significance level (aka. failure probability) $\delta/2$. The reason that we use $\varepsilon/2$ and $\delta/2$ instead of $\varepsilon$ and $\delta$ will be clear later. We perform $K$ samplings, obtaining random variables $X_{\bm{u}_1},\cdots,X_{\bm{u}_K}$. The sample average is then defined as $Y=1/K\sum_{k=1}^KX_{\bm{u}_k}$. We can apply the Chebyshev inequality to bound the estimation error of $Y$.

\textbf{Step 1.1: The expected value $\bE[Y]$ is finite.} 
This is trivial since
\begin{align}
  \bE[Y]  = \bE\left[\frac{1}{K}\sum_{k=1}^KX_{\bm{u}_k}\right] 
          = \sum_{\bm{u}\in\bZ_d^n\times\bZ_d^n} d^nW^2_\psi(\bm{u}) \times \frac{W_\rho(\bm{u})}{W_\psi(\bm{u})}
          = \tr[\psi\rho].
\end{align}
where $\bE$ denotes the expected value over the random choice of $k$.

\textbf{Step 1.2: The variance of $Y$ is bounded.} Since $Y$ is a weighted sum of $X_{\bm{u}_k}$, 
we first compute the variance of $X_{\bm{u}_k}$:
\begin{align}
  \opn{Var}[X_{\bm{u}_k}]
&= \bE[X_{\bm{u}_k}^2] - (\bE[X_{\bm{u}_k}])^2 \\
&= \sum_{\bm{u}\in\bZ_d^n\times\bZ_d^n} d^nW^2_\psi(\bm{u})\left[\frac{W_\rho(\bm{u})}{W_\psi(\bm{u})}\right]^2
    - \left(\tr[\psi\rho]\right)^2 \\
&= \sum_{\bm{u}\in\bZ_d^n\times\bZ_d^n} d^nW^2_\rho(\bm{u}) - \tr[\psi\rho]^2 \\
&= \tr[\rho^2] - \tr[\psi\rho]^2 \\
&\leq \tr[\rho^2] \\
&\leq 1,
\end{align}
where the fourth equality follows from Lemma~\ref{lemma:inner-product}.
Then, the variance of $Y$ follows directly
\begin{align}
    \opn{Var}[Y] = \frac{1}{K^2}\sum_{k=1}^K\opn{Var}[X_{\bm{u}_k}] \leq \frac{1}{K}.
\end{align}

\textbf{Step 1.3: Chebyshev inequality.} We have 
\begin{align}
\opn{Pr}\left\{\vert Y - \tr[\psi\rho]\vert \geq\frac{a}{\sqrt{K}}\right\}
\leq \opn{Pr}\left\{\vert Y - \tr[\psi\rho]\vert \geq a\sqrt{\opn{Var}[Y]}\right\} \leq \frac{1}{a^2}.
\end{align}
Let $a=\sqrt{2/\delta}$ and $K=\lceil 8/\varepsilon^2\delta\rceil$ yields
\begin{align}\label{eq:wignerrank-estimation-bound-1}
    \opn{Pr}\left\{\vert Y - \tr[\psi\rho]\vert \geq \varepsilon/2\right\} \leq \delta/2.
\end{align}

\paragraph*{Step 2: Estimating $X_{\bm{u}_k}$.}
However, each $X_{\bm{u}_k}$ can not be ideally estimated. Therefore, we utilize a finite-precision estimator $\wt{Y}$, derived from a finite number of copies of the state $\rho$, to approximate the ideal infinite-precision estimator $Y$:
\begin{align}
    \wt{Y} 
&= \frac{1}{K}\sum_{k=1}^K\wt{X}_{\bm{u}_k} \\
&= \sum_{k=1}^K\sum_{j=1}^{N_k}\frac{1}{K}\times\frac{1}{N_k}\times\frac{O_{j\vert k}}{W_\psi(\bm{u}_k)}  \\
&= \sum_{k=1}^K\sum_{j=1}^{N_k}\wt{O}_{j\vert k},
\end{align}
where
\begin{align}
  \wt{O}_{j\vert k} := \frac{1}{KN_kW_\psi(\bm{u}_k)}O_{j\vert k} 
\in [-\Delta/(KN_kW_\psi(\bm{u}_k)), \Delta/(KN_kW_\psi(\bm{u}_k))],
\end{align}
where the range bound follows since $O_{j\vert k}\in[-\Delta,\Delta]$, 
where $\Delta=1/d^n$, proven in Lemma~\ref{lemma:properties}. 
Then we apply Hoeffding's inequality:
\begin{subequations}\label{eq:wignerrank-estimation-bound-2}
\begin{align}
    \opn{Pr}\left(\vert\wt{Y} - Y \vert \geq \varepsilon/2\right)
&\leq 2\exp\left(- \frac{2\varepsilon^2}{4\sum_{k=1}^K\sum_{j=1}^{N_k}
                    \frac{4\Delta^2}{K^2N_k^2W^2_\psi(\bm{u}_k)}}\right) \\
&= {2\exp\left(- \frac{\varepsilon^2}{\sum_{k=1}^K
                        \frac{8\Delta^2}{K^2N_kW^2_\psi(\bm{u}_k)}}\right)}.
\end{align}
\end{subequations}
We need to cleverly choose $N_k$ so that the error probability satisfies
\begin{subequations}
\begin{align}
&\quad    {2\exp\left(- \frac{\varepsilon^2}
            {\sum_{k=1}^K\frac{8\Delta^2}{K^2N_kW^2_\psi(\bm{u}_k)}}\right)} = \delta/2 \\
\Leftarrow&\quad \sum_{k=1}^K\frac{8\Delta^2}{K^2N_kW^2_\psi(\bm{u}_k)} 
                = \frac{\varepsilon^2}{\ln\frac{4}{\delta}} \\
\Leftarrow&\quad  \forall k=1,\cdots, K,\quad 
                  \frac{8\Delta^2}{K^2N_kW^2_\psi(\bm{u}_k)} 
                = \frac{1}{K}\times\frac{\varepsilon^2}{\ln\frac{4}{\delta}} \\
\Leftarrow&\quad  \forall k=1,\cdots, K,\quad 
    N_k = \left\lceil\frac{8\Delta^2}{KW^2_\psi(\bm{u}_k)\varepsilon^2}\ln\frac{4}{\delta}\right\rceil.
\end{align}
\end{subequations}

\paragraph*{Step 3: Union bound.}

From the above two steps, we obtain the following two probability bounds given in Eqs.~\eqref{eq:wignerrank-estimation-bound-1} and~\eqref{eq:wignerrank-estimation-bound-2}:
\begin{align}
  \Pr\left\{\vert Y - \tr[\psi\rho]\vert \geq \varepsilon/2\right\}
&\leq \delta/2, \\
\opn{Pr}\left(\vert\wt{Y} - Y \vert \geq \varepsilon/2\right)
&\leq \delta/2.
\end{align}
Using the union bound, we have 
\begin{align}
  \Pr\left\{\vert \wt{Y} - \tr[\psi\rho]\vert \leq \varepsilon\right\}
&= 1 - \Pr\left\{\vert \wt{Y} - \bE[Y]\vert \geq \varepsilon\right\}  \\
&\geq 1 - \left( \Pr\left\{\vert Y - \bE[Y]\vert \geq \varepsilon/2\right\}
               + \opn{Pr}\left(\vert\wt{Y} - Y \vert \geq \varepsilon/2\right)\right) \\
&\geq 1 - \delta.
\end{align}
We can then conclude that,
\begin{quote}
\textit{\color{googlered}
With probability $\geq 1-\delta$, 
the fidelity $\cF(\psi,\rho)$ lies in the range $[\wt{Y}-\varepsilon, \wt{Y}+\varepsilon]$.}
\end{quote}

\paragraph*{Number of copies}

Notice that the DFE method first samples \textbf{$K=\lceil 8/\varepsilon^2\delta\rceil$}
number of phase-space point operators and use 
$N_k=\lceil8\Delta^2/\big(KW^2_\psi(\bm{u}_k)\varepsilon^2\big)\ln(4/\delta)\rceil$
number of state copies to estimate the corresponding expectation value.
The expected value of $N_k$ w.r.t. $k$ is given by
\begin{align}
    \bE[N_k] 
&= \sum_{\bm{u}\in\bZ_d^n\times\bZ_d^n} {\rm Pr}(\bm{u})N_{\bm{u}} \\
&\leq 1 + \sum_{\bm{u}\in\bZ_d^n\times\bZ_d^n} d^nW^2_\psi(\bm{u}) \times 
        \frac{8\Delta^2}{KW^2_\psi(\bm{u})\varepsilon^2}\ln\frac{4}{\delta} \\
&\leq 1 + \frac{8\chi(\psi)}{K\varepsilon^2d^n}\ln\frac{4}{\delta}\\
&\leq 1 + \frac{8\cdot2^{\chi_{\log}(\psi)}}{K\varepsilon^2}\ln\frac{4}{\delta},
\end{align}
where the last inequality follows from the fact that $\Delta=1/d^n$ and the definition of $\chi_{\log}(\psi)$. 
Thus, the total number of samples consumed is given by
\begin{align}
    \bE[N] = \sum_{k=1}^K \bE[N_k]
\leq 1 + \frac{8}{\varepsilon^2\delta} + 
8\cdot2^{\chi_{\log}(\psi)}\times\frac{1} {\varepsilon^2}\ln\frac{4}{\delta}.
\end{align}
By Markov's inequality, $N$ is unlikely to exceed its expectation by much.
In comparison to the original DFE sample complexity for an 
$n$-qubit system, where $d=2^n$~\cite{Flammia-2011-Phys.Rev.Lett.}, we use the magic measure, Wigner rank, to upper bound the sample complexity, instead of relying on the system dimension $d$. This approach not only provides a tighter upper bound, but also establishes a direct connection between the hardness of the fidelity estimation task of an estimated state and its nonstabilizerness resources. 

\paragraph*{Summary of the performance of Protocol~\ref{protocol:fidelity-estimation-DFE}}

\begin{itemize}
  \item \textbf{Sample complexity:} 
    $\cO(\frac{1}{\varepsilon^2\delta} + \frac{2^{\chi_{\log}(\psi)}}{\varepsilon^2}\ln\frac{1}{\delta} )$.  
  \item \textbf{Measurement complexity:} $\cO(d^{2n})$; All phase-space point operators in $\bZ_d^n\times\bZ_d^n$.
  \item \textbf{Measurement type:} single-qudit phase-space point measurements are enough.
\end{itemize}

\section{Fidelity estimation of quantum states via mana}\label{sec:fidelity-estimation-state-l1}

In Section~\ref{sec:fidelity-estimation-state-l1}, we address the fidelity estimation task 
by employing the  $\ell_1$ norm of the discrete Wigner function, where the mana serves as the key magic measure governing the estimation efficiency, which also endows mana with an operational interpretation in the fidelity estimation task.

\subsection{Protocol} 

Let $\Delta_\psi:= 2^{\cM(\psi)} = \sum_{\bm{u}\in\bZ_d^n\times\bZ_d^n}\vert W_\psi(\bm{u}) \vert$ 
be the exponential version of the mana measure~\cite[Definition 7]{Veitch-2014-NewJ.Phys.}. Using this, the fidelity between our desired pure state $\psi$ and the actual state $\rho$ is given by
\begin{subequations}\label{eq:pure-fidelity-2}
\begin{align}
  \cF(\psi,\rho) 
= \tr[\psi\rho] 
&= d^n\sum_{\bm{u}\in\bZ_d^n\times\bZ_d^n}W_\psi(\bm{u})W_\rho(\bm{u}) \\
&= \sum_{\bm{u}\in\bZ_d^n\times\bZ_d^n}\frac{\vert W_\psi(\bm{u}) \vert}{\Delta_\psi}
        \operatorname{sgn}\big(W_\psi(\bm{u})\big)d^n\Delta_\psi W_\rho(\bm{u}) \\
&\equiv \sum_{\bm{u}\in\bZ_d^n\times\bZ_d^n}{\rm Pr}(\bm{u}) X_{\bm{u}},
\end{align}
\end{subequations}
where the second equality follows from Lemma~\ref{lemma:inner-product}, 
$\operatorname{sgn}:\mathbb{R}\to\{1,-1\}$ is the sign function,
${\rm Pr}(\bm{u}):= \vert W_\psi(\bm{u}) \vert/\Delta_\psi$ is a valid probability distribution, and
\begin{align}
    X_{\bm{u}} := \operatorname{sgn}\big(W_\psi(\bm{u})\big)d^n\Delta_\psi W_\rho(\bm{u})
\end{align}
is a random variable. Similar to previous section, Eq.~\eqref{eq:pure-fidelity-2} shows that the fidelity $\tr[\psi\rho]$ is the expectation
value of the random variable $X_{\bm{u}}$ with respect to the probability distribution ${\rm Pr}(\bm{u})$, thereby transforming the estimation problem to a sampling problem. 

 \begin{algorithm}[!hbtp]
\caption{Fidelity estimation of quantum states via mana.}
\label{protocol:fidelity-estimation-DFE-mana}
\begin{algorithmic}[1]
\State{\textbf{Inputs:} \\
        \hskip1em\textbullet~~~The known target quantum state: $\psi$; \\
        \hskip1em\textbullet~~~The unknown prepared quantum state: $\rho$; \\
        \hskip1em\textbullet~~~The fixed additive error $\varepsilon$; \\
        \hskip1em\textbullet~~~The failure probability $\delta$.}
\State{\textbf{Output:} Estimation $\wt{Y}$, such that $\cF(\psi,\rho)$ lies 
        in $[\wt{Y}-\varepsilon, \wt{Y}+\varepsilon]$ with probability larger than $1-\delta$.}
\State{Compute $K=\lceil 8\Delta_\psi/\varepsilon^2\delta)\rceil$.
       \Comment{{\color{gray}Number of phase-space point operators to be sampled}}}
\State{Set $\Delta=1/d^n$.}
\For{$k = 1, 2, \cdots, K$}\Comment{{\color{gray}Phase-space point operator sampling procedure}}
\State{Randomly sample a phase-space point operator $A_{\bm{u}_k}$ 
        w.r.t. ${\rm Pr}(\bm{u})=\vert W_\psi(\bm{u}) \vert/\Delta_\psi$.}
\State{Compute the number of measurement rounds $N_k$ as 
\begin{align}
    {N_k = \left\lceil\frac{8\Delta_{\psi}^2}{K\varepsilon^2}\ln\frac{4}{\delta}\right\rceil.}
\end{align}
}
    \For{$j = 1, 2, \cdots, N_k$}\Comment{{\color{gray}Estimation procedure for $X_{\bm{u}}$}}
        \State{Prepare $\rho$ and measure it on the $A_{\bm{u}_k}$ basis.}
        \State{Record the eigenvalue $O_{j\vert k}\in[-\Delta,\Delta]$ of the corresponding outcome.}
    \EndFor
    \State{Compute the following empirical estimator from the experimental data $\{O_{j\vert k}\}_j$:
        \begin{align}
            \widetilde{X}(\bm{u}_k) := 
                \frac{1}{N_k}\sum_{j=1}^{N_k}\operatorname{sgn}\big(W_\psi(\bm{u})\big)d^n\Delta_\psi O_{j\vert k}.
                    \qquad{\color{gray}\sim X(\bm{u}_k)}
        \end{align}
    }
\EndFor
\State{Compute the following empirical estimator from the data $\{\widetilde{X}(\bm{u}_k)\}_k$:
\begin{align}
    \widetilde{Y} := \frac{1}{K}\sum_{k=1}^K\widetilde{X}(\bm{u}_k).
            \qquad{\color{gray}\sim \cF(\psi,\rho)}
\end{align}
}
\State{Output $\widetilde{Y}$ as an unbiased estimator of $\cF(\psi,\rho)$.}
\end{algorithmic}
\end{algorithm}

The estimation protocol is summarized in Protocol~\ref{protocol:fidelity-estimation-DFE-mana}. We demonstrate that DFE of quantum states via the $\ell_1$ norm is quantified by mana in Theorem~\ref{theo:state_mana}:
\begin{theorem} [Informal]
\label{theo:state_mana}
    Given a pure state $\psi \in \cD(\cH_{d})$, an additive error $\varepsilon$, and success probability $1- \delta$, we propose a fidelity estimation protocol based on mana. In this protocol, the number of phase-space point operators sampled is given by: $K=\lceil 8\cdot 2^{\cM(\psi)}/\varepsilon^2\delta)\rceil$,
    and the sample complexity is $\cO(\frac{2^{\cM(\psi)}}{\varepsilon^2\delta} 
        + \frac{2^{2\cdot\cM(\psi)}}{\varepsilon^2}\ln\frac{1}{\delta})$. 
\end{theorem}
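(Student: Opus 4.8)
The plan is to mirror the three-stage analysis carried out for Protocol~\ref{protocol:fidelity-estimation-DFE}, adapting each step to the $\ell_1$ importance-sampling scheme with sampling weights ${\rm Pr}(\bm{u}) = |W_\psi(\bm{u})|/\Delta_\psi$ and estimator $X_{\bm{u}} = \operatorname{sgn}(W_\psi(\bm{u}))\,d^n\Delta_\psi\, W_\rho(\bm{u})$. First I would confirm unbiasedness: taking the expectation of $Y = \frac{1}{K}\sum_k X_{\bm{u}_k}$ over the sampling distribution cancels the $\operatorname{sgn}$ and $\Delta_\psi$ factors and returns $d^n\sum_{\bm{u}} W_\psi(\bm{u}) W_\rho(\bm{u}) = \tr[\psi\rho]$ by Lemma~\ref{lemma:inner-product}, so $\bE[Y] = \cF(\psi,\rho)$.

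The crux of Step~1 is a \emph{tight} variance bound, and this is the step I expect to be the main obstacle. I would compute $\bE[X_{\bm{u}}^2] = \Delta_\psi d^{2n}\sum_{\bm{u}}|W_\psi(\bm{u})|\,W_\rho^2(\bm{u})$ and then apply the target-state Wigner bound $|W_\psi(\bm{u})| \leq 1/d^n$ from Lemma~\ref{lemma:properties} to obtain $\bE[X_{\bm{u}}^2] \leq \Delta_\psi d^n \sum_{\bm{u}} W_\rho^2(\bm{u}) = \Delta_\psi\tr[\rho^2] \leq \Delta_\psi$, again using Lemma~\ref{lemma:inner-product} to identify $\sum_{\bm{u}} W_\rho^2(\bm{u}) = \tr[\rho^2]/d^n$. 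The subtlety is that the naive bound $|X_{\bm{u}}| \leq \Delta_\psi$ only gives $\opn{Var}[X_{\bm{u}}] \leq \Delta_\psi^2$, which would force $K = \cO(\Delta_\psi^2/\varepsilon^2\delta)$ and spoil the leading term; trading one factor of $|W_\rho|$ for $\tr[\rho^2]$ via the Wigner bound on $W_\psi$ is precisely what recovers the claimed $\Delta_\psi$ scaling. With $\opn{Var}[Y] = \opn{Var}[X_{\bm{u}}]/K \leq \Delta_\psi/K$, Chebyshev's inequality with $a = \sqrt{2/\delta}$ and $K = \lceil 8\Delta_\psi/\varepsilon^2\delta\rceil$ then yields $\Pr\{|Y - \tr[\psi\rho]| \geq \varepsilon/2\} \leq \delta/2$.

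For Step~2 I would pass from the idealized $Y$ to the empirical $\widetilde{Y} = \sum_{k,j}\widetilde{O}_{j|k}$, where each $\widetilde{O}_{j|k} = \operatorname{sgn}(W_\psi(\bm{u}_k))\,d^n\Delta_\psi\, O_{j|k}/(KN_k)$ is independent and, since $O_{j|k}\in[-\Delta,\Delta]$ with $\Delta = 1/d^n$, lies in an interval of width $2\Delta_\psi/(KN_k)$. Because $\bE[O_{j|k}] = W_\rho(\bm{u}_k)$, this estimator is unbiased for $Y$, and Hoeffding's inequality bounds $\Pr(|\widetilde{Y}-Y|\geq \varepsilon/2)$ by $2\exp(-\varepsilon^2/\sum_k 8\Delta_\psi^2/(K^2N_k))$. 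Crucially, the interval widths no longer carry the $1/W_\psi(\bm{u}_k)$ factor that appeared in Protocol~\ref{protocol:fidelity-estimation-DFE}, so equalizing the summands makes the choice $N_k = \lceil 8\Delta_\psi^2/(K\varepsilon^2)\ln(4/\delta)\rceil$ \emph{independent of $k$} while driving this tail below $\delta/2$.

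Finally, Step~3 combines the two $\delta/2$ tail bounds by a union bound to conclude $\Pr\{|\widetilde{Y}-\cF(\psi,\rho)| \leq \varepsilon\} \geq 1-\delta$, establishing correctness. The sample-complexity count is then immediate: $\bE[N] = \sum_{k=1}^K N_k \leq K + 8\Delta_\psi^2\varepsilon^{-2}\ln(4/\delta)$, and substituting $K = \lceil 8\Delta_\psi/\varepsilon^2\delta\rceil$ with $\Delta_\psi = 2^{\cM(\psi)}$ gives $\cO(2^{\cM(\psi)}/\varepsilon^2\delta + 2^{2\cM(\psi)}\varepsilon^{-2}\ln(1/\delta))$, as claimed.
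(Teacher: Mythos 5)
Your proposal is correct and follows essentially the same route as the paper's own proof in Appendix~\ref{appx:theo:state_mana}: unbiasedness, the variance bound $\opn{Var}[X_{\bm{u}}]\leq \Delta_\psi$ obtained by trading one factor of $|W_\psi(\bm{u})|\leq 1/d^n$ for $\tr[\rho^2]$, Chebyshev for the outer stage with $K=\lceil 8\Delta_\psi/\varepsilon^2\delta\rceil$, Hoeffding for the inner stage with the $k$-independent $N_k=\lceil 8\Delta_\psi^2/(K\varepsilon^2)\ln(4/\delta)\rceil$, and a union bound. Your explicit remark on why the naive bound $|X_{\bm{u}}|\leq\Delta_\psi$ would degrade the leading term to $\Delta_\psi^2/\varepsilon^2\delta$ is a useful clarification that the paper leaves implicit, but the argument itself is the same.
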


\begin{proof}
    The proof is demonstrated in the performance analysis of this protocol in the Appendix~\ref{appx:theo:state_mana}.
\end{proof}

\paragraph*{Summary of the performance of Protocol~\ref{protocol:fidelity-estimation-DFE-mana}}

\begin{itemize}
  \item \textbf{Sample complexity:} 
  $\cO(\frac{\Delta_\psi}{\varepsilon^2\delta} 
        + \frac{\Delta_\psi^2}{\varepsilon^2}\ln\frac{1}{\delta})$, where $\Delta_\psi = 2^{\cM(\psi)}$.   
  \item \textbf{Measurement complexity:} $\cO(d^{2n})$; All phase-space point operators in $\bZ_d^n\times\bZ_d^n$. 
  \item \textbf{Measurement type:} single-qudit phase-space point measurements are sufficient. 
\end{itemize}

For the case of pure stabilizer states, it holds that $\Delta_{\psi} =  1$, and thus the sample complexity deduces to $\cO({\frac{1}{\varepsilon^2\delta} + \frac{1}{\varepsilon^2}\ln\frac{1}{\delta}})$, which is consistent with the performance of Protocol~\ref{protocol:fidelity-estimation-DFE}.


\subsection{Well-conditioned states}

If we concentrate on estimating the fidelity of pure stabilizer states, we can achieve exponential speedup (compared to the above standard estimation protocol).

\paragraph{Protocol}

This protocol relies crucially on the pure stabilizer state characterization given in Theorem~\ref{thm:Hudson-theorem}. Assume $\cS_\psi$ be the set of phase-space point operators with non-zero discrete Wigner representation of a stabilizer state $\psi$. Then $\vert\cS_\psi\vert=d^n$. 
The fidelity between our desired pure stabilizer state $\psi$ and the actual state $\rho$ can be rewritten as
\begin{align}\label{eq:pure-fidelity-3}
  \cF(\psi,\rho) 
= \tr[\psi\rho] 
= d^n\sum_{\bm{u}\in\cS_\psi}W_\psi(\bm{u})W_\rho(\bm{u})
= \sum_{\bm{u}\in\cS_\psi}W_\psi(\bm{u})\wh{W}_\rho(\bm{u}),
\end{align}
where $\wh{W}_\rho(\bm{u}) := d^nW_\rho(\bm{u})$.
Thank to Theorem~\ref{thm:Hudson-theorem} (see also Eq.~\eqref{eq:pure-fidelity-2}),
we know that $\{W_\psi(\bm{u})\}_{\bm{u}}$ is an valid probability distribution; 
Actually, it is a uniform distribution.
Similarly, we treat this estimation problem as a sampling problem.
The estimation protocol is summarized in Protocol~\ref{protocol:stabilizer-fidelity-estimation-DFE}, whose corresponding cost is introduced in Proposition~\ref{pro:stab}.

\begin{algorithm}[!hbtp]
\caption{Fidelity estimation of pure stabilizer states.}
\label{protocol:stabilizer-fidelity-estimation-DFE}
\begin{algorithmic}[1]
\State{\textbf{Inputs:} \\
        \hskip1em\textbullet~~~The known target pure stabilizer quantum state: $\psi$; \\
        \hskip1em\textbullet~~~The unknown prepared quantum state: $\rho$; \\
        \hskip1em\textbullet~~~The fixed additive error $\varepsilon$; \\
        \hskip1em\textbullet~~~The failure probability $\delta$.}
\State{\textbf{Output:} Estimation $\wt{Y}$, such that $\cF(\psi,\rho)$ lies 
        in $[\wt{Y}-\varepsilon, \wt{Y}+\varepsilon]$ with probability larger than $1-\delta$.}
\State{{Compute $K=\lceil 8/\varepsilon^2\ln(4/\delta)\rceil$.}
       \Comment{{\color{gray}Number of phase-space point operators to be sampled}}}
\For{$k = 1, 2, \cdots, K$}\Comment{{\color{gray}Phase-space point operator sampling procedure}}
\State{Randomly sample a phase-space point operator $\bm{u}_k$ w.r.t. $\{W_\psi(\bm{u}):\bm{u}\in\bZ_d^n\times\bZ_d^n\}$.}
\State{Compute the number of measurement rounds $N_k$ as 
\begin{align}
    {N_k = \left\lceil\frac{8}{K\varepsilon^2}\ln\frac{4}{\delta}\right\rceil = 1.}
\end{align}
}
    \For{$j = 1, 2, \cdots, N_k$}\Comment{{\color{gray}Estimation procedure for $\tr[A_{\bm{u}}\rho]$}}
        \State{Prepare $\rho$ and measure it on the $A_{\bm{u}_k}$ basis.}
        \State{Record the eigenvalue $O_{j\vert k}\in[-1,1]$ of the corresponding outcome.
                Note that $O_{j\vert k}$ is rescaled by $d^n$.}
    \EndFor
    \State{Compute the following emperical estimator from the experimental data $\{O_{j\vert k}\}_j$:
        \begin{align}
            \widetilde{W}_\rho(\bm{u}_k) :=  \frac{1}{N_k}\sum_{j=1}^{N_k}O_{j\vert k}.
            \qquad{\color{gray}\sim \wh{W}_\rho(\bm{u}_k) = 2^n\tr[A_{\bm{u}_k}\rho]}
        \end{align}
    }
\EndFor
\State{Compute the following emperical estimator from the data $\{\widetilde{W}_\rho(\bm{u}_k)\}_k$:
\begin{align}
    \widetilde{Y} := \frac{1}{K}\sum_{k=1}^K\widetilde{W}_\rho(\bm{u}_k).
            \qquad{\color{gray}\sim \tr[\psi\rho]}
\end{align}
}
\State{Output $\widetilde{Y}$ as an unbiased estimator of $\tr[\psi\rho]$.}
\end{algorithmic}
\end{algorithm}

\begin{proposition}[Informal]
    \label{pro:stab}
    Given a pure stabilizer state $\psi \in \cD(\cH_{d})$, an additive error $\varepsilon$ and success probability $1- \delta$, we propose an efficient fidelity estimation protocol, the number of phase-space point operators sampled is $K=\lceil 8/\varepsilon^2\ln\frac{4}{\delta}\rceil$, the sample complexity is $\cO\Big(
       \frac{1}{\varepsilon^2}\ln\frac{1}{\delta}\Big)$.
\end{proposition}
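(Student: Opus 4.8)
The plan is to reduce the estimation of $\cF(\psi,\rho)=\tr[\psi\rho]$ for a pure stabilizer target $\psi$ to a single round of i.i.d.\ sampling, exploiting the benign structure that Theorem~\ref{thm:Hudson-theorem} imposes on $W_\psi$. I would start from the reformulation in Eq.~\eqref{eq:pure-fidelity-3},
\begin{align}
    \cF(\psi,\rho) = \sum_{\bm{u}\in\cS_\psi}W_\psi(\bm{u})\,\wh{W}_\rho(\bm{u}),
    \qquad \wh{W}_\rho(\bm{u}) = d^nW_\rho(\bm{u}) = \tr[A_{\bm{u}}\rho].
\end{align}
By Theorem~\ref{thm:Hudson-theorem}, $\{W_\psi(\bm{u})\}_{\bm{u}}$ is supported on $\cS_\psi$ with $\vert\cS_\psi\vert=d^n$ and is a \emph{uniform} probability distribution there. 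Hence $\cF(\psi,\rho)=\bE_{\bm{u}\sim W_\psi}[\wh{W}_\rho(\bm{u})]$ is literally an expectation of $\wh{W}_\rho(\bm{u})$ under the directly sampleable law $W_\psi$, with no importance-reweighting factor of the form $1/W_\psi(\bm{u})$ or $\Delta_\psi$ appearing. This structural feature is the source of the claimed speedup.

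Next I would verify unbiasedness and boundedness of the protocol's estimator. Since $A_{\bm{u}}$ is Hermitian with known eigendecomposition, measuring $\rho$ in the eigenbasis of $A_{\bm{u}_k}$ and recording the eigenvalue $O_{1\vert k}$ gives, conditioned on $\bm{u}_k$, an outcome with conditional mean $\bE[O_{1\vert k}\mid\bm{u}_k]=\tr[A_{\bm{u}_k}\rho]=\wh{W}_\rho(\bm{u}_k)$; averaging over $\bm{u}_k\sim W_\psi$ then yields $\bE[O_{1\vert k}]=\cF(\psi,\rho)$, so $\wt{Y}=\tfrac1K\sum_{k=1}^K O_{1\vert k}$ is unbiased. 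Crucially, property~4 of Lemma~\ref{lemma:properties} gives $\norm{A_{\bm{u}}}{\infty}\le 1$, so every eigenvalue lies in $[-1,1]$ and thus each $O_{1\vert k}\in[-1,1]$; moreover the (sampled point, measurement) pairs are i.i.d.\ across $k$. I would also note that the protocol's $N_k$ collapses to $N_k=1$ exactly when $K\ge 8\varepsilon^{-2}\ln(4/\delta)$, so one shot per sampled point is used and the total sample count equals $K$.

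With unbiasedness and the uniform bound $O_{1\vert k}\in[-1,1]$, the concentration step is a direct application of Hoeffding's inequality to the i.i.d.\ average $\wt{Y}$; no two-stage Chebyshev-then-Hoeffding argument as in Sec.~\ref{Per: wignerrank} is needed, because the bounded estimand already controls the fluctuations. This gives
\begin{align}
    \Pr\big\{\vert\wt{Y}-\cF(\psi,\rho)\vert\ge\varepsilon\big\}
    \le 2\exp\!\left(-\frac{2(K\varepsilon)^2}{4K}\right)
    = 2\exp\!\left(-\frac{K\varepsilon^2}{2}\right).
\end{align}
Substituting $K=\lceil 8\varepsilon^{-2}\ln(4/\delta)\rceil$ makes the right-hand side at most $\delta$ (in fact far smaller, reflecting the conservative constant), establishing the $(\varepsilon,\delta)$ guarantee. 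Since one copy of $\rho$ is consumed per sampled point, the total sample complexity is $K=\cO\big(\varepsilon^{-2}\ln(1/\delta)\big)$, independent of $d^n$ and of any nonstabilizerness measure.

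The step I expect to be the crux is not the concentration bound, which is routine once boundedness holds, but recognizing and exploiting that Hudson's theorem simultaneously makes $W_\psi$ a genuine, uniform, directly-sampleable distribution \emph{and} forces the estimand $\wh{W}_\rho(\bm{u})=\tr[A_{\bm{u}}\rho]$ into $[-1,1]$. In the general protocols the reweighting factor ($1/W_\psi$ in the $\ell_2$ protocol, $\Delta_\psi=2^{\cM(\psi)}$ in the $\ell_1$ protocol) is what inflates the variance and hence the sample complexity; the content of this proposition is that for pure stabilizer states this factor is identically $1$, so a single-shot, variance-free Hoeffding bound suffices.
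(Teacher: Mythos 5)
Your proposal is correct and uses the same ingredients as the paper's proof in Appendix~\ref{appx:pro:stab}: the rewriting of Eq.~\eqref{eq:pure-fidelity-3}, Theorem~\ref{thm:Hudson-theorem} to make $\{W_\psi(\bm{u})\}_{\bm{u}}$ a directly sampleable uniform distribution with no importance-reweighting factor, and property~4 of Lemma~\ref{lemma:properties} to bound the recorded eigenvalues in $[-1,1]$. The one genuine difference is in how you organize the concentration step. The paper keeps the two-stage template of Protocols~\ref{protocol:fidelity-estimation-DFE} and~\ref{protocol:fidelity-estimation-DFE-mana}: it first applies Hoeffding to the ideal average $Y$ of the exact values $\wh{W}_\rho(\bm{u}_k)$ (error $\varepsilon/2$, failure $\delta/2$), then applies Hoeffding again to $\wt{Y}-Y$ to derive $N_k=1$, and combines the two via a union bound. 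You instead observe that the pairs $(\bm{u}_k, O_{1\vert k})$ are i.i.d.\ and that $O_{1\vert k}\in[-1,1]$ with $\bE[O_{1\vert k}]=\cF(\psi,\rho)$, so a single application of Hoeffding to $\wt{Y}=\tfrac1K\sum_k O_{1\vert k}$ already gives $\Pr\{\vert\wt{Y}-\cF(\psi,\rho)\vert\ge\varepsilon\}\le 2\exp(-K\varepsilon^2/2)\le\delta$ for the stated $K$. Your version is more streamlined and avoids the $\varepsilon/2$, $\delta/2$ splitting (hence certifies the bound with slack to spare), at the cost of positing $N_k=1$ rather than deriving it; the paper's version buys uniformity of presentation with the non-well-conditioned protocols, where the two-stage structure is genuinely needed because the per-point estimators there require many shots. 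Both yield the same $K$ and the same $\cO(\varepsilon^{-2}\ln(1/\delta))$ sample complexity.
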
 

\begin{proof}
    The proof can be found in the performance analysis of this protocol in the Appendix~\ref{appx:pro:stab}.
\end{proof} 
Proposition~\ref{pro:stab} demonstrates that both the number of phase-space point operators and the sample complexity are independent of the system size for stabilizer states. This yields a more efficient performance compared to both Protocol~\ref{protocol:fidelity-estimation-DFE} and Protocol~\ref{protocol:fidelity-estimation-DFE-mana}, whose sample complexities scale as $\cO\Big(
       \frac{1}{\varepsilon^2\delta}+\frac{1}{\varepsilon^2}\ln\frac{1}{\delta}\Big)$.

\subparagraph*{Summary of the performance of Protocol~\ref{protocol:stabilizer-fidelity-estimation-DFE}}

\begin{itemize}
  \item \textbf{Sample complexity:} $\cO\Big(1/\varepsilon^2\ln(1/\delta)\Big)$. 
  \item \textbf{Measurement complexity:} $\cO(d^n)$; All stabilizing phase-space point operators of $\psi$.
  \item \textbf{Measurement type:} single-qudit phase-space point measurements are enough.
\end{itemize}


\section{Fidelity estimation of quantum channels via Wigner rank}
\label{fidelity-estimation-Chan-norm2}

In section~\ref{fidelity-estimation-Chan-norm2}, we extend the fidelity estimation task of quantum states via Wigner rank into unitary quantum channels.

\subsection{Protocol}

The \emph{entanglement fidelity} between our desired unitary $\cU$ and the actual channel $\Lambda$ is given by
\begin{subequations}\label{eq:chan-fidelity-2}
\begin{align}
  \cF(\cU,\Lambda) 
&= \frac{1}{d^{2n}}\tr[\cU^{\dagger}\Lambda] \\
&= \frac{1}{d^{2n}}\sum_{\bm{u},\bm{v}}W_{\cU}(\bm{v}|\bm{u})W_{\Lambda}(\bm{v}|\bm{u}) \\
&= \sum_{\bm{u},\bm{v}} \frac{W^2_{\cU}(\bm{v}|\bm{u}) }{d^{2n}}\times \frac{W_{\Lambda}(\bm{v}|\bm{u})}{W_{\cU}(\bm{v}|\bm{u})} \\
&\equiv \sum_{\bm{u},\bm{v}}{\rm Pr}(\bm{v},\bm{u}) X_{\bm{v},\bm{u}},
\end{align}
\end{subequations}
where $\bm{v},\bm{u} \in \bZ_d^n\times\bZ_d^n$, and $\tr[\cU^{\dagger}\Lambda]$ is the Hilbert-Schmidt inner product between $\cU$ and $\Lambda$. Similar to the case of state fidelity estimation, ${\rm Pr}(\bm{v},\bm{u})$ denotes an valid probability distribution and $X_{\bm{v},\bm{u}}$ denotes a random variable with ${\rm Pr}(\bm{v},\bm{u}) = \frac{W^2_{\cU} (\bm{v}|\bm{u}) }{d^{2n}}$ and $X_{\bm{v},\bm{u}} = \frac{W_{\Lambda}(\bm{v}|\bm{u})}{W_{\cU}(\bm{v}|\bm{u})}$.  

Similar to the case of states, we convert this estimation problem to a sampling problem. The estimation protocol is summarized in Protocol~\ref{protocol:fidelity-estimation-Chan-norm2}. We show that the DFE of unitary quantum channels via the $\ell_2$ norm is quantified by the Wigner rank of a quantum channel in Theorem~\ref{theo:channel_Wigner_rank}:

\begin{theorem} [Informal]
\label{theo:channel_Wigner_rank}
    Given a unitary channel $\cU \in  \cL(\cH_A^{\ox n}, \cH_{A'}^{\ox n})$, an additive error $\varepsilon$, and success probability $1- \delta$, we
    propose a fidelity estimation protocol via Wigner Rank, the number of phase-space point operators sampled is  $K=\lceil 8/\varepsilon^2\delta\rceil$,
    the sample complexity is $\cO \big(1/(\varepsilon^2\delta) + \frac{2^{\chi_{\log}(\cU)}}{\varepsilon^2}\ln\frac{1}{\delta}\big)$, where $\chi_{\log}(\cU)$ denotes logarithmic Wigner rank of $\cU$.
\end{theorem}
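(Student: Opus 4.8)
The plan is to transcribe the performance analysis of the state protocol in Section~\ref{sec:fidelity-estimation-state-l2} into the channel setting, treating $\cU$ and $\Lambda$ through their discrete Wigner transition functions. The first task is to confirm that $\mathrm{Pr}(\bm v,\bm u)=W^2_{\cU}(\bm v|\bm u)/d^{2n}$ is a genuine probability distribution, i.e., the channel analogue of the normalization in Lemma~\ref{lemma:inner-product}. I would prove this directly from orthogonality: writing $W_{\cU}(\bm v|\bm u)=\tfrac{1}{d^n}\tr[A_{\bm v}\cU(A_{\bm u})]$ and using $\tr[A_{\bm v}M]=d^nW_M(\bm v)$ together with $\sum_{\bm v}W_M^2(\bm v)=\tr[M^2]/d^n$ (Lemma~\ref{lemma:inner-product} with $M=N$), one gets $\sum_{\bm v}\tr[A_{\bm v}\cU(A_{\bm u})]^2=d^n\tr[\cU(A_{\bm u})^2]=d^n\tr[A_{\bm u}^2]=d^{2n}$, where the second equality is unitarity and the last is $\tr[A_{\bm u}^2]=d^n$ from Lemma~\ref{lemma:properties}. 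Summing over $\bm u$ gives $\sum_{\bm u,\bm v}W^2_{\cU}(\bm v|\bm u)=d^{2n}$, so $\mathrm{Pr}$ is valid; the inner-product identity in Eq.~\eqref{eq:chan-fidelity-2} follows from the same orthogonality.

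Next I would run the two-stage estimator $O_{j|k}\to\wt X_{\bm v_k,\bm u_k}\to\wt Y$ verbatim from the state case. Unbiasedness $\bE[Y]=\cF(\cU,\Lambda)$ is immediate from Eq.~\eqref{eq:chan-fidelity-2}. The quantitative heart is the variance of the per-point variable: $\bE[X^2_{\bm v,\bm u}]=\tfrac{1}{d^{2n}}\sum_{\bm u,\bm v}W^2_{\Lambda}(\bm v|\bm u)$, and applying Lemma~\ref{lemma:inner-product} to the normalized Choi state $\rho_{\Lambda}=\cJ_{\Lambda}/d^n$ on the $2n$-qudit space identifies this sum with $\tr[\rho_{\Lambda}^2]$, so that $\opn{Var}[X_{\bm v,\bm u}]=\tr[\rho_{\Lambda}^2]-\cF(\cU,\Lambda)^2\le1$ in exact parallel with $\tr[\rho^2]-\tr[\psi\rho]^2\le1$. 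Chebyshev's inequality with $K=\lceil8/\varepsilon^2\delta\rceil$ then yields $\Pr\{|Y-\cF|\ge\varepsilon/2\}\le\delta/2$, and Hoeffding's inequality applied to the bounded single-shot contributions, with the per-point budget $N_{\bm v_k,\bm u_k}$ chosen as in Protocol~\ref{protocol:fidelity-estimation-Chan-norm2}, controls $|\wt Y-Y|$ by $\varepsilon/2$ with probability $\ge1-\delta/2$; a union bound closes the estimate.

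The sample-complexity bound then drops out of the expected measurement budget. Averaging $N_{\bm v,\bm u}=\lceil 8\Delta^2/(KW^2_{\cU}(\bm v|\bm u)\varepsilon^2)\ln(4/\delta)\rceil$ against $\mathrm{Pr}(\bm v,\bm u)\propto W^2_{\cU}(\bm v|\bm u)$ cancels the $W^2_{\cU}$ weights and leaves a sum over the Wigner set of $\cU$, of cardinality $\chi(\cU)=|\cW_{\cU}|$. With the single-shot range $\Delta=1$ and the definition $\chi_{\log}(\cU)=\log\chi(\cU)-\log d^{2n}$, the factor $\chi(\cU)/d^{2n}=2^{\chi_{\log}(\cU)}$ emerges and reproduces the advertised $\cO\big(1/(\varepsilon^2\delta)+2^{\chi_{\log}(\cU)}\varepsilon^{-2}\ln(1/\delta)\big)$.

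I expect the genuine obstacle to lie not in the probabilistic bookkeeping but in Step~2, because $A_{\bm u}$ is not a quantum state and so cannot be fed into $\Lambda$ the way $\rho$ is measured in the state protocol. The resolution I would pursue exploits the spectral structure of phase-space point operators in odd prime dimension, whose eigenvalues are exactly $\pm1$: decompose $A_{\bm u}=P^{\bm u}_{+}-P^{\bm u}_{-}$ into its $\pm1$ eigenprojectors of dimensions $r_\pm=(d^n\pm1)/2$, sample a sign $s$ with probability $r_s/d^n$, prepare the maximally mixed state on the chosen eigenspace, apply $\Lambda$, measure $A_{\bm v}$ on the output to obtain an outcome in $\{\pm1\}$, and record it with the sign $s$. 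The care required is to verify that this single-round variable is an \emph{unbiased} estimator of $W_{\Lambda}(\bm v|\bm u)$ and lies in $[-1,1]$, fixing $\Delta=1$; note that this differs from the state protocol's $\Delta=1/d^n$, but it is precisely compensated by the extra $d^{2n}$ in the definition of $\chi_{\log}(\cU)$, so the final sample complexity still matches Theorem~\ref{theo:state_Wigner_rank}. Once this unbiased bounded estimator is in hand, the remainder is a line-by-line transcription of the state analysis.
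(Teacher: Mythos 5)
Your proposal is correct and follows essentially the same route as the paper's Appendix F: importance sampling with respect to $\mathrm{Pr}(\bm v,\bm u)=W^2_{\cU}(\bm v|\bm u)/d^{2n}$, a Chebyshev bound on the ideal estimator with $K=\lceil 8/\varepsilon^2\delta\rceil$ via the variance bound $\opn{Var}[X_{\bm v,\bm u}]\le 1$, a Hoeffding bound on the finite-precision estimator with the stated $N_k$, a union bound, and the cancellation of the $W^2_{\cU}$ weights that leaves $\chi(\cU)/d^{2n}=2^{\chi_{\log}(\cU)}$ in the expected sample count. Your added verifications — the normalization $\sum_{\bm u,\bm v}W^2_{\cU}(\bm v|\bm u)=d^{2n}$ via unitarity, the identification of $\tfrac{1}{d^{2n}}\sum_{\bm u,\bm v}W^2_{\Lambda}(\bm v|\bm u)$ with the purity of the normalized Choi state, and the explicit single-shot estimator built from the $\pm1$ eigenprojectors of $A_{\bm u}$ — are all sound and in fact supply details that the paper asserts or leaves implicit without changing the argument.
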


\begin{proof}
    The proof remains in the performance analysis of this protocol in the Appendix~\ref{appx:theo:channel_Wigner_rank}.
\end{proof} 

\begin{algorithm}[!hbtp] 
\caption{Entanglement Fidelity estimation of channels via Wigner rank.}
\label{protocol:fidelity-estimation-Chan-norm2}
\begin{algorithmic}[1]
\State{\textbf{Inputs:} \\
        \hskip1em\textbullet~~~The known target channel: $\cU$; \\
        \hskip1em\textbullet~~~The unknown prepared channel: $\Lambda$; \\
        \hskip1em\textbullet~~~The fixed additive error $\varepsilon$; \\
        \hskip1em\textbullet~~~The failure probability $\delta$.}
\State{\textbf{Output:} Estimation $\wt{Y}$, such that $\cF(\cU,\Lambda)$ lies 
        in $[\wt{Y}-\varepsilon, \wt{Y}+\varepsilon]$ with probability larger than $1-\delta$.}
\State{{Compute $K=\lceil 8/\varepsilon^2\delta)\rceil$.} 
       \Comment{{\color{gray}Number of phase-space point operators to be sampled}}}
\State{Set $\Delta=1$.}
\For{$k = 1, 2, \cdots, K$}\Comment{{\color{gray}Phase-space point operator sampling procedure}}     
\State{Randomly sample a pair of phase-space point operators $(A_{\bm{v}_k}, A_{\bm{u}_k})$ w.r.t. ${\rm Pr}(\bm{v}|\bm{u})=\frac{W^2_{\cU}(\bm{v},\bm{u})}{d^{2n}}$.}
\State{Compute the number of measurement rounds $N_k$ as 
\begin{align}
    {N_k = \left\lceil\frac{8\Delta}{KW^2_{\cU}(\bm{v}_k,\bm{u}_k)\varepsilon^2}\ln\frac{4}{\delta}\right\rceil.}
\end{align}
}
    \For{$j = 1, 2, \cdots, N_k$}\Comment{{\color{gray}Estimation procedure for $X_{\bm{u}}$}}
        \State{Prepare unitary channel $\cU$, apply it on the $A_{\bm{u}_k}$ and measure the outcome on the $A_{\bm{v}_k}$ basis.}
        \State{Record the eigenvalue $O_{j\vert k}\in[-\Delta,\Delta]$ of the corresponding outcome.}
    \EndFor
    \State{Compute the following empirical estimator from the experimental data $\{O_{j\vert k}\}_j$:
        \begin{align}
            \widetilde{X}(\bm{v}_k,\bm{u}_k) :=  \frac{1}{N_k}\sum_{j=1}^{N_k}\frac{O_{j\vert k}}{W_{\cU}(\bm{v}_k|\bm{u}_k)}.
            \qquad{\color{gray}\sim X(\bm{v}_k,\bm{u}_k)=\frac{W_{\Lambda}(\bm{v}_k|\bm{u}_k)}{W_{\cU}(\bm{v}_k|\bm{u}_k)}}
        \end{align}
    }
\EndFor
\State{Compute the following empirical estimator from the data $\{\widetilde{X}(\bm{v}_k,\bm{u}_k)\}_k$:
\begin{align}
    \widetilde{Y} := \frac{1}{K}\sum_{k=1}^K\widetilde{X}(\bm{v}_k,\bm{u}_k).
            \qquad{\color{gray}\sim \frac{\tr[\cU^{\dagger}\Lambda]}{d^{2n}}}
\end{align}
}
\State{Output $\widetilde{Y}$ as an unbiased estimator of $\cF(\cU,\Lambda) = \frac{\tr[\cU^{\dagger}\Lambda]}{d^{2n}}$.}
\end{algorithmic}
\end{algorithm}

\paragraph*{Summary of the performance of Protocol~\ref{protocol:fidelity-estimation-Chan-norm2}}

\begin{itemize}
  \item \textbf{Sample complexity:} 
    $ \cO\big(1/(\varepsilon^2\delta) + \frac{2^{\chi_{\log}(\cU)}}{\varepsilon^2}\ln\frac{1}{\delta}\big)$, where $\chi_{\log}(\cU)$. (In the worst case, $\chi(\cU) = d^{4n}$)
  \item \textbf{Measurement complexity:} $\cO(d^{4n})$; All phase-space point operators in $\bZ_d^n\times\bZ_d^n$.
  \item \textbf{Measurement type:} single-qudit phase-space point measurements are enough.
\end{itemize}

For Clifford operations, $\chi(\cU) =  d^{2n}$, the sample complexity deduces to $\cO\big(1/(\varepsilon^2\delta) + \frac{1}{\varepsilon^2}\ln\frac{1}{\delta}\big)$. 

\section{Fidelity estimation of quantum channels via mana}\label{fidelity-estimation-chan-norm1}

In section~\ref{fidelity-estimation-chan-norm1}, we extend the fidelity estimation task of quantum states via mana into unitary quantum channels.

\subsection{Protocol} 
 
The fidelity between our desired unitary $\cU$ and the actual channel $\Lambda$ is given by
\begin{subequations}\label{eq:chan-fidelity-1}
\begin{align}
  \cF(\cU,\Lambda) 
&= \frac{1}{d^{2n}}\tr[\cU^{\dagger}\Lambda] \\
&= \frac{1}{d^{4n}}\sum_{\bm{v}_1,\bm{u}_1,\bm{v}_2,\bm{u}_2}W_{\cU}(\bm{v}_1|\bm{u}_1)W_{\Lambda}(\bm{v}_2|\bm{u}_2)\tr[A_{\bm{u}_1}^TA_{\bm{u}_2}^T]\tr[A_{\bm{v}_1}A_{\bm{v}_2}] \\
&= \frac{1}{d^{4n}}\sum_{\bm{v}_1,\bm{u}_1,\bm{v}_2,\bm{u}_2}W_{\cU}(\bm{v}_1|\bm{u}_1)W_{\Lambda}(\bm{v}_2|\bm{u}_2)[d^{n}\delta_{\bm{u}_1,\bm{u}_2}][d^{n}\delta_{\bm{v}_1,\bm{v}_2}] \\
&= \frac{1}{d^{2n}}\sum_{\bm{u},\bm{v}}W_{\cU}(\bm{v}|\bm{u})W_{\Lambda}(\bm{v}|\bm{u}) \\
&= \sum_{\bm{u},\bm{v}} \frac{|W_{\cU}(\bm{v}|\bm{u})|}{\beta_{\cU}}\times [\frac{\operatorname{sgn}(W_{\cU}(\bm{v}|\bm{u}))W_{\Lambda}(\bm{v}|\bm{u})\beta_{\cU}}{d^{2n}}] \\
&\equiv \sum_{\bm{u},\bm{v}}{\rm Pr}(\bm{v},\bm{u}) X_{\bm{v},\bm{u}},
\end{align}
\end{subequations}

where $\bm{v},\bm{u} \in \bZ_d^n\times\bZ_d^n$, $\tr[\cU^{\dagger}\Lambda]$ is the Hilbert-Schmidt inner product between $\cU$ and $\Lambda$, $\operatorname{sgn}:\mathbb{R}\to\{1,-1\}$ is the sign function and $\beta_{\cN}:= \sum_{\bm{u},\bm{v}}|\cW_{\cN}(\bm{v}|\bm{u})|$.
${\rm Pr}(\bm{v},\bm{u}):= \frac{|W_{\cU}(\bm{v}|\bm{u})|}{\beta_{\cU}}$ is an valid probability distribution, and
\begin{align}
    X_{\bm{v},\bm{u}} :=\frac{\operatorname{sgn}(W_{\cU}(\bm{v}|\bm{u}))W_{\Lambda}(\bm{v}|\bm{u})\beta_{\cU}}{d^{2n}}
\end{align}
is a random variable. The estimation protocol is summarized in Protocol~\ref{protocol:fidelity-estimation-Chan-norm1}. We demonstrate that the DFE of unitary quantum channels via the $\ell_1$ norm is quantified by the mana of a quantum channel in Theorem~\ref{theo:channel_mana}:

\begin{theorem} [Informal]
\label{theo:channel_mana}
    Given a unitary channel $\cU \in  \cL(\cH_A^{\ox n}, \cH_{A'}^{\ox n})$, an additive error $\varepsilon$, and success probability $1- \delta$, we propose a fidelity estimation protocol via mana, the number of phase-space point operators sampled is  $K=\lceil 8\cdot2^{[\cM(\cU)]}/\varepsilon^2\delta)\rceil$,
    the sample complexity is $\cO\big(\frac{2^{[\cM(\cU)]}}{\varepsilon^2\delta} 
        + \frac{2^{[2\cdot\cM(\cU)]}}{\varepsilon^2}\ln\frac{1}{\delta}\big)$. 
\end{theorem}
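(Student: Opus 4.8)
The plan is to reuse the two-stage estimation template of the state-mana protocol (Theorem~\ref{theo:state_mana}) and the detailed $\ell_2$ analysis of Section~\ref{Per: wignerrank}, now applied to the channel estimator of Protocol~\ref{protocol:fidelity-estimation-Chan-norm1} built from the decomposition in Eq.~\eqref{eq:chan-fidelity-1}. Concretely, I would first treat the outer sampling of phase-space-point pairs $(\bm{v}_k,\bm{u}_k)$ as an idealized Monte Carlo estimate $Y=\frac{1}{K}\sum_{k}X_{\bm{v}_k,\bm{u}_k}$ of $\cF(\cU,\Lambda)$ controlled by Chebyshev's inequality, then account for the finite-precision measurement of each $X_{\bm{v}_k,\bm{u}_k}$ via Hoeffding's inequality, and finally combine the two failure probabilities through a union bound.

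Unbiasedness is immediate, since Eq.~\eqref{eq:chan-fidelity-1} already writes $\cF(\cU,\Lambda)=\sum_{\bm{u},\bm{v}}\Pr(\bm{v},\bm{u})X_{\bm{v},\bm{u}}$, giving $\bE[Y]=\cF(\cU,\Lambda)$. The heart of the argument is the second-moment bound. I would compute $\bE[X^2]=\frac{\beta_{\cU}}{d^{4n}}\sum_{\bm{u},\bm{v}}|W_{\cU}(\bm{v}|\bm{u})|\,W_{\Lambda}^2(\bm{v}|\bm{u})$ and then invoke three ingredients: (i) the pointwise bound $|W_{\cU}(\bm{v}|\bm{u})|\le 1$, which follows because $\cU(A_{\bm{u}})=UA_{\bm{u}}U^{\dagger}$ shares the spectrum of $A_{\bm{u}}$ (whose eigenvalues are all $\pm 1$ by Lemma~\ref{lemma:properties}, so $\norm{\cU(A_{\bm{u}})}{1}=d^n$) together with $\norm{A_{\bm{v}}}{\infty}\le 1$; (ii) the $\ell_2$ identity $\sum_{\bm{u},\bm{v}}W_{\Lambda}^2(\bm{v}|\bm{u})=\tr[\cJ_{\Lambda}^2]\le d^{2n}$, obtained by expanding the Wigner representation of the Choi matrix with the orthogonality relation $\tr[A_{\bm{u}}A_{\bm{u}'}]=d^n\delta(\bm{u},\bm{u}')$ and noting that the normalized Choi state $\cJ_{\Lambda}/d^n$ has purity at most one; and (iii) the inequality $\beta_{\cU}=\sum_{\bm{u}}\sum_{\bm{v}}|W_{\cU}(\bm{v}|\bm{u})|\le d^{2n}\cdot 2^{\cM(\cU)}$ from the channel-mana definition. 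Chaining these yields $\bE[X^2]\le\beta_{\cU}/d^{2n}\le 2^{\cM(\cU)}$, hence $\operatorname{Var}[X]\le 2^{\cM(\cU)}$ and $\operatorname{Var}[Y]\le 2^{\cM(\cU)}/K$; Chebyshev with $K=\lceil 8\cdot 2^{\cM(\cU)}/\varepsilon^2\delta\rceil$ then gives $\Pr\{|Y-\cF|\ge\varepsilon/2\}\le\delta/2$.

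For the finite-precision stage I would write $\wt{Y}=\sum_{k,j}\wt{O}_{j|k}$ with each rescaled outcome $\wt{O}_{j|k}=\operatorname{sgn}(W_{\cU}(\bm{v}_k|\bm{u}_k))\beta_{\cU}O_{j|k}/(KN_kd^{2n})$ confined to an interval of half-width $\beta_{\cU}/(KN_kd^{2n})$, since $O_{j|k}\in[-1,1]$ (here $\Delta=1$). Substituting into Hoeffding's inequality and forcing the bound to equal $\delta/2$ requires $\sum_{k}\beta_{\cU}^2/(K^2N_kd^{4n})=\varepsilon^2/(8\ln(4/\delta))$; distributing this equally across the $K$ terms recovers the protocol's choice $N_k=\lceil 8\beta_{\cU}^2/(Kd^{4n}\varepsilon^2)\ln(4/\delta)\rceil\le\lceil 8\cdot 2^{2\cM(\cU)}/(K\varepsilon^2)\ln(4/\delta)\rceil$, using $\beta_{\cU}^2/d^{4n}\le 2^{2\cM(\cU)}$. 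A union bound over the two $\delta/2$ events delivers the $1-\delta$ guarantee, and summing $N=\sum_{k}N_k\le K+8\cdot 2^{2\cM(\cU)}\varepsilon^{-2}\ln(4/\delta)$ produces the stated $\cO\big(2^{\cM(\cU)}/(\varepsilon^2\delta)+2^{2\cM(\cU)}\varepsilon^{-2}\ln(1/\delta)\big)$.

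I expect the main obstacle to be the tightness of the variance bound rather than the concentration steps, which are routine. The crude estimate $W_{\Lambda}^2\le 1$ only yields $\bE[X^2]\le\beta_{\cU}^2/d^{4n}\le 2^{2\cM(\cU)}$ and hence a looser $K$; obtaining the advertised $K\propto 2^{\cM(\cU)}$ hinges on the cancellation in which the dimensional factor from $\beta_{\cU}\le d^{2n}2^{\cM(\cU)}$ is absorbed by the $\ell_2$ bound $\tr[\cJ_{\Lambda}^2]\le d^{2n}$. A secondary subtlety, which I would treat separately from the sample-complexity accounting, is the physical realizability of an unbiased estimator $O_{j|k}$ of $W_{\Lambda}(\bm{v}_k|\bm{u}_k)=d^{-n}\tr[A_{\bm{v}_k}\Lambda(A_{\bm{u}_k})]$ with outcomes in $[-1,1]$, because $A_{\bm{u}_k}$ is not a quantum state; this is handled by decomposing $A_{\bm{u}_k}$ into its rescaled $\pm 1$ eigenprojectors so that the required expectation is reconstructed from channel outputs on genuine input states.
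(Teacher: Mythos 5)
Your proposal is correct and follows essentially the same route as the paper's Appendix proof: the same decomposition of $\cF(\cU,\Lambda)$ into $\sum_{\bm u,\bm v}\Pr(\bm v,\bm u)X_{\bm v,\bm u}$, the same Chebyshev--Hoeffding--union-bound template, the same variance bound $\operatorname{Var}[X]\le\beta_{\cU}/d^{2n}\le 2^{\cM(\cU)}$ obtained by combining $|W_{\cU}(\bm v|\bm u)|\le 1$ with the $\ell_2$ bound on $\sum_{\bm u,\bm v}W_\Lambda^2(\bm v|\bm u)$, and the same choices of $K$ and $N_k$ leading to the stated sample complexity. Your explicit justifications of $|W_{\cU}(\bm v|\bm u)|\le 1$ via the $\pm1$ spectrum of $A_{\bm u}$ and of $\beta_{\cU}\le d^{2n}2^{\cM(\cU)}$ are welcome additions that the paper leaves implicit, but they do not change the argument.
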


\begin{proof}
    The proof remains in the performance analysis of protocol~\ref{protocol:fidelity-estimation-Chan-norm1} in the Appendix~\ref{appx:theo:channel_mana}.
\end{proof} 

\begin{algorithm}[!hbtp] 
\caption{Entanglement Fidelity estimation of channels via mana.} 
\label{protocol:fidelity-estimation-Chan-norm1}
\begin{algorithmic}[1]
\State{\textbf{Inputs:} \\
        \hskip1em\textbullet~~~The known target channel: $\cU$; \\
        \hskip1em\textbullet~~~The unknown prepared channel: $\Lambda$; \\
        \hskip1em\textbullet~~~The fixed additive error $\varepsilon$; \\
        \hskip1em\textbullet~~~The failure probability $\delta$.}
\State{\textbf{Output:} Estimation $\wt{Y}$, such that $\cF(\cU,\Lambda)$ lies 
        in $[\wt{Y}-\varepsilon, \wt{Y}+\varepsilon]$ with probability larger than $1-\delta$.}
\State{{Compute $K=\lceil 8\Delta_{\cU}/\varepsilon^2\delta)\rceil$.} 
       \Comment{{\color{gray}Number of phase-space point operators to be sampled}}}
\For{$k = 1, 2, \cdots, K$}\Comment{{\color{gray}Phase-space point operator sampling procedure}}     
\State{Randomly sample a pair of phase-space point operators $(A_{\bm{v}_k}, A_{\bm{u}_k})$ w.r.t. ${\rm Pr}(\bm{v}|\bm{u})=\frac{|W_{\cU}(\bm{v}|\bm{u})|}{\beta_{\cU}}$.}
\State{Compute the number of measurement rounds $N_k$ as 
\begin{align}
    {N_k = \left\lceil\frac{8\Delta_{\cU}}{K\varepsilon^2}\ln\frac{4}{\delta}\right\rceil.}
\end{align}
}
    \For{$j = 1, 2, \cdots, N_k$}\Comment{{\color{gray}Estimation procedure for $X_{\bm{u}}$}}
        \State{Prepare unitary channel $\cU$, apply it on the $A_{\bm{u}_k}$ and measure the outcome on the $A_{\bm{v}_k}$ basis.}
        \State{Record the eigenvalue $O_{j\vert k}\in[-1,1]$ of the corresponding outcome.}
    \EndFor
    \State{Compute the following empirical estimator from the experimental data $\{O_{j\vert k}\}_j$:
        \begin{align}
            \widetilde{X}(\bm{v}_k,\bm{u}_k) :=  \frac{1}{N_k}\sum_{j=1}^{N_k}\operatorname{sgn}(W_{\cU}(\bm{v}|\bm{u}))O_{j\vert k}\Delta_{\cU}.
            \qquad{\color{gray}\sim 
            X(\bm{v}_k,\bm{u}_k)= \operatorname{sgn}(W_{\cU}(\bm{v}|\bm{u}))W_{\Lambda}(\bm{v}_k|\bm{u}_k)\Delta_{\cU}
            }
        \end{align}
    }
\EndFor
\State{Compute the following empirical estimator from the data $\{\widetilde{X}(\bm{v}_k,\bm{u}_k)\}_k$:
\begin{align}
    \widetilde{Y} := \frac{1}{K}\sum_{k=1}^K\widetilde{X}(\bm{v}_k,\bm{u}_k).
            \qquad{\color{gray}\sim \frac{1}{d^{2n}}\tr[\cU^{\dagger}\Lambda]}
\end{align}
}
\State{Output $\widetilde{Y}$ as an unbiased estimator of $\cF(\cU,\Lambda) = \frac{\tr[\cU^{\dagger}\Lambda]}{d^{2n}}$.}
\end{algorithmic}
\end{algorithm}

\paragraph*{Summary of the performance of Protocol~\ref{protocol:fidelity-estimation-Chan-norm1}}

\begin{itemize}
  \item \textbf{Sample complexity:} 
     $\cO({\frac{\Delta_{\cU}}{\varepsilon^2\delta} + \Delta_{\cU}^2\times\frac{1}{\varepsilon^2}\ln\frac{1}{\delta}})$, where $\Delta_{\cU} = 2^{[\cM(\cU)]}$. 
  \item \textbf{Measurement complexity:} $\cO(d^{4n})$; All phase-space point operators in $\bZ_d^n\times\bZ_d^n$.
  \item \textbf{Measurement type:} single-qudit phase-space point measurements are enough.
\end{itemize}

Much like the well-conditioned states, for the Clifford operations, it holds that $\Delta_{\cU} =  1$, and thus the sample complexity deduces to $\cO({\frac{1}{\varepsilon^2\delta} + \frac{1}{\varepsilon^2}\ln\frac{1}{\delta}})$, which is consistent with the performance of Protocol~\ref{protocol:fidelity-estimation-Chan-norm2}.


\subsection{Well-conditioned channels}
If we concentrate on estimating the fidelity of Clifford unitaries, we can also achieve exponential speep up (compared
to the above estimation Protocol~\ref{protocol:fidelity-estimation-Chan-norm2} and Pretocol~\ref{protocol:fidelity-estimation-Chan-norm1}).

\paragraph{Protocol}

Our protocol relies crucially on a fact that $\chi(\cU) = |\cW_{\cU}| = d^{2n}$ for a Clifford unitary channel $\cU \in \cC_{d^{2n}}$.
Thus we have:
\begin{align}\label{eq:clifford-fidelity}
  \cF(\cU,\Lambda) 
&= \frac{1}{d^{2n}}\tr[\cU^{\dagger}\Lambda] \\
&= \frac{1}{d^{2n}}\sum_{\bm{u},\bm{v}}W_{\cU}(\bm{v}|\bm{u})W_{\Lambda}(\bm{v}|\bm{u}) \\
&= \sum_{\bm{u},\bm{v}}{\rm Pr}(\bm{v},\bm{u}) X_{\bm{v},\bm{u}} \label{eq:cl-f},
\end{align}
where ${\rm Pr}(\bm{v},\bm{u}) := \frac{W_{\cU}(\bm{v}|\bm{u})}{d^{2n}}$, and $X_{\bm{v},\bm{u}} := W_{\Lambda}(\bm{v}|\bm{u})$.
we know that $W_{\cU}(\bm{v}|\bm{u})$ is an valid probability distribution; 
Actually, it is a uniform distribution. The estimation protocol is summarized in Protocol~\ref{protocol:clifford-fidelity-estimation-DFE}, whose corresponding cost is introduced in Proposition~\ref{pro:clifford}.

\begin{proposition}[Informal]
\label{pro:clifford}
    Given a Clifford unitary channel $\cU \in  \cL(\cH_A^{\ox n}, \cH_{A'}^{\ox n})$, an additive error $\varepsilon$, and success probability $1- \delta$, we propose an efficient fidelity estimation protocol, the number of phase-space point operators sampled is  $K=\lceil 8/\varepsilon^2\delta)\rceil$,
    the sample complexity is $\cO(
       \frac{1}{\varepsilon^2}\ln\frac{1}{\delta})$. 
\end{proposition}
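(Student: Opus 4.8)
The plan is to mirror the analysis of the well-conditioned state protocol (Proposition~\ref{pro:stab}), exploiting the fact that the Clifford structure collapses the importance-weighting distribution to the uniform one and keeps the estimator well-conditioned. First I would invoke the faithfulness argument already established for Clifford channels: since $\cU=U_{\bm{F},\bm{a}}$ is Clifford, $\cW_{\cU}(\bm{v}|\bm{u})=\delta_{\bm{v},\bm{F}\bm{u}+\bm{a}}\in\{0,1\}$, so exactly $d^{2n}$ of the $d^{4n}$ entries are nonzero and each equals $1$. Hence ${\rm Pr}(\bm{v},\bm{u})=W_{\cU}(\bm{v}|\bm{u})/d^{2n}$ is the uniform distribution over the support $\cW_{\cU}$, and Eq.~\eqref{eq:cl-f} reads $\cF(\cU,\Lambda)=\bE_{(\bm{v},\bm{u})\sim{\rm Pr}}[X_{\bm{v},\bm{u}}]$ with $X_{\bm{v},\bm{u}}=W_{\Lambda}(\bm{v}|\bm{u})$. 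This immediately yields unbiasedness of the sample mean $Y=\frac{1}{K}\sum_{k}X_{\bm{v}_k,\bm{u}_k}$.

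The crucial step is to bound the random variable. Because we sample only where $W_{\cU}=1$, the estimator never divides by a small Wigner weight (unlike the general channel protocol of Protocol~\ref{protocol:fidelity-estimation-Chan-norm2}), so $X_{\bm{v},\bm{u}}=W_{\Lambda}(\bm{v}|\bm{u})$ directly. Writing $W_{\Lambda}(\bm{v}|\bm{u})=\frac{1}{d^n}\tr[A_{\bm{v}}\Lambda(A_{\bm{u}})]$, the operator-norm Hölder inequality $|\tr[AB]|\leq\norm{A}{\infty}\norm{B}{1}$, property~4 of Lemma~\ref{lemma:properties} ($\norm{A_{\bm{v}}}{\infty}\leq1$), and the trace-norm contractivity of the CPTP map $\Lambda$ on the Hermitian operator $A_{\bm{u}}$ (so $\norm{\Lambda(A_{\bm{u}})}{1}\leq\norm{A_{\bm{u}}}{1}\leq d^n\norm{A_{\bm{u}}}{\infty}\leq d^n$) together give $|X_{\bm{v},\bm{u}}|=|W_{\Lambda}(\bm{v}|\bm{u})|\leq1$. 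Consequently $\opn{Var}[X_{\bm{v},\bm{u}}]\leq\bE[X_{\bm{v},\bm{u}}^2]\leq1$ and $\opn{Var}[Y]\leq1/K$, exactly as in the state case and with no exponential dependence on system size.

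With boundedness and $\cO(1)$ variance in hand, the concentration analysis follows the same two-stage template used throughout the paper: a Chebyshev bound controls $|Y-\cF(\cU,\Lambda)|$ from the $K$ sampled points, while Hoeffding's inequality controls the finite-precision error $|\wt{Y}-Y|$ accumulated over the $N_k$ measurement rounds (here each $O_{j\vert k}\in[-1,1]$ is an unbiased single-shot estimate of $W_{\Lambda}(\bm{v}_k|\bm{u}_k)$), and a union bound combines the two. Because the per-point variance is uniformly $\cO(1)$, choosing $K=\cO(\frac{1}{\varepsilon^2}\ln\frac{1}{\delta})$ collapses the measurement budget to $N_k=1$, so the total sample complexity is $K\cdot N_k=\cO(\frac{1}{\varepsilon^2}\ln\frac{1}{\delta})$, independent of the system size; this is the promised exponential speedup over Protocols~\ref{protocol:fidelity-estimation-Chan-norm2} and~\ref{protocol:fidelity-estimation-Chan-norm1}.

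I expect the main obstacle to be the measurement-level justification rather than the probabilistic bookkeeping: one must argue that a single experimental shot---applying $\Lambda$ to the input prepared from $A_{\bm{u}_k}$ and reading out the $A_{\bm{v}_k}$ measurement---produces an outcome in $[-1,1]$ whose expectation is exactly $W_{\Lambda}(\bm{v}_k|\bm{u}_k)$. This requires handling the eigendecomposition of the non-positive operator $A_{\bm{u}_k}$ (its $\pm1$ eigenspaces and the associated sign bookkeeping) so that the recorded $O_{j\vert k}$ is simultaneously unbiased and bounded. Once that is in place, every downstream estimate is driven by the single boundedness fact $|X_{\bm{v},\bm{u}}|\leq1$, and the remainder of the argument reproduces Proposition~\ref{pro:stab} verbatim.
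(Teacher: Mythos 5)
Your setup is right: the Clifford structure makes $W_{\cU}(\bm{v}|\bm{u})=\delta_{\bm{v},\bm{F}\bm{u}+\bm{a}}$, so the sampling distribution is uniform over the $d^{2n}$ nonzero entries and the random variable collapses to $X_{\bm{v},\bm{u}}=W_{\Lambda}(\bm{v}|\bm{u})$ with no division by small Wigner weights; your H\"older-type derivation of $|W_{\Lambda}(\bm{v}|\bm{u})|\leq 1$ is in fact more explicit than the paper's appeal to Lemma~\ref{lemma:properties}. However, there is a genuine gap at the crucial step: you propose to control $|Y-\cF(\cU,\Lambda)|$ with a Chebyshev bound and then claim $K=\cO(\frac{1}{\varepsilon^2}\ln\frac{1}{\delta})$ suffices. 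Chebyshev with $\opn{Var}[Y]\leq 1/K$ gives
\begin{align}
\opn{Pr}\left\{\vert Y-\cF(\cU,\Lambda)\vert\geq\varepsilon/2\right\}\leq\frac{4}{K\varepsilon^2},
\end{align}
and forcing the right-hand side down to $\delta/2$ requires $K\geq 8/(\varepsilon^2\delta)$; with your choice $K=\cO(\frac{1}{\varepsilon^2}\ln\frac{1}{\delta})$ the failure probability is only $\cO(1/\ln\frac{1}{\delta})$, which does not meet the target. An $\cO(1)$ variance alone can never buy the $\ln\frac{1}{\delta}$ dependence; that is exactly why the general Protocols~\ref{protocol:fidelity-estimation-Chan-norm2} and~\ref{protocol:fidelity-estimation-Chan-norm1} are stuck with a $1/(\varepsilon^2\delta)$ term.

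The missing idea --- and the entire point of the well-conditioned case, both here and in Proposition~\ref{pro:stab} --- is that the uniform boundedness $X_{\bm{v},\bm{u}}\in[-1,1]$, which you already established, lets you replace Chebyshev by Hoeffding in the \emph{first} stage as well: applying Hoeffding directly to the i.i.d. bounded samples gives
\begin{align}
\opn{Pr}\left\{\vert Y-\cF(\cU,\Lambda)\vert\geq\varepsilon/2\right\}\leq 2\exp\left(-\frac{K\varepsilon^2}{8}\right),
\end{align}
so that $K=\lceil 8/\varepsilon^2\ln(4/\delta)\rceil$ suffices, and with $N_k=1$ the total cost is $\cO(\frac{1}{\varepsilon^2}\ln\frac{1}{\delta})$. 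This is precisely what the paper's Appendix~\ref{appx:pro:clifford} does (note that the $K=\lceil 8/\varepsilon^2\delta\rceil$ appearing in the proposition statement is inconsistent with Protocol~\ref{protocol:clifford-fidelity-estimation-DFE} and with the claimed complexity; the Hoeffding-based $K$ is the one actually used). Your remaining concerns about the single-shot measurement model (eigenvalues of the non-positive $A_{\bm{u}}$ lying in $[-1,1]$ after rescaling, unbiasedness of $O_{j\vert k}$) are legitimate but are shared with every protocol in the paper and are not the obstacle here; the obstacle is the choice of concentration inequality.
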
 

\begin{proof}
    The proof remains in the performance analysis of protocol~\ref{protocol:clifford-fidelity-estimation-DFE} in the Appendix~\ref{appx:pro:clifford}.
\end{proof} 

\begin{algorithm}[!hbtp]
\caption{Fidelity estimation of clifford unitaries.}
\label{protocol:clifford-fidelity-estimation-DFE}
\begin{algorithmic}[1]
\State{\textbf{Inputs:} \\
        \hskip1em\textbullet~~~The known target Clifford unitary channel: $\cU$; \\
        \hskip1em\textbullet~~~The unknown prepared channel: $\Lambda$; \\
        \hskip1em\textbullet~~~The fixed additive error $\varepsilon$; \\
        \hskip1em\textbullet~~~The failure probability $\delta$.}
\State{\textbf{Output:} Estimation $\wt{Y}$, such that $\cF(\cU,\Lambda)$ lies 
        in $[\wt{Y}-\varepsilon, \wt{Y}+\varepsilon]$ with probability larger than $1-\delta$.}
\State{{Compute $K=\lceil 8/\varepsilon^2\ln(4/\delta)\rceil$.} 
       \Comment{{\color{gray}Number of phase-space point operators to be sampled}}}
\For{$k = 1, 2, \cdots, K$}\Comment{{\color{gray}Phase-space point operator sampling procedure}}     
\State{Randomly sample a pair of phase-space point operators $(A_{\bm{v}_k}, A_{\bm{u}_k})$ w.r.t. ${\rm Pr}(\bm{v}|\bm{u})=\frac{W_{\cU}(\bm{v}|\bm{u})}{d^{2n}}$.}
\State{Compute the number of measurement rounds $N_k$ as 
\begin{align}
    {N_k = \left\lceil\frac{8}{K\varepsilon^2}\ln\frac{4}{\delta}\right\rceil.}
\end{align}
}
    \For{$j = 1, 2, \cdots, N_k$}\Comment{{\color{gray}Estimation procedure for $X_{\bm{u}}$}}
        \State{Prepare unitary channel $\cU$, apply it on the $A_{\bm{u}_k}$ and measure the outcome on the $A_{\bm{v}_k}$ basis.}
        \State{Record the eigenvalue $O_{j\vert k}\in[-1,1]$ of the corresponding outcome.}
    \EndFor
    \State{Compute the following empirical estimator from the experimental data $\{O_{j\vert k}\}_j$:
        \begin{align}
            \widetilde{X}(\bm{v}_k,\bm{u}_k) :=  \frac{1}{N_k}\sum_{j=1}^{N_k}O_{j\vert k}.
            \qquad{\color{gray}\sim 
            X(\bm{v}_k,\bm{u}_k)= W_{\Lambda}(\bm{v}_k|\bm{u}_k)
            }
        \end{align}
    }
\EndFor
\State{Compute the following empirical estimator from the data $\{\widetilde{X}(\bm{v}_k,\bm{u}_k)\}_k$:
\begin{align}
    \widetilde{Y} := \frac{1}{K}\sum_{k=1}^K\widetilde{X}(\bm{v}_k,\bm{u}_k).
            \qquad{\color{gray}\sim \frac{1}{d^{2n}}\tr[\cU^{\dagger}\Lambda]}
\end{align}
}
\State{Output $\widetilde{Y}$ as an unbiased estimator of $\cF(\cU,\Lambda) = \frac{\tr[\cU^{\dagger}\Lambda]}{d^{2n}}$.}
\end{algorithmic}
\end{algorithm}

\subparagraph*{Summary of the performance of Protocol~\ref{protocol:clifford-fidelity-estimation-DFE}}

\begin{itemize}
  \item \textbf{Sample complexity:} $\cO( 1/\varepsilon^2\ln(1/\delta))$. 
  \item \textbf{Measurement complexity:} $\cO(d^{2n})$.
  \item \textbf{Measurement type:} single-qudit phase-space point measurements are enough.
\end{itemize}

\section{Concluding Remarks}
\label{sec:con}

In this work, we investigated the fundamental resource requirements for direct fidelity estimation in qudit systems with odd prime dimensions within the framework of the resource theory of nonstabilizerness. We demonstrated that direct fidelity estimation for general qudit states and unitaries requires exponential resources, as rigorously quantified by magic measures such as mana and Wigner rank. Importantly, we identified a class of states and operations—specifically those that are efficiently simulable classically—for which direct fidelity estimation remains practically feasible. Our results suggest that developing more efficient quantum certification protocols tailored to specific classes of useful states~\cite{yu2019optimal,zhu2019efficient,wang2019optimal}, rather than arbitrary states, may provide a more practical approach for quantum benchmarking.

The implementation of our protocols depends on the ability to efficiently measure phase-space point operators. Recent experimental advances in phase-space measurements for continuous-variable systems~\cite{waller2012phase,rundle2017simple} are encouraging, and suggest that similar progress in implementing discrete phase-space point operators~\cite{leonhardt1995quantum,khushwani2023selective,debrota2020discrete} may soon be achievable. Future work could explore similar phenomena and protocols in the continuous-variable quantum information setting. We expect these results to shed light on the role of non-stabilizer resources in quantum benchmarking and to guide the development of practical quantum benchmarking protocols in the FTQC regime.

\vspace{0.2in}
\textbf{Acknowledgements.}---Z.-P.L. and K.W. contributed equally to this work.  X.W. was partially supported by the National Key R\&D Program of China (Grant No.~2024YFB4504004),  the National Natural Science Foundation of China (Grant No.~12447107), the Guangdong Natural Science Foundation (Grant No.~2025A1515012834), the Guangdong Provincial Quantum Science Strategic Initiative (Grant No.~GDZX2403008, GDZX2403001), the Guangdong Provincial Key Lab of Integrated Communication, Sensing and Computation for Ubiquitous Internet of Things (Grant No.~2023B1212010007), the Quantum Science Center of Guangdong-Hong Kong-Macao Greater Bay Area, and the Education Bureau of Guangzhou Municipality.

\bibliographystyle{E:/Work/Wigner Rank/halpha}
\bibliography{E:/Work/Wigner Rank/references.bib}


\setcounter{secnumdepth}{2}
\appendix
\widetext

\section{Proof of Lemma~\ref{lemma:properties}}\label{appx:lemma:properties}

\begin{proof}[Proof of Lemma~\ref{lemma:properties}]

1. $A_\bu$ is Hermitian. Since $T_{\bm{u}}$ is unitary, it preserves Hermiticity, 
it suffices to show that $A_{\bm{0}}$ is Hermitian:
\begin{align}
    A_{\bm{0}}^\dagger
= \left(\frac{1}{d^n}\sum_{\bm{u}}T_{\bm{u}}\right)^\dagger
= \frac{1}{d^n}\sum_{\bm{u}}T_{\bm{u}}^\dagger
= \frac{1}{d^n}\sum_{\bm{u}}T_{-\bm{u}} 
= A_{\bm{0}}.
\end{align}

2. $\tr[A_\bu A_{\bu'}] = d^n\delta(\bu,\bu')$. 
First of all, we have
\begin{align}
\tr[A_{\bm{0}}^2]
= \tr[A_{\bm{0}}^\dagger A_{\bm{0}}]
= \frac{1}{d^{2n}}\sum_{\bm{u},\bm{v}}\tr[T^\dagger_{\bm{u}}T_{\bm{v}}] 
= \frac{1}{d^{2n}}\sum_{\bm{u}}\tr[T_{\bm{u}}^\dagger T_{\bm{u}}] 
= d^n.
\end{align}
On the other hand, it holds that
\begin{align}
 \tr[A_\bu A_{\bu'}]
&= \tr\left[T_\bu A_{\bm{0}} T_\bu^\dagger T_{\bu'} A_{\bm{0}} T_{\bu'}^\dagger \right] \\
&= \tr\left[(T_\bu^\dagger T_{\bu'})^\dagger A_{\bm{0}} (T_\bu^\dagger T_{\bu'}) A_{\bm{0}} \right] \\
&= \tr\left[T_{\bu'-\bu}^\dagger A_{\bm{0}} T_{\bu'-\bu} A_{\bm{0}} \right] \\
&= \begin{cases}
    \tr[A_{\bm{0}}^2], & \bu'=\bu \\
    0, & \bu'\neq\bu.
\end{cases}
\end{align}

3. $\tr[A_{\bm{u}}]=1$: Since $T_{\bm{u}}$ is unitary, it suffices to show that
\begin{align}
    1 = \tr[A_{\bm{0}}] = \frac{1}{d^n}\sum_{\bm{u}}\tr[T_{\bm{u}}] = \frac{1}{d^n}\tr[T_{\bm{0}}] .
\end{align}

4. We have
\begin{align}
  \norm{A_{\bu}}{\infty}
= \norm{T_{\bu} A_{\bm{0}} T_{\bu}}{\infty}
= \norm{A_{\bm{0}}}{\infty}
\leq \frac{1}{d^n}\sum_{\bm{u}}\norm{T_{\bu}}{\infty}
= 1,
\end{align}
where the second equality follows from that the operator norm is isometric invariant, and the inequality follows from the triangle inequality.

5. Follows since $\{A_{\bu}\}_{\bu}$ forms an orthogonal basis of the operator space.
Without loss of generality, we can express $X$ as
\begin{align}
    X = \sum_{\bm{u}} c_\bu A_\bu 
\end{align}
for some coefficients $c_\bu$. Under this decomposition, we have
\begin{align}
\tr[A_{\bm{v}}X] = \sum_{\bu}c_\bu\tr[A_{\bm{v}}A_\bu] = 
\begin{cases}
    d^nc_\bu, & \bm{v} = \bu \\
    0,        & \bm{v} \neq \bu.
\end{cases}
\end{align}
Thus, $c_u = \tr[A_{\bu}X]/d^n = W_X(\bu)$.

6. Follows directly from $4$.
\end{proof}

\section{Proof of Lemma~\ref{lemma:inner-product}}
\label{appx:lemma:inner-product}

\begin{proof}[Proof of Lemma~\ref{lemma:inner-product}]
By definition, we have
\begin{align}
    M = \sum_{\bm{u}\in\bZ_d^n\times\bZ_d^n}W_M(A_{\bm{u}})A_{\bm{u}}.
\end{align}
Then
\begin{align}
\tr[MN] 
&= \tr\left[\left(\sum_{\bm{u}}W_M(A_{\bm{u}})A_{\bm{u}}\right)
            \left(\sum_{\bm{v}}W_N(A_{\bm{v}})A_{\bm{v}}\right)\right] \\
&= \sum_{\bm{u},\bm{v}}W_M(A_{\bm{u}})W_N(A_{\bm{v}})\tr\left[A_{\bm{u}}A_{\bm{v}}\right] \\
&= d^n\sum_{\bm{u},\bm{v}}W_M(A_{\bm{u}})W_N(A_{\bm{v}})\delta_{\bm{u},\bm{v}} \\
&= d^n\sum_{\bm{u}}W_M(A_{\bm{u}})W_N(A_{\bm{u}}).
\end{align}
Specially, if $M=N=\proj{\psi}$ for some pure state $\ket{\psi}$, we have 
\begin{align}
    d^n\sum_{\bm{u}}W_M(A_{\bm{u}})W_N(A_{\bm{u}}) = \tr[\psi^2] = 1.
\end{align}
\end{proof}

\section{Proof of Theorem~\ref{theo:state_mana}}\label{appx:theo:state_mana}

The analysis of Protocol~\ref{protocol:fidelity-estimation-DFE-mana} also serves as the proof of Theorem~\ref{theo:state_mana}.
Protocol~\ref{protocol:fidelity-estimation-DFE-mana} is a two-stage estimation procedure: 
we first estimate each $\wt{X}_{\bm{u}_k}$ and then use them to estimate the target $\widetilde{Y}$:

\begin{align}
    O_{j\vert k} \quad\rightarrow\quad \wt{X}_{\bm{u}_k}
\quad\rightarrow\quad \wt{Y} 
\quad\xrightarrow{\text{\color{klevinblue}~Chebyshev inequality~}}\quad Y 
\quad\xrightarrow{\text{\color{klevinblue}~Hoeffding inequality~}}\quad \tr[\psi\rho].
\end{align}

\paragraph*{Step 1: Estimating $Y$.}
Assume that each $X_{\bm{u}}$ (equivalently, $W_\rho(\bm{u})$) can be estimated \textit{ideally}. 
Now say we want to estimate $\tr[\psi\rho]$ with some fixed additive error $\varepsilon/2$ 
and significance level. We sample $K$ times and obtain random variables $X_{\bm{u}_1},\cdots,X_{\bm{u}_K}$
and define the sample average $Y=1/K\sum_{k=1}^KX_{\bm{u}_k}$. The estimation error of $Y$ can be bounded using Chebyshev's inequality.

\textbf{Step 1.1: The expected value $\bE[Y]$ is finite.} 
This is trivial since
\begin{align}
  \bE[Y]  = \bE\left[\frac{1}{K}\sum_{k=1}^KX_{\bm{u}_k}\right] 
          = \sum_{\bm{u}\in\bZ_d^n\times\bZ_d^n}\frac{\vert W_\psi(\bm{u}) \vert}{\Delta_\psi}
            \operatorname{sgn}(W_\psi(\bm{u}))d^n\Delta_\psi W_\rho(\bm{u})
          = \tr[\psi\rho].
\end{align}
where $\bE$ denotes the expected value over the random choice of $k$.

\textbf{Step 1.2: The variance of $Y$ is bounded.} We compute the variance of $X_{\bm{u}_k}$:
\begin{align}
  \opn{Var}[X_{\bm{u}_k}]
&= \bE[X_{\bm{u}_k}^2] - (\bE[X_{\bm{u}_k}])^2 \\
&= \sum_{\bm{u}\in\bZ_d^n\times\bZ_d^n} \frac{\vert W_\psi(\bm{u}) \vert}{\Delta_\psi}
        (\operatorname{sgn}(W_\psi(\bm{u}))d^n\Delta_\psi W_\rho(\bm{u}))^2
    - \left(\tr[\psi\rho]\right)^2 \\
&= \sum_{\bm{u}\in\bZ_d^n\times\bZ_d^n} \vert W_\psi(\bm{u})\vert d^{2n} \Delta_\psi W^2_\rho(\bm{u}) 
                    - \tr[\psi\rho]^2 \\
&\leq \Delta_\psi\sum_{\bm{u}\in\bZ_d^n\times\bZ_d^n} d^nW_\rho^2(\bm{u}) - \tr[\psi\rho]^2 \\
&\leq \Delta_\psi\tr[\rho^2] \\
&\leq \Delta_\psi,
\end{align}
where the first inequality follows from (6) of Lemma~\ref{lemma:properties} and
the second inequality follows from Lemma~\ref{lemma:inner-product}.
Then, the variance of $Y$ follows directly
\begin{align}
    \opn{Var}[Y] = \frac{1}{K^2}\sum_{k=1}^K\opn{Var}[X_{\bm{u}_k}] \leq \frac{\Delta_\psi}{K}.
\end{align}

\textbf{Step 1.3: Chebyshev inequality.} We have
\begin{align}
\opn{Pr}\left\{\vert Y - \tr[\psi\rho]\vert\geq a\sqrt{\frac{\Delta_\psi}{K}}\right\}
\leq \opn{Pr}\left\{\vert Y - \tr[\psi\rho]\vert \geq a\sqrt{\opn{Var}[Y]}\right\}
\leq \frac{1}{a^2}.
\end{align}
Let $a=\sqrt{2/\delta}$ and $K=\lceil 8\Delta_\psi/\varepsilon^2\delta\rceil$ yields
\begin{align}\label{eq:estimation-bound-1}
    \opn{Pr}\left\{\vert Y - \tr[\psi\rho]\vert \geq \varepsilon/2\right\} \leq \delta/2. 
\end{align}

\paragraph*{Step 2: Estimating $X_{\bm{u}_k}$.}

We again employ an estimator $\wt{Y}$, which is obtained from a finite number of copies of the state $\rho$, to approximate the ideal infinite-precision estimator $Y$:
\begin{align}
    \wt{Y} 
&= \frac{1}{K}\sum_{k=1}^K\wt{X}_{\bm{u}_k} \\
&= \sum_{k=1}^K\sum_{j=1}^{N_k}\frac{1}{K}\times\frac{1}{N_k}\times
        \operatorname{sgn}(W_\psi(\bm{u}))d^n\Delta_\psi O_{j\vert k}  \\
&= \sum_{k=1}^K\sum_{j=1}^{N_k}\wt{O}_{j\vert k},
\end{align}
where
\begin{align}
  \wt{O}_{j\vert k} := \frac{\operatorname{sgn}(W_\psi(\bm{u}))d^n\Delta_\psi}{KN_k}O_{j\vert k} 
\in [-\Delta_\psi/(KN_k), \Delta_\psi/(KN_k)],
\end{align}
where the range bound follows since $O_{j\vert k}\in[-1/d^n,1/d^n]$ as proven in Lemma~\ref{lemma:properties}. 
That is to say, $\wt{Y}$ can be viewed as a sum of the (rescaled) random variables $\wt{O}_{j\vert k}$. 
Now we can apply another version of Hoeffding's inequality in terms of the sum, yielding:
\begin{subequations}\label{eq:estimation-bound-2}
\begin{align}
    \opn{Pr}\left(\vert\wt{Y} - Y \vert \geq \varepsilon/2\right)
&\leq 2\exp\left(- \frac{2\varepsilon^2}{4\sum_{k=1}^K\sum_{j=1}^{N_k}
                    \frac{4\Delta_\psi^2}{K^2N_k^2}}\right)\label{eq:1} \\
&= {2\exp\left(- \frac{\varepsilon^2}{\sum_{k=1}^K
                        \frac{8\Delta_\psi^2}{K^2N_k}}\right)}.
\end{align}
\end{subequations}
We need to cleverly choose $N_k$ so that the error probability satisfies
\begin{subequations}
\begin{align}
&\quad    {2\exp\left(- \frac{\varepsilon^2}
            {\sum_{k=1}^K\frac{8\Delta_\psi^2}{K^2N_k}}\right)} = \delta/2 \\
\Leftarrow&\quad \sum_{k=1}^K\frac{8\Delta_\psi^2}{K^2N_k} 
                = \frac{\varepsilon^2}{\ln\frac{4}{\delta}} \\
\Leftarrow&\quad  \forall k=1,\cdots, K,\quad 
                  \frac{8\Delta_\psi^2}{K^2N_k} 
                = \frac{1}{K}\times\frac{\varepsilon^2}{\ln\frac{4}{\delta}} \\
\Leftarrow&\quad  \forall k=1,\cdots, K,\quad 
    N_k = \left\lceil\frac{8\Delta_\psi^2}{K\varepsilon^2}\ln\frac{4}{\delta}\right\rceil.
\end{align}
\end{subequations}

\paragraph*{Step 3: Union bound.}

From the above two steps, we can derive the union bound:
\begin{equation}
    \Pr\left\{\vert \wt{Y} - \tr[\psi\rho]\vert \leq \varepsilon\right\} \geq 1 - \delta,
\end{equation}
and conclude that,
\begin{quote}
\textit{\color{googlered}
With probability $\geq 1-\delta$, 
the fidelity $\cF(\psi,\rho)$ lies in the range $[\wt{Y}-\varepsilon, \wt{Y}+\varepsilon]$.}
\end{quote}

\paragraph*{Number of copies}

This DFE protocol via mana first samples $K=\lceil 8\Delta_\psi/\varepsilon^2\delta)\rceil$
number of phase-space point operators and uses 
$N_k=\lceil8\Delta_\psi^2/(K\varepsilon^2)\ln(4/\delta)\rceil$
number of state copies to estimate the corresponding expectation value.
The expected value of $N_k$ w.r.t. $k$ is given by
\begin{align}
    \bE[N_k] 
&= \sum_{\bm{u}\in\bZ_d^n\times\bZ_d^n} {\rm Pr}(\bm{u})N_{\bm{u}} \\
&\leq 1 + \sum_{\bm{u}\in\bZ_d^n\times\bZ_d^n}\frac{\vert W_\psi(\bm{u}) \vert}{\Delta_\psi}\times 
            \frac{8\Delta_\psi^2}{K\varepsilon^2}\ln\frac{4}{\delta} \\
&= 1 + \frac{8\Delta_\psi^2}{K\varepsilon^2}\ln\frac{4}{\delta}.
\end{align}
Thus, the total number of samples consumed is given by
\begin{align}
    \bE[N] = \sum_{k=1}^K \bE[N_k]
\leq 1  + \frac{8\Delta_\psi}{\varepsilon^2\delta} 
        + 8\Delta_\psi^2\times\frac{1}{\varepsilon^2}\ln\frac{4}{\delta},
\end{align}
and the sample complexity is $\cO(\frac{2^{\cM(\psi)}}{\varepsilon^2\delta} 
        + \frac{2^{2\cdot\cM(\psi)}}{\varepsilon^2}\ln\frac{1}{\delta})$ with $\Delta_\psi = 2^{\cM(\psi)}$.


\section{Proof of Proposition~\ref{pro:stab}}\label{appx:pro:stab}

The analysis of Protocol~\ref{protocol:stabilizer-fidelity-estimation-DFE} serves as the proof of Proposition~\ref{pro:stab}. Protocol~\ref{protocol:stabilizer-fidelity-estimation-DFE} is also a two-stage estimation procedure: 
we first estimate each $\widetilde{W}_\rho(\bm{u}_k)$ and then use them to estimate the target $\widetilde{Y}$:

\begin{align}
    O_{j\vert k} \quad\rightarrow\quad \widetilde{W}_\rho(\bm{u}_k) \quad\rightarrow\quad \wt{Y} 
\quad\xrightarrow{\text{\color{klevinblue}~Hoeffding inequality~}}\quad Y 
\quad\xrightarrow{\text{\color{klevinblue}~Hoeffding inequality~}}\quad \tr[\psi\rho].
\end{align}

\subparagraph*{Step 1: Estimating $Y$.}
Assume that each $\wh{W}_\rho(\bm{u})$ can be estimated \textit{ideally}. 
We estimate $\tr[\psi\rho]$ with some fixed additive error $\varepsilon/2$ 
and significance level (aka. failure probability) $\delta/2$. 
We sample $K$ times and obtain random variables $\wh{W}_\rho(\bm{u}_1),\cdots,\wh{W}_\rho(\bm{u}_K)$
and define the sample average $Y=1/K\sum_{k=1}^K\wh{W}_\rho(\bm{u}_k)$.
Since $\wh{W}_\rho(\bm{u})=\tr[A_{\bm{u}}\rho]\in[-1,1]$ which is proved in Lemma~\ref{lemma:properties}, 
we can apply the Hoeffding inequality to obtain a tighter bound on the estimation error of $Y$.

First of all, we have
\begin{align}
  \bE[Y]  = \bE\left[\frac{1}{K}\sum_{k=1}^K\wh{W}_\rho(\bm{u}_k)\right] 
          = d^n\sum_{\bm{u}\in\bZ_d^n\times\bZ_d^n}W_\psi(\bm{u})W_\rho(\bm{u}) = \tr[\psi\rho].
\end{align}
where $\bE$ denotes the expected value over the random choice of $k$.
Applying the above Hoeffding inequality to $Y$ we obtain
\begin{align}
  \Pr\left\{\left\vert Y - \bE[Y]\right\vert \geq \varepsilon/2\right\}
= \Pr\left\{\left\vert Y - \tr[\psi\rho]\right\vert \geq \varepsilon/2\right\}
\leq 2\exp\left( -\frac{K^2\varepsilon^2}{2\sum_{k=1}^K2^2} \right)
= 2\exp\left( -\frac{K\varepsilon^2}{8} \right).
\end{align}

To ensure that the failure probability is upper bounded by $\delta/2$, we set the number of phase-space operators to be:
\begin{equation}
    K=\lceil8/\varepsilon^2\ln(4/\delta)\rceil.
\end{equation}

There is a key distinction between
Protocol~\ref{protocol:fidelity-estimation-DFE-mana} and Protocol~\ref{protocol:stabilizer-fidelity-estimation-DFE}:
In Protocol~\ref{protocol:fidelity-estimation-DFE-mana} we employ a weaker Chebyshev inequality to bound the number of phase-space operators $K$, 
whereas in Protocol~\ref{protocol:stabilizer-fidelity-estimation-DFE}, we apply the Hoeffding inequality. The latter results in an exponentially improved dependence on the significance level $\delta$.

\subparagraph*{Step 2: Estimating $\wh{W}_\rho(\bm{u}_k)$.}
Rather than directly estimating each individual $\wh{W}_\rho(\bm{u}_k)$, we approximate the ideal infinite-precision estimator $Y$ by a finite-precision $\wt{Y}$, which is derived from a finite number of copies of the state $\rho$:
\begin{align}
    \wt{Y} 
&= \frac{1}{K}\sum_{k=1}^K\wt{W}_\rho(\bm{u}_k) \\
&= \sum_{k=1}^K\sum_{j=1}^{N_k} \frac{1}{K}\times\frac{1}{N_k}\times O_{j\vert k} \\
&= \sum_{k=1}^K\sum_{j=1}^{N_k}\wt{O}_{jk},
\end{align}
where
\begin{align}
  \wt{O}_{j\vert k} := \frac{1}{KN_k}O_{j\vert k} \in [-1/(KN_k), 1/(KN_k)].
\end{align}
The range bound follows since $O_{j\vert k}\in\{-1,1\}$, and $\wt{Y}$ can be viewed as a sum of random variables $\wt{O}_{j\vert k}$. 
Now we can apply Hoeffding's inequality again, yielding:
\begin{subequations}\label{eq:estimation-bound-stab-2}
\begin{align}
    \opn{Pr}\left(\left\vert\wt{Y} - Y \right\vert \geq \varepsilon/2\right)
&\leq 2\exp\left(- \frac{2\varepsilon^2}{4\sum_{k=1}^K\sum_{j=1}^{N_k}(u_b - l_b)^2}\right) \\
&= 2\exp\left(- \frac{\varepsilon^2}{2\sum_{k=1}^K\sum_{j=1}^{N_k}\frac{4}{K^2N_k^2}}\right) \\
&= {2\exp\left(- \frac{\varepsilon^2}{\sum_{k=1}^K\frac{8}{K^2N_k}}\right)}.
\end{align}
\end{subequations}
We need to cleverly choose $N_k$ so that the error probability satisfies
\begin{subequations}
\begin{align}
&\quad    {2\exp\left(- \frac{\varepsilon^2}{\sum_{k=1}^K\frac{8}{K^2N_k}}\right)} = \delta/2 \\
\Leftarrow&\quad \sum_{k=1}^K\frac{8}{K^2N_k} = \frac{\varepsilon^2}{\ln\frac{4}{\delta}} \\
\Leftarrow&\quad  \forall k=1,\cdots, K,\quad 
                  \frac{8}{K^2N_k} = \frac{1}{K}\times\frac{\varepsilon^2}{\ln\frac{4}{\delta}} \\
\Leftarrow&\quad  \forall k=1,\cdots, K,\quad 
                  N_k = \left\lceil\frac{8}{K\varepsilon^2}\ln\frac{4}{\delta}\right\rceil = 1.
\end{align}
\end{subequations}

\subparagraph*{Step 3: Union bound}

From the previous two steps, we have the following lower bound on the probability: 
\begin{align}
  \Pr\left\{\left\vert \wt{Y} - \bE[Y]\right\vert \leq \varepsilon\right\}
&= 1 - \Pr\left\{\left\vert \wt{Y} - \bE[Y]\right\vert \geq \varepsilon\right\}  \\
&\geq 1 - \left( \Pr\left\{\left\vert Y - \bE[Y]\right\vert \geq \varepsilon/2\right\}
               + \opn{Pr}\left(\vert\wt{Y} - Y \vert \geq \varepsilon/2\right)\right) \\
&\geq 1 - \delta.
\end{align}
We can then conclude that,
\begin{quote}
\textit{\color{googlered}
With probability $\geq 1-\delta$, 
the fidelity $\cF(\psi,\rho)$ lies in the range $[\wt{Y}-\varepsilon, \wt{Y}+\varepsilon]$.}
\end{quote}

\subparagraph*{Number of copies}

Note that the DFE method samples $K=\lceil 8/\varepsilon^2\ln(4/\delta)\rceil$ 
Pauli observables, with the number of samples independent of the system size. For each sampled $k$, a constant number of copies $N_k = 1$ is required. Therefore, the total number of samples is given by:
\begin{align}
    \bE[N] = \sum_{k=1}^K N_k
      = \sum_{k=1}^K \left\lceil\frac{8}{K\varepsilon^2}\ln\frac{4}{\delta}\right\rceil
      \sim \left\lceil\frac{8}{\varepsilon^2}\ln\frac{4}{\delta}\right\rceil.
\end{align}
Then the sample complexity is $\cO(\frac{1}{\varepsilon^2}\ln\frac{1}{\delta})$.


\section{Proof of Theorem~\ref{theo:channel_Wigner_rank}}\label{appx:theo:channel_Wigner_rank}

The analysis of Protocol~\ref{protocol:fidelity-estimation-Chan-norm2} serves as the proof of Theorem~\ref{theo:channel_Wigner_rank}. Protocol~\ref{protocol:fidelity-estimation-Chan-norm2} is a two-stage estimation procedure:
we first estimate each $\wt{X}_{\bm{v}_k,\bm{u}_k}$ and then use them to estimate the target $\widetilde{Y}$:

\begin{align}
    O_{j\vert k} \quad\rightarrow\quad \wt{X}_{\bm{v}_k,\bm{u}_k}
\quad\rightarrow\quad \wt{Y} 
\quad\xrightarrow{\text{\color{klevinblue}~Chebyshev inequality~}}\quad Y 
\quad\xrightarrow{\text{\color{klevinblue}~Hoeffding inequality~}}\quad \frac{\tr[\cU^{\dagger}\Lambda]}{d^{2n}}.
\end{align}

\paragraph*{Step 1: Estimating $Y$.}
Assume that each $X_{\bm{v},\bm{u}}$ (equivalently, $W_{\Lambda}(\bm{v}|\bm{u})$) can be estimated \textit{ideally}. 
Now say we want to estimate $\frac{\tr[\cU^{\dagger}\Lambda] }{d^{2n}}$ with some fixed additive error $\varepsilon/2$ 
and significance level (aka. failure probability) $\delta/2$. 
We sample $K$ times and obtain random variables $X_{(\bm{v}_1,\bm{u}_1)},\cdots,X_{(\bm{v}_K,\bm{u}_K)}$
and define the sample average $Y=1/K\sum_{k=1}^KX_{(\bm{v}_k,\bm{u}_k)}$.
We can apply the Chebyshev inequality to bound the estimation error of $Y$.

\textbf{Step 1.1: The expected value $\bE[Y]$ is finite.} 
This is trivial.

\textbf{Step 1.2: The variance of $Y$ is bounded.} Since $Y$ is a weighted sum of $X_{(\bm{v}_k,\bm{u}_k)}$, 
we first compute the variance of $X_{(\bm{v}_k,\bm{u}_k)}$:
\begin{align}
  \opn{Var}[X_{(\bm{v}_k,\bm{u}_k)}]
&= \bE[X_{(\bm{v}_k,\bm{u}_k)}^2] - (\bE[X_{(\bm{v}_k,\bm{u}_k)}])^2 \\
&= \sum_{\bm{u},\bm{v}} \frac{W^2_{\cU}(\bm{v}|\bm{u}) }{d^{2n}}\times \frac{W^2_{\Lambda}(\bm{v}|\bm{u})}{W^2_{\cU}(\bm{v}|\bm{u})}
    -\frac{\tr^2[\cU^{\dagger}\Lambda]}{d^{4n}}\\
&=  \sum_{\bm{u},\bm{v}} \frac{W^2_{\Lambda}(\bm{v}|\bm{u}) }{d^{2n}}
    -\frac{\tr^2[\cU^{\dagger}\Lambda]}{d^{4n}} \\
&= \frac{1}{d^{2n}}\tr[\Lambda^{\dagger}\Lambda] - \frac{\tr^2[\cU^{\dagger}\Lambda]}{d^{4n}}\\
&\leq \frac{1}{d^{2n}}\tr[\Lambda^{\dagger}\Lambda] \\
&\leq 1, 
\end{align}
Then, the variance of $Y$ follows directly 
\begin{align}
    \opn{Var}[Y] = \frac{1}{K^2}\sum_{k=1}^K\opn{Var}[X_{(\bm{v}_k,\bm{u}_k)}] \leq \frac{1}{K}.
\end{align}

\textbf{Step 1.3: Chebyshev inequality.} We have
\begin{align}
\opn{Pr}\left\{\vert Y - \frac{1}{d^{2n}}\tr[\cU ^{\dagger}\Lambda]\vert \geq\frac{a}{\sqrt{K}}\right\}
\leq \opn{Pr}\left\{\vert Y - \frac{1}{d^{2n}}\tr[\cU ^{\dagger}\Lambda]\vert \geq a \sqrt{\opn{Var}[Y]}\right\} \leq \frac{1}{a^2}.
\end{align}
Let $a=\sqrt{2/\delta}$ and \textbf{$K=\lceil 8/\varepsilon^2\delta)\rceil$} yields
\begin{align}\label{eq:wigerrank-channel-estimation-bound-1}
    \opn{Pr}\left\{\vert Y -\frac{1}{d^{2n}}\tr[\cU ^{\dagger}\Lambda]\vert \geq \varepsilon/2\right\} \leq \delta/2.
\end{align}

\paragraph*{Step 2: Estimating $X_{(\bm{v}_k,\bm{u}_k)}$.}

\begin{align}
    \wt{Y} 
&= \frac{1}{K}\sum_{k=1}^K\wt{X}_{(\bm{v}_k,\bm{u}_k)} \\
&= \sum_{k=1}^K\sum_{j=1}^{N_k}\frac{1}{K}\times\frac{1}{N_k}\times\frac{O_{j\vert k}}{W_{\cU}(\bm{v}_k,\bm{u}_k)} \\
&= \sum_{k=1}^K\sum_{j=1}^{N_k}\wt{O}_{j\vert k},
\end{align}
where
\begin{align}
  \wt{O}_{j\vert k} := \frac{1}{KN_kW_{\cU}(\bm{v}_k,\bm{u}_k)}O_{j\vert k} 
\in [-\frac{\Delta}{KN_kW_{\cU}(\bm{v}_k,\bm{u}_k)}, \frac{\Delta}{KN_kW_{\cU}(\bm{v}_k,\bm{u}_k)}],
\end{align}
where the range bound follows since $O_{j\vert k}\in[-\Delta,\Delta]$, 
where $\Delta=1$. 
Now we apply the Hoeffding's inequality, yielding:
\begin{subequations}\label{eq:wigerrank-channel-estimation-bound-2}
\begin{align}
    \opn{Pr}\left(\vert\wt{Y} - Y \vert \geq \varepsilon/2\right)
&\leq 2\exp\left(- \frac{2\varepsilon^2}{4\sum_{k=1}^K\sum_{j=1}^{N_k}
                    \frac{4\Delta^2}{K^2N_k^2W^2_{\cU}(\bm{v}_k,\bm{u}_k)}}\right) \\
&= {2\exp\left(- \frac{\varepsilon^2}{\sum_{k=1}^K
                        \frac{8\Delta^2}{K^2N_kW^2_{\cU}(\bm{v}_k,\bm{u}_k)}}\right)}.
\end{align}
\end{subequations}
We need to cleverly choose $N_k$ so that the error probability satisfies
\begin{subequations}
\begin{align}
&\quad    {2\exp\left(- \frac{\varepsilon^2}
            {\sum_{k=1}^K\frac{8\Delta}{K^2N_kW^2_{\cU}(\bm{v}_k,\bm{u}_k)}}\right)} = \delta/2 \\
\Leftarrow&\quad \sum_{k=1}^K\frac{8\Delta^2}{K^2N_kW^2_{\cU}(\bm{v}_k,\bm{u}_k)} 
                = \frac{\varepsilon^2}{\ln\frac{4}{\delta}} \\
\Leftarrow&\quad  \forall k=1,\cdots, K,\quad 
                  \frac{8\Delta^2}{K^2N_kW^2_{\cU}(\bm{v}_k,\bm{u}_k)} 
                = \frac{1}{K}\times\frac{\varepsilon^2}{\ln\frac{4}{\delta}} \\
\Leftarrow&\quad  \forall k=1,\cdots, K,\quad 
    N_k = \left\lceil\frac{8\Delta^2}{KW^2_{\cU}(\bm{v}_k,\bm{u}_k)\varepsilon^2}\ln\frac{4}{\delta}\right\rceil.
\end{align}
\end{subequations}

\paragraph*{Step 3: Union bound.}

From the above two steps, we obtain the following two probability bounds
given in Eqs.~\eqref{eq:wigerrank-channel-estimation-bound-1} and~\eqref{eq:wigerrank-channel-estimation-bound-2}:
\begin{align}
  \Pr\left\{\vert Y - \frac{1}{d^{2n}}\tr[\cU ^{\dagger}\Lambda]\vert \geq \varepsilon/2\right\}
&\leq \delta/2, \\
\opn{Pr}\left(\vert\wt{Y} - Y \vert \geq \varepsilon/2\right)
&\leq \delta/2.
\end{align}
Using the union bound, we have 
\begin{align}
  \Pr\left\{\vert \wt{Y} - \frac{1}{d^{2n}}\tr[\cU ^{\dagger}\Lambda]\vert \leq \varepsilon\right\}
&= 1 - \Pr\left\{\vert \wt{Y} - \bE[Y]\vert \geq \varepsilon\right\}  \\
&\geq 1 - \left( \Pr\left\{\vert Y - \bE[Y]\vert \geq \varepsilon/2\right\}
               + \opn{Pr}\left(\vert\wt{Y} - Y \vert \geq \varepsilon/2\right)\right) \\
&\geq 1 - \delta.
\end{align}
We can then conclude that,
\begin{quote}
\textit{\color{googlered}
With probability $\geq 1-\delta$, 
the fidelity $\cF(\cU,\Lambda)$ lies in the range $[\wt{Y}-\varepsilon, \wt{Y}+\varepsilon]$.}
\end{quote}

\paragraph*{Number of copies}

Notice that the DFE method first samples \textbf{$K=\lceil 8/\varepsilon^2\delta)\rceil$}
number of phase-space point operators and use 
\textbf{$N_k=\lceil8\Delta^2/(KW^2_{\cU}(\bm{v}_k,\bm{u}_k)\varepsilon^2)\ln(4/\delta)\rceil$}
number of unitary copies to estimate the corresponding expectation value.
The expected value of $N_k$ w.r.t. $k$ is given by
\begin{align}
    \bE[N_k] 
&= \sum_{\bm{v},\bm{u}} {\rm Pr}(\bm{v},\bm{u})N_{\bm{v},\bm{u}} \\
&\leq \sum_{\bm{v},\bm{u}} {\rm Pr}(\bm{v},\bm{u})[1+ \frac{8\Delta^2}{(KW^2_{\cU}(\bm{v}_k,\bm{u}_k)\varepsilon^2}\ln\frac{4}{\delta}] \\
&\leq 1 + \sum_{\bm{v},\bm{u}} \frac{W^2_{\cU} (\bm{v}|\bm{u}) }{d^{2n}} \times 
        \frac{8\Delta^2}{KW^2_{\cU} (\bm{v}|\bm{u})\varepsilon^2}\ln\frac{4}{\delta} \\
&\leq 1 + \frac{8\chi(\cU)}{K\varepsilon^2d^{2n}}\ln\frac{4}{\delta},
\end{align}
where the last inequality follows from the fact that $\Delta=1$ and the definition of $\chi(\cU)$. 
Thus, the total number of samples consumed is given by
\begin{align}
    \bE[N] = \sum_{k=1}^K \bE[N_k]
\leq  1+ \frac{8}{\varepsilon^2\delta} + 
{8\cdot2^{\chi_{\log}(\cU)}}\times\frac{1}{\varepsilon^2}\ln\frac{4}{\delta},
\end{align}
and the sample complexity is $\cO( \frac{1}{\varepsilon^2\delta} + 
{2^{\chi_{\log}(\cU)}}\times\frac{1}{\varepsilon^2}\ln\frac{1}{\delta} )$.

\section{Proof of Theorem~\ref{theo:channel_mana}}\label{appx:theo:channel_mana}

The performance analysis of Protocol~\ref{protocol:fidelity-estimation-Chan-norm1} is the proof of Theorem~\ref{theo:channel_mana}. Protocol~\ref{protocol:fidelity-estimation-Chan-norm1} is a two-stage estimation procedure: we first estimate each $\wt{X}_{\bm{v}_k,\bm{u}_k}$ and then use them to estimate the target $\widetilde{Y}$:

\begin{align}
    O_{j\vert k} \quad\rightarrow\quad \wt{X}_{\bm{v}_k,\bm{u}_k}
\quad\rightarrow\quad \wt{Y} 
\quad\xrightarrow{\text{\color{klevinblue}~Chebyshev inequality~}}\quad Y 
\quad\xrightarrow{\text{\color{klevinblue}~Hoeffding inequality~}}\quad \frac{\tr[\cU^{\dagger}\Lambda]}{d^{2n}}.
\end{align}

\paragraph*{Step 1: Estimating $Y$.}
Assume that each $X_{\bm{v},\bm{u}}$ (equivalently, $W_{\Lambda}(\bm{v}|\bm{u})$) can be estimated \textit{ideally}. 
Now say we want to estimate $\frac{\tr[\cU^{\dagger}\Lambda] }{d^{2n}}$ with some fixed additive error $\varepsilon/2$ 
and significance level (aka. failure probability) $\delta/2$. 
We sample $K$ times and obtain random variables $X_{(\bm{v}_1,\bm{u}_1)},\cdots,X_{(\bm{v}_K,\bm{u}_K)}$
and define the sample average $Y=1/K\sum_{k=1}^KX_{(\bm{v}_k,\bm{u}_k)}$.
We can apply the Chebyshev inequality to bound the estimation error of $Y$.

\textbf{Step 1.1: The expected value $\bE[Y]$ is finite.} 
This is trivial.

\textbf{Step 1.2: The variance of $Y$ is bounded.} Since $Y$ is a weighted sum of $X_{(\bm{v}_k,\bm{u}_k)}$, 
we first compute the variance of $X_{(\bm{v}_k,\bm{u}_k)}$:
\begin{align}
  \opn{Var}[X_{(\bm{v}_k,\bm{u}_k)}]
&= \bE[X_{(\bm{v}_k,\bm{u}_k)}^2] - (\bE[X_{(\bm{v}_k,\bm{u}_k)}])^2 \\
&= \sum_{\bm{u},\bm{v}} \frac{|W_{\cU}(\bm{v}|\bm{u})|}{\beta_{\cU}}\times \frac{W^2_{\Lambda}(\bm{v}|\bm{u})\beta^2_{\cU}}{d^{4n}}
    -\frac{\tr^2[\cU^{\dagger}\Lambda]}{d^{4n}}\\
&= \sum_{\bm{u},\bm{v}} \frac{|W_{\cU}(\bm{v}|\bm{u})|}{d^{4n}} W^2_{\Lambda}(\bm{v}|\bm{u})\beta_{\cU}
    -\frac{\tr^2[\cU^{\dagger}\Lambda]}{d^{4n}}\\
& \leq \sum_{\bm{u},\bm{v}} \frac{W^2_{\Lambda}(\bm{v}|\bm{u})\beta_{\cU}}{d^{4n}} 
    -\frac{\tr^2[\cU^{\dagger}\Lambda]}{d^{4n}}\\
& \leq \frac{\beta_{\cU}}{d^{2n}}\tr[\Lambda^{\dagger}\Lambda] - \frac{\tr^2[\cU^{\dagger}\Lambda]}{d^{4n}}\\
&\leq \frac{\beta_{\cU}}{d^{4n}}\tr[\Lambda^{\dagger}\Lambda] \\
&\leq \frac{\beta_{\cU}}{d^{2n}} \\
&\leq 2^{[{\cM(\cU)}]},
\end{align}

Then, the variance of $Y$ follows directly
\begin{align}
    \opn{Var}[Y] = \frac{1}{K^2}\sum_{k=1}^K\opn{Var}[X_{(\bm{v}_k,\bm{u}_k)}] \leq \frac{2^{[{\cM(\cU)}]}}{K}.
\end{align}

Here we define $\Delta_{\cN} = 2^{[{\cM(\cN)}]}$ for a quantum channel $\cN$, the exponential version of the mana of channel $\cN$.

\textbf{Step 1.3: Chebyshev inequality.} We have
\begin{align}
  \opn{Pr}\left\{\vert Y - \frac{1}{d^{2n}}\tr[\cU ^{\dagger}\Lambda]\vert \geq a \sqrt{\frac{\Delta_{\cU}}{K}}\right\}
\leq \opn{Pr}\left\{\vert Y - \frac{1}{d^{2n}}\tr[\cU ^{\dagger}\Lambda]\vert \geq a \sqrt{\opn{Var}[Y]}\right\} \leq \frac{1}{a^2}.
\end{align}
Let $a=\sqrt{2/\delta}$ and \textbf{$K=\lceil 8\Delta_{\cU}/\varepsilon^2\delta)\rceil$} yields
\begin{align}\label{eq:estimation-bound-channel_mana-1}
    \opn{Pr}\left\{\vert Y -\frac{1}{d^{2n}}\tr[\cU ^{\dagger}\Lambda]\vert \geq \varepsilon/2\right\} \leq \delta/2.
\end{align}

\paragraph*{Step 2: Estimating $X_{(\bm{v}_k,\bm{u}_k)}$.}

\begin{align}
    \wt{Y} 
&= \frac{1}{K}\sum_{k=1}^K\wt{X}_{(\bm{v}_k,\bm{u}_k)} \\
&= \sum_{k=1}^K\sum_{j=1}^{N_k}\frac{1}{K}\times\frac{1}{N_k}\times \frac{\operatorname{sgn}(W_{\cU}(\bm{v}|\bm{u}))O_{j\vert k}\beta_{\cU}}{d^{2n}}\\
&= \sum_{k=1}^K\sum_{j=1}^{N_k}\wt{O}_{j\vert k},
\end{align}
where
\begin{align}
  \wt{O}_{j\vert k} := \frac{\operatorname{sgn}(W_{\cU}(\bm{v}|\bm{u}))\beta_{\cU}}{d^{2n}KN_k}O_{j\vert k} 
\in [-\frac{\Delta_{\cU}}{KN_k}, \frac{\Delta_{\cU}}{KN_k}],
\end{align}
where the range bound follows since $O_{j\vert k}\in[-1, 1]$ and $\frac{\beta_{\cU}}{d^{2n}} \leq \Delta_{\cU} $.
Now we apply the Hoeffding's inequality, yielding:
\begin{subequations}\label{eq:estimation-bound-channel_mana-2}
\begin{align}
    \opn{Pr}\left(\vert\wt{Y} - Y \vert \geq \varepsilon/2\right)
&\leq 2\exp\left(- \frac{2\varepsilon^2}{4\sum_{k=1}^K\sum_{j=1}^{N_k}
                    \frac{4\Delta_{\cU}^2}{K^2N_k^2}}\right) \\
&= {2\exp\left(- \frac{\varepsilon^2}{\sum_{k=1}^K
                        \frac{8\Delta_{\cU}^2}{K^2N_k}}\right)}.
\end{align}
\end{subequations}
We need to cleverly choose $N_k$ so that the error probability satisfies
\begin{subequations}
\begin{align}
&\quad    {2\exp\left(- \frac{\varepsilon^2}{\sum_{k=1}^K
                        \frac{8\Delta_{\cU}^2}{K^2N_k}}\right)} = \delta/2 \\
\Leftarrow&\quad \sum_{k=1}^K\frac{8\Delta_{\cU}^2}{K^2N_k} 
                = \frac{\varepsilon^2}{\ln\frac{4}{\delta}} \\
\Leftarrow&\quad  \forall k=1,\cdots, K,\quad 
                  \frac{8\Delta_{\cU}^2}{K^2N_k} 
                = \frac{1}{K}\times\frac{\varepsilon^2}{\ln\frac{4}{\delta}} \\
\Leftarrow&\quad  \forall k=1,\cdots, K,\quad 
    N_k = \left\lceil\frac{8\Delta_{\cU}^2}{K\varepsilon^2}\ln\frac{4}{\delta}\right\rceil.
\end{align}
\end{subequations}

\paragraph*{Step 3: Union bound.}

From the above two steps, we obtain the following two probability bounds
given in Eqs.~\eqref{eq:estimation-bound-channel_mana-1} and~\eqref{eq:estimation-bound-channel_mana-2}:
\begin{align}
  \Pr\left\{\vert Y - \frac{1}{d^{2n}}\tr[\cU ^{\dagger}\Lambda]\vert \geq \varepsilon/2\right\}
&\leq \delta/2, \\
\opn{Pr}\left(\vert\wt{Y} - Y \vert \geq \varepsilon/2\right)
&\leq \delta/2.
\end{align}
Using the union bound, we have 
\begin{align}
  \Pr\left\{\vert \wt{Y} - \frac{1}{d^{2n}}\tr[\cU ^{\dagger}\Lambda]\vert \leq \varepsilon\right\}
&= 1 - \Pr\left\{\vert \wt{Y} - \bE[Y]\vert \geq \varepsilon\right\}  \\
&\geq 1 - \left( \Pr\left\{\vert Y - \bE[Y]\vert \geq \varepsilon/2\right\}
               + \opn{Pr}\left(\vert\wt{Y} - Y \vert \geq \varepsilon/2\right)\right) \\
&\geq 1 - \delta.
\end{align}
We can then conclude that,
\begin{quote}
\textit{\color{googlered}
With probability $\geq 1-\delta$, 
the fidelity $\cF(\cU,\Lambda)$ lies in the range $[\wt{Y}-\varepsilon, \wt{Y}+\varepsilon]$.}
\end{quote}

\paragraph*{Number of copies}

Notice that the DFE method first samples \textbf{$K=\lceil \frac{8\Delta_{\cU}}{\varepsilon^2\delta}\rceil$}
number of phase-space point operators and use 
\textbf{$N_k = \left\lceil\frac{8\Delta_{\cU}^2}{K\varepsilon^2}\ln\frac{4}{\delta}\right\rceil$}
number of unitary copies to estimate the corresponding expectation value.
The expected value of $N_k$ w.r.t. $k$ is given by
\begin{align}
    \bE[N_k] 
&= \sum_{\bm{v},\bm{u}} {\rm Pr}(\bm{v},\bm{u})N_{\bm{v},\bm{u}} \\
&\leq \sum_{\bm{v},\bm{u}} {\rm Pr}(\bm{v},\bm{u})[1+ \frac{8\Delta_{\cU}^2}{K\varepsilon^2}\ln\frac{4}{\delta}] \\
&\leq 1 + \frac{8\Delta_{\cU}^2}{K\varepsilon^2}\ln\frac{4}{\delta},
\end{align}

The total number of samples consumed is given by
\begin{align}
    \bE[N] = \sum_{k=1}^K \bE[N_k]
\leq 1+ \frac{8\Delta_{\cU}}{\varepsilon^2\delta} + 
{8\Delta_{\cU}^2}\times\frac{1}{\varepsilon^2}\ln\frac{4}{\delta}.
\end{align}
Therefore, the sample complexity is $\cO \big( \frac{2^{[\cM(\cU)]}}{\varepsilon^2\delta} + 
{2^{[2\cdot\cM(\cU)]}}\times\frac{1}{\varepsilon^2}\ln\frac{1}{\delta}\big)$ with $\Delta_{\cU} = 2^{[\cM(\cU)]}$.


\section{Proof of Proposition~\ref{pro:clifford}}\label{appx:pro:clifford}

The analysis of Protocol~\ref{protocol:clifford-fidelity-estimation-DFE} serves as the proof of Theorem~\ref{pro:clifford}. Protocol~\ref{protocol:clifford-fidelity-estimation-DFE} is a two-stage estimation procedure: 
we first estimate each $\widetilde{W}_\rho(\bm{u}_k)$ and then use them to estimate the target $\widetilde{Y}$:

\begin{align}
    O_{j\vert k} \quad\rightarrow\quad \wt{X}_{\bm{v}_k,\bm{u}_k}
\quad\rightarrow\quad \wt{Y} 
\quad\xrightarrow{\text{\color{klevinblue}~Hoeffding inequality~}}\quad Y 
\quad\xrightarrow{\text{\color{klevinblue}~Hoeffding inequality~}}\quad \frac{\tr[\cU^{\dagger}\Lambda]}{d^{2n}}.
\end{align}

\subparagraph*{Step 1: Estimating $Y$.}
Assume that each $W_{\Lambda}(\bm{v}|\bm{u})$ can be estimated \textit{ideally}. 
Now say we want to estimate $\frac{1}{d^{2n}}\tr[\cU^{\dagger}\Lambda]$ with some fixed additive error $\varepsilon/2$ 
and significance level (aka. failure probability) $\delta/2$. 
We sample $K$ times and obtain random variables $W_{\Lambda}(\bm{v}_1,\bm{u}_1),\cdots,W_{\Lambda}(\bm{v}_K,\bm{u}_K)$
and define the sample average $Y=1/K\sum_{k=1}^K W_{\Lambda}(\bm{v}_k,\bm{u}_k)$.
Since $W_{\Lambda}(\bm{v}_k,\bm{u}_k)=\in [-1,1]$ which is proved in Lemma~\ref{lemma:properties}, 
we can apply the Hoeffding inequality to obtain a tighter bound on the estimation error of $Y$.

First of all, we have
\begin{align}
  \bE[Y]  = \bE\left[\frac{1}{K}\sum_{k=1}^KW_{\Lambda}(\bm{v}_k,\bm{u}_k)\right] 
         =\frac{1}{d^{2n}}\tr[\cU^{\dagger}\Lambda].
\end{align}
where $\bE$ denotes the expected value over the random choice of $k$.
Applying the above Hoeffding inequality to $Y$ we obtain
\begin{align}
  \Pr\left\{\left\vert Y - \bE[Y]\right\vert \geq \varepsilon/2\right\}
= \Pr\left\{\left\vert Y - \frac{1}{d^{2n}}\tr[\cU^{\dagger}\Lambda]\right\vert \geq \varepsilon/2\right\}
\leq 2\exp\left( -\frac{K^2\varepsilon^2}{2\sum_{k=1}^K2^2} \right)
= 2\exp\left( -\frac{K\varepsilon^2}{8} \right).
\end{align}
Let \textbf{\color{klevinblue}$K=\lceil 8/\varepsilon^2\ln(4/\delta)\rceil$} yields
\begin{align}\label{eq:estimation-bound-3}
  \Pr\left\{\left\vert Y - \bE[Y]\right\vert \geq \varepsilon/2\right\}
\leq 2\exp\left( -\frac{K\varepsilon^2}{8} \right) = \delta/2.
\end{align}

This is the crucial difference between
Protocol~\ref{protocol:fidelity-estimation-DFE-mana} and Protocol~\ref{protocol:stabilizer-fidelity-estimation-DFE}:
In Protocol~\ref{protocol:fidelity-estimation-DFE-mana} we employ a weaker Chebyshev inequality to bound $K$, 
while in Protocol~\ref{protocol:stabilizer-fidelity-estimation-DFE}, we employ the Hoeffding inequality to bound $K$,
yielding an exponentially better dependence on significance level $\delta$.

\subparagraph*{Step 2: Estimating $W_{\Lambda}(\bm{v}_k,\bm{u}_k)$.}

However, $W_{\Lambda}(\bm{v}_k,\bm{u}_k)$ can not be ideally estimated. Thus, to complete the description of our method, we show how the ideal infinite-precision estimator $Y$
can be approximated by an estimator $\wt{Y}$ that is obtained from a finite number of copies of the unitary  $U$:
\begin{align}
    \wt{Y} 
&= \frac{1}{K}\sum_{k=1}^K\wt{W}_{\Lambda}(\bm{v}_k,\bm{u}_k) \\
&= \sum_{k=1}^K\sum_{j=1}^{N_k} \frac{1}{K}\times\frac{1}{N_k}\times O_{j\vert k} \\
&= \sum_{k=1}^K\sum_{j=1}^{N_k}\wt{O}_{jk},
\end{align}
where
\begin{align}
  \wt{O}_{j\vert k} := \frac{1}{KN_k}O_{j\vert k} \in [-1/(KN_k), 1/(KN_k)].
\end{align}
The range bound follows since $O_{j\vert k}\in\{-1,1\}$.
That is to say, $\wt{Y}$ can be viewed as a sum of random variables $\wt{O}_{j\vert k}$. 
Now we can apply another version of the Hoeffding's inequality in terms of sum, yielding:
\begin{subequations}\label{eq:estimation-bound-4}
\begin{align}
    \opn{Pr}\left(\left\vert\wt{Y} - Y \right\vert \geq \varepsilon/2\right)
&\leq 2\exp\left(- \frac{2\varepsilon^2}{4\sum_{k=1}^K\sum_{j=1}^{N_k}(u_b - l_b)^2}\right) \\
&= 2\exp\left(- \frac{\varepsilon^2}{2\sum_{k=1}^K\sum_{j=1}^{N_k}\frac{4}{K^2N_k^2}}\right) \\
&= {2\exp\left(- \frac{\varepsilon^2}{\sum_{k=1}^K\frac{8}{K^2N_k}}\right)}.
\end{align}
\end{subequations}
We need to cleverly choose $N_k$ so that the error probability satisfies
\begin{subequations}\label{eq:bound-on-Nk}
\begin{align}
&\quad    {2\exp\left(- \frac{\varepsilon^2}{\sum_{k=1}^K\frac{8}{K^2N_k}}\right)} = \delta/2 \\
\Leftarrow&\quad \sum_{k=1}^K\frac{8}{K^2N_k} = \frac{\varepsilon^2}{\ln\frac{4}{\delta}} \\
\Leftarrow&\quad  \forall k=1,\cdots, K,\quad 
                  \frac{8}{K^2N_k} = \frac{1}{K}\times\frac{\varepsilon^2}{\ln\frac{4}{\delta}} \\
\Leftarrow&\quad  \forall k=1,\cdots, K,\quad 
                  N_k = \left\lceil\frac{8}{K\varepsilon^2}\ln\frac{4}{\delta}\right\rceil.
\end{align}
\end{subequations}

\subparagraph*{Step 3: Union bound}

From the above two steps, we obtain the following two probability bounds
given in Eqs.~\eqref{eq:estimation-bound-3} and~\eqref{eq:estimation-bound-4}:
\begin{align}
  \Pr\left\{\left\vert Y - \bE[Y]\right\vert \geq \varepsilon/2\right\}
&\leq \delta/2, \\
\opn{Pr}\left(\left\vert\wt{Y} - Y \right\vert \geq \varepsilon/2\right)
&\leq \delta/2.
\end{align}
Using the union bound, we have 
\begin{align}
  \Pr\left\{\left\vert \wt{Y} - \bE[Y]\right\vert \leq \varepsilon\right\}
&= 1 - \Pr\left\{\left\vert \wt{Y} - \bE[Y]\right\vert \geq \varepsilon\right\}  \\
&\geq 1 - \left( \Pr\left\{\left\vert Y - \bE[Y]\right\vert \geq \varepsilon/2\right\}
               + \opn{Pr}\left(\vert\wt{Y} - Y \vert \geq \varepsilon/2\right)\right) \\
&\geq 1 - \delta.
\end{align}
We can then conclude that,
\begin{quote}
\textit{\color{googlered}
With probability $\geq 1-\delta$, 
the fidelity $\cF(\psi,\rho)$ lies in the range $[\wt{Y}-\varepsilon, \wt{Y}+\varepsilon]$.}
\end{quote}

\subparagraph*{Number of copies}

Notice that the DFE method samples $K=\lceil 8/\varepsilon^2\ln(4/\delta)\rceil$
Pauli observables, independent of the size of the system. 
For each sampled $k$, it requires $N_k$ number of copies, 
which is actually independent on $k$ (cf. Eq.~\eqref{eq:bound-on-Nk}),
thus the total number of samples consumed is given by
\begin{align}
    N = \sum_{k=1}^K N_k
      = \sum_{k=1}^K \left\lceil\frac{8}{K\varepsilon^2}\ln\frac{4}{\delta}\right\rceil
      \sim \left\lceil\frac{8}{\varepsilon^2}\ln\frac{4}{\delta}\right\rceil.
\end{align}
\end{document}